\newlength{\additionaltextwidth}
\newcommand{\mytitle}{Prices Matter for the Parameterized Complexity~of Shift~Bribery}
\title{\mytitle}
\author[1]{Robert Bredereck}
\author[1]{Jiehua Chen}
\author[2]{Piotr Faliszewski}
\author[1]{Andr\'e Nichterlein}
\author[1]{Rolf~Niedermeier}
\affil[1]{Institut f\"ur Softwaretechnik und Theoretische Informatik,
  TU Berlin, Berlin, Germany
  \texttt{\{robert.bredereck, jiehua.chen, andre.nichterlein, rolf.niedermeier\}@tu-berlin.de}}
\affil[2]{AGH University of Science and Technology, Krakow, Poland
  \texttt{faliszew@agh.edu.pl}}
\date{}
\newcommand{\probShiftBBorda}{\textsc{Borda Shift Bribery}\xspace}
\newcommand{\probShiftBMaximin}{\textsc{Maximin Shift Bribery}\xspace}
\newcommand{\probShiftBCopelandAbbrv}{\textsc{Copeland Shift Bribery}\xspace}
\newcommand{\probShiftBCopeland}{\probShiftBCopelandAbbrv}
\newcommand{\probClique}{\textsc{Clique}\xspace}
\newcommand{\probSC}{\textsc{Set Cover}\xspace}
\newcommand{\probMCClique}{\textsc{Multicolored Clique}\xspace}
\newcommand{\clique}{\ensuremath{\mathcal{Q}}}
\newcommand{\setFamily}{\ensuremath{\mathcal{S}}}
\newcommand{\setUniverse}{\ensuremath{\mathcal{U}}}
\newcommand{\setcover}{\ensuremath{\mathcal{Q}}}
\newcommand{\Bribery}{\textsc{Bribery}\xspace}
\newcommand{\DollarBribery}{\textsc{\${}Bribery}\xspace}
\newcommand{\ShiftBribery}{\textsc{Shift Bribery}\xspace}
\newcommand{\ShiftBriberyO}{\textsc{Shift Bribery(O)}\xspace}
\newcommand{\decprob}[3]{
  \begin{center}%
    \begin{minipage}{0.9\linewidth}%
      \textsc{#1}\\[0.2ex]
      \textbf{Input:} #2\\[0.2ex]
      \textbf{Question:} #3
    \end{minipage}%
  \end{center} }
\newcommand{\np}{{\mathrm{NP}}}
\newcommand{\classFPT}{{\mathsf{FPT}}}
\newcommand{\classXP}{{\mathsf{XP}}}
\newcommand{\wone}{{\mathsf{W[1]}}}
\newcommand{\wtwo}{{\mathsf{W[2]}}}
\newcommand{\w}{{\mathsf{W}}}
\newcommand{\classFPTAS}{\textsf{FPT-AS}\xspace}
\newcommand{\naturals}{{{\mathbb{N}}}}
\newcommand{\calR}{{{{\mathcal{R}}}}}
\newcommand{\pref}{\succ}
\newcommand{\score}{{{{\mathrm{score}}}}}
\newcommand{\s}[2]{\score_{#1}(#2)}
\newcommand{\shift}{{{{\mathrm{shift}}}}}
\newcommand{\costshift}{{{\Pi\mathrm{\hbox{-}shift}}}}
\newcommand{\costprimeshift}{{{\Pi'\mathrm{\hbox{-}shift}}}}
\newcommand{\sortedshift}{{{\mathrm{s\hbox{-}shift}}}}
\newcommand{\costsortedshift}{{{\Pi\mathrm{\hbox{-}s\hbox{-}shift}}}}
\newcommand{\OPT}{{{{\mathrm{OPT}}}}}
\newcommand{\all}{{{{\mathrm{all}}}}}
\newcommand{\unit}{{{{\mathrm{unit}}}}}
\newcommand{\sort}{{{{\mathrm{sort}}}}}
\newcommand{\convex}{{{{\mathrm{convex}}}}}
\newcommand{\probILPshort}{\textsc{ILP Feasibility}\xspace}
\newcommand{\edges}{\ensuremath{\mathrm{edges}}}
\newcommand{\structure}{\ensuremath{\mathrm{struct}}}
\newcommand{\sets}{\ensuremath{\mathrm{sets}}}
\newcommand{\point}{{{{\mathrm{point\text{-}pair}}}}}
\newcommand{\seq}[1]{\ensuremath{\langle #1\rangle}}
\newcommand{\revseq}[1]{\ensuremath{\overleftarrow{\langle #1\rangle}}}
\newcommand{\halfseqone}{\mathrm{half\text{-}seq1}}
\newcommand{\halfseqtwo}{\mathrm{half\text{-}seq2}}
\newcommand{\myparagraph}[1]{\paragraph{#1.}}
\newtheorem{definition}{Definition}
\newtheorem{example}{Example}
\newtheorem{proposition}{Proposition}
\newtheorem{theorem}{Theorem}
\newtheorem{corollary}{Corollary}
\crefname{subsection}{Section}{Sections}
\crefname{section}{Section}{Sections}
\crefname{table}{Table}{Tables}
\crefname{figure}{Figure}{Figures}
\crefname{algorithm}{Algorithm}{Algorithms}
\crefname{theorem}{Theorem}{Theorems}
\crefname{definition}{Definition}{Definitions}
\crefname{corollary}{Corollary}{Corollaries}
\crefname{proposition}{Proposition}{Propositions}
\crefname{obs}{Observation}{Observations}
\crefname{lemma}{Lemma}{Lemmas}
\crefname{example}{Example}{Examples}
\crefname{reduction}{Reduction}{Reductions}
\crefname{algocf}{Algorithm}{Algorithms}
\DeclareMathOperator{\col}{col}
\begin{document}

\maketitle
\thispagestyle{plain}
\setcounter{footnote}{0}

\begin{abstract}
 In the \ShiftBribery problem, we are given an election (based on
  preference orders), a preferred candidate~$p$, and a budget. The goal is to
  ensure~$p$'s victory by shifting~$p$ higher in some voters' preference orders.  However,
  each such shift request comes at a price (depending on the voter and
  on the extent of the shift) and we must not exceed the given budget. 
  We study the parameterized computational complexity
  of \ShiftBribery with respect to a number of parameters (pertaining
  to the nature of the solution sought and the size of the election)
  and several classes of price functions. 
  When we parameterize \ShiftBribery 
  by the number of affected voters, then for each of our voting rules
  (Borda, Maximin, Copeland) the problem is $\wtwo$-hard.
  If, instead,
  we parameterize by the number of positions by which $p$ is shifted in total,
  then the problem is fixed-parameter tractable for Borda and Maximin,
  and is $\wone$-hard for Copeland. If we parameterize by the budget, 
  then the results depend on the price function class. We
  also show that \ShiftBribery tends to be tractable when parameterized
  by the number of voters, but that the results for the number of
  candidates are more enigmatic.
\end{abstract}
\section{Introduction}

%\ShiftBribery models direct campaigning efforts in elections which may occur in various fields, such as elections in politics, aggregating product rankings on the Web\footnote{www.idealo.co.uk or www.idealo.de}, or determining the winner of music charts\footnote{www.aircheck.net.au}, to name but a few. 
%In such elections, the winner determination procedure seems to be immune to direct interference. For instance, the web company idealo claims that their results cannot be bribed.
%Thus, in order to let a specific candidate win the election,
%it is quite tempting to bribe the different voters to ``promote'' this candidate in the voter's opinions.

Rank aggregation and election-winner determination are of key
importance in various economical and political settings. For instance,
there are % the German website idealo.de offers
product rankings based on comparing their prices, their features, and
different tests~(performed by various institutions such as
foundations, journals, etc.); universities are judged based on
multiple different criteria (e.g., the number of students per faculty
member, availability of particular facilities, the number of
Nobel prize winners employed etc.); sport competitions involve
multiple rankings (for example, a Formula~1 season consists of about
twenty races, each resulting in a ranking of the drivers); and
political elections require members of the society to express
preferences regarding the participating candidates.  In each of these
cases the provided rankings are aggregated into the final one, often
of significant importance (for example, customers decide on their
purchases based on product rankings, students pick the 
best-ranked universities, 
the Formula~1 world champion is the driver who comes out
first in the aggregated ranking, and the most appreciated candidate
becomes the country's president).  A sophisticated way of dealing with
rankings based on multiple different criteria is to compute a
consensus ranking using preference-based rank aggregation
methods.\footnote{For example, the German website idealo.de aggregates
  different product tests by first translating the test results into a
  unified rating system and then taking the ``average'' of all the
  ratings.  Various university rankings are prepared in a similar
  way.  It would be very interesting, however, to utilize the rankings
  themselves, instead of the ratings, for the aggregation.  
  Moreover, Formula~1 racing (and numerous similar competitions) use
  pure ranking information (e.g., Formula~1 uses a very slightly
  modified variant of the Borda election rule).}  In order to affect the
outcome of the rank aggregation one has to influence the component
rankings obtained from different sources~(different product tests,
different judgment criteria, different races, different voters).
Clearly, the cost of influencing may differ from source to source and,
indeed, can sometimes be quite high. Nonetheless, the effect of
improved position in the final ranking can be very beneficial.
% It is emphasized by idealo.de that they cannot be bribed.

In this work, we study the 
% It is, however, of general interest how hard in terms of
computational complexity of affecting the outcome of the rank
aggregation by ``bribing'' specific agents to change their
rankings. Moreover, replacing ``bribery'' with ``product
development,'' or ``university expansion,'' or ``training,'' or
``political campaigning'' we see that our work is relevant to all the
settings mentioned above; the particular entities (companies offering
their products, universities, drivers, politicians) can find out how
much effort they need to invest in order to achieve a better position
in the aggregated ranking (or maybe even become the winner).  A
natural and simple model in this context, using the formalisms of
voting theory, is \ShiftBribery as introduced by Elkind et
al.~\citep{elk-fal-sli:c:swap-bribery}.  We extend their studies in
terms of charting the border of computational worst-case
tractability, herein putting particular emphasis on the voter-specific
``shifting prices'' (how expensive it is to shift a candidate by
$x$~positions ``up''). % within each voter (e.g., product tester).

%In this paper, we study the parameterized complexity of the \ShiftBribery problem that is first introduced by Elkind et al.~\citep{elk-fal-sli:c:swap-bribery}.
%Formally speaking, 
Informally (see Section~\ref{sec:shift-bribery} for a formal definition),
\ShiftBribery is the following decision 
problem:
\decprob{\ShiftBribery}
{An election, that is, a set of
candidates and a set of voters, each with a linear preference order
over the candidate set,
some preferred candidate~$p$, and some budget.
}
{Can we make $p$~win by bribing voters to shift $p$
higher in their preference orders
by ``paying'' no more than the given budget?}

%In \ShiftBribery, we are given an election, that is, a set of
%candidates and a set of voters, each with a linear preference order
%over the candidate set.\footnote{We assume that we have the knowledge
%  of the voters' preference orders (for example, from preelection
%  polls). Further, in our example settings often the full rankings are
%  known. For example, a driver preparing for a new Formula~1 season
%  has full knowledge of the results from the previous one.}
%% We study the parameterized complexity of the \ShiftBribery problem,
%% introduced by Elkind et al.~\citep{elk-fal-sli:c:swap-bribery}
%% to model
%% direct campaigning efforts in elections. %Specifically, 
%% In 
%% \ShiftBribery, we are given an election, that is, a set of candidates
%% and a collection of voters, each with a linear preference order over
%% the candidate set.
%Our goal is to ensure that our preferred candidate $p$ wins. 
We assume that we have the knowledge of the voters' preference orders (for example, from preelection
polls). Further, in our example settings often the full rankings are
known. For example, a driver preparing for a new Formula~1 season
has full knowledge of the results from the previous one.
Our \ShiftBribery problem models the situation where we
approach each of the voters, one-on-one, and try
to convince\footnote{What ``to convince'' means can vary a lot
  depending on the application scenario. On the evil side we have
  bribery, but it can also mean things such as product development,
  hiring more faculty members, training on a particular racing
  circuit, or explaining the details of one's political
  platform. Clearly, different ranking providers may appreciate
  different efforts, which is modeled by the individual price
  functions.}  him or her to rank $p$ higher. Naturally, the effect
(the number of positions by which $p$ is shifted in each voter's
preference order) depends on the voter's character and situation, and
on the amount of effort we invest into convincing the voter.  This
``effort'' could, for example, mean the amount of time spent, the cost
of implementing a particular change, or, in the bribery view of the
problem, the payment to the voter.  Thus, the computational complexity
of the problem depends on the voting rule used in the election, on
various election parameters such as the numbers of candidates and
voters, and on the type of price functions describing the efforts
needed to shift $p$ up by a given number of positions in the voters'
preference orders. Our goal is to unravel the nature of these
dependencies.

\myparagraph{Related Work}
The computational complexity of bribery in elections was first studied by
Faliszewski et al.~\cite{fal-hem-hem:j:bribery}. They considered the
$\Bribery$ problem, where one asks if it is possible to ensure that a
given candidate is an election winner by changing at most a given
number of votes. Its priced variant, \DollarBribery, is the same
except that  each voter has a possibly different price for
which we can change his or her vote. These problems were studied for
various election rules, including
Borda~\cite{fal-hem-hem:j:bribery,bre-fal-hem-sch-sch:c:approximating-elections},
Maximin~\cite{fal-hem-hem:j:multimode}, and 
Copeland~\cite{fal-hem-hem-rot:j:llull} (see \cref{sec:prelim} for
exact definitions of these rules).  
Recently, \citet{GerMacMagXiaYi2015} studied the bribery problem for linear ranking systems.
Notably,
the destructive variant of the \Bribery problem (known under the name
\textsc{Margin of Victory}), where the goal is to ensure that a
despised candidate does not win (and which was studied, e.g., by
Magrino et al.~\cite{mag-riv-she-wag:c:stv-bribery} and
Xia~\cite{xia:margin-of-victory}) has a surprisingly positive
motivation---it can be used to detect fraud in elections.

The above problems, however, do not take into account that the price
of bribing a voter may depend on what vote we wish the ``bribed'' voter to cast.
For example, a voter might be perfectly happy to swap the two least
preferred candidates but not the two most preferred ones.  To model
such situations, Elkind et al.~\cite{elk-fal-sli:c:swap-bribery}
introduced the \textsc{Swap Bribery} problem. They assumed that each
voter has a swap-bribery price function which gives the cost of
swapping each two candidates (provided they are adjacent in the
voter's preference order; one can perform a series of swaps to
transform the voter's preference order in an arbitrary way). They
found that \textsc{Swap Bribery} is both $\np$-hard and hard to
approximate for most well-known voting rules (essentially, because
the \textsc{Possible Winner}
problem~\cite{kon-lan:c:incomplete-prefs,con-xia:j:possible-necessary-winners,bet-dor:j:possible-winner-dichotomy,bau-rot:j:pos-win-dich},
which is $\np$-hard for almost all natural voting rules, is a special
case of \textsc{Swap Bribery} with each swap costing either zero or
infinity). 
Motivated by this, Dorn and
Schlotter~\cite{dor-sch:j:parameterized-swap-bribery} considered the
parameterized complexity of \textsc{Swap Bribery} for the
case of~$k$-Approval (where each voter gives a point to his or her top $k$ candidates).
In addition, Elkind et al.~\cite{elk-fal-sli:c:swap-bribery} also considered
\ShiftBribery, a variant of \textsc{Swap Bribery}
%. Intuitively, \ShiftBribery is a variant of \textsc{Swap  Bribery} 
where all the swaps have to involve the preferred candidate
$p$. %Elkind et al.~\cite{elk-fal-sli:c:swap-bribery} 
They have shown that
\ShiftBribery remains $\np$-hard for Borda, Maximin, and Copeland
but that there is a $2$-approximation algorithm for Borda and a
polynomial-time algorithm for the~$k$-Approval voting rule.
\ShiftBribery was further studied by Elkind and
Faliszewski~\cite{elk-fal:c:shift-bribery}, who viewed it as a
political campaign management problem (and whose view we adopt in this
paper), and who gave a $2$-approximation algorithm for all scoring
rules (generalizing the result for Borda) and other approximation
algorithms for Maximin and Copeland. Then, Schlotter et
al.~\cite{SchFalElk15} have shown that \ShiftBribery
is polynomial-time solvable for the case of Bucklin and Fallback
voting rules. 

Based on the idea of modeling campaign management as bribery
problems, other researchers introduced several variants of bribery
problems. For example, Schlotter et
al.~\cite{SchFalElk15} introduced
\textsc{Support Bribery} and Baumeister et
al.~\cite{bau-fal-lan-rot:c:lazy-voters} introduced \textsc{Extension
  Bribery} (both problems model the setting where voters cast
partial votes that rank some of their top candidates only, and the
briber extends these votes; they differ in that the former assumes
that the voters know their full preference orders but do not report
them completely and the latter assumes that the voters have no
preferences regarding the not-reported candidates).

There is quite a lot of research on various other algorithmic aspects of
elections. Reviewing this literature is beyond the scope of this
paper, but we point the readers to some of the recent
surveys~\cite{bet-bre-che-nie:b:fpt-voting,che-end-lan-mau:c:polsci-intro,fal-hem-hem:j:cacm-survey,fal-pro:j:manipulation,bra-con-end:b:comsoc}
and to recent textbooks~\cite{bra-con-end-lan-pro:b:comsoc-handbook,Rothe15}.

%Elkind et al.~\citep{elk-fal-sli:c:swap-bribery} have shown that
%\ShiftBribery is polynomial-time solvable for Plurality and
%$k$-Approval, but that it is $\np$-complete for the Borda,
%Copeland$^\alpha$, and Maximin voting rules (see \cref{sec:prelim} for
%exact definitions of these rules). In addition, they provided a
%$2$-approxima\-tion algorithm for \ShiftBribery under Borda.  Elkind
%and Faliszewski~\citep{elk-fal:c:shift-bribery} extended this last
%result to all scoring protocols and found approximation algorithms for
%\ShiftBribery under Copeland$^\alpha$ and Maximin. Schlotter et
%al.~\citep{elk-fal-sch:c:fallback-shift} complemented these results by
%showing that \ShiftBribery is easy for the Bucklin and Fallback rules.
%(We provide more detailed account of related literature in \cref{sec:related-work}.)
%

\myparagraph{Our Contributions}

For the Borda, the Maximin, and the Copeland rules, Elkind
et al.~\citep{elk-fal:c:shift-bribery,elk-fal-sli:c:swap-bribery} have
shown that \ShiftBribery has high worst-case complexity, but that one
can deal with it using polynomial-time approximation
algorithms. To better understand where the intractability of
\ShiftBribery really lies in different special cases, we use another
approach of dealing with computationally hard problems, namely
parameterized complexity analysis and, more specifically, the notion
of fixed-parameter tractability and the correspondingly developed exact algorithms.
For instance, almost tied elections are tempting targets for \ShiftBribery. 
An exact algorithm which is efficient for this special case may be more attractive than a general approximation algorithm.
%For instance, 
% The idea here is that an efficient election campaign is most important
% when the elections are nearly tied, in which case approximate
% solutions may not be sufficient. However,
In close-to-tied elections
it might suffice, for example, to contact only a few voters or,
perhaps, to shift the preferred candidate by only a few positions in
total.
Similarly, it is important to solve the problem exactly if one has
only a small budget at one's disposal.
% (an approximate algorithm might overspend by too large a factor).
This is captured by using various problem parameterizations and performing 
a parameterized complexity analysis.

Furthermore, it is natural to expect that the computational complexity
of \ShiftBribery depends on the nature of the voters' price functions
and, indeed, there is some evidence for this fact: For example, if we
assume that shifting~$p$ by each single position in each voter's
preference order has a fixed unit price or, at the very least, if
functions describing the prices are \emph{convex}, then one can verify
that the $2$-approximation algorithm of Elkind and
Faliszewski~\citep{elk-fal:c:shift-bribery} boils down to a greedy
procedure that picks the cheapest available single-position shifts
until it ensures the designated candidate's victory (such an
implementation would be much faster than the expensive
dynamic programming algorithm that they use, but would guarantee a
$2$-approximate solution for convex price functions only).
% speed up the $2$-approximation algorithm of Elkind
% and Faliszewski~\citep{elk-fal:c:shift-bribery} by a polynomial
% factor (in this case, one can replace an expensive dynamic programming
% algorithm with 
%
% \todo[inline]{proof or reference?!}, replacing an expensive dynamic
% programming algorithm by a simple greedy
% one. %\todo[inline]{Do we have a proof for this result?}
On the contrary, the hardness proofs of Elkind et
al.~\citep{elk-fal-sli:c:swap-bribery} all use a very specific form of
price functions which we call \emph{all-or-nothing} prices, where if
one decides to shift the preferred candidate~$p$ in some vote, then the cost of this shift
is independent of how far one shifts~$p$. In effect, one might as well shift $p$ to
the top of the vote.  See \cref{sec:shift-bribery} for the
definitions of the different price functions that we study.
%  These say
% that if we decide to shift $p$ even by one position in some preference order, then
% we might as well shift $p$ to the top of this order because it would
% cost the same.

We combine the above two sets of observations and we study the
parameterized complexity of \ShiftBribery for Borda, Maximin, and
Copeland$^\alpha$, for parameters describing the number of affected
voters, the number of unit shifts, the budget, the number of
candidates, and the number of voters, under price functions that are
either all-or-nothing, sortable, arbitrary, convex, or have a unit
price for each single shift.  The three voting rules that we select
are popular in different kinds of elections apart from political
ones. For instance, Borda is used by the X.Org Foundation to elect its
board of directors, its modified variant is used for the Formula~1
World Championship (and numerous other competitions including, e.g.,
ski-jumping, and song contests). A slightly modified version of
Copeland is used to elect the Board of Trustees for the Wikimedia
Foundation.%, etc.

We summarize our results in \cref{tab:results}, and we %will
discuss them throughout the paper. In short, it turns out that indeed
both the particular parameters used and the nature of the price
functions have strong impact on the computational complexity of
\ShiftBribery.  Three key technical contributions of our work are:
\newcommand{\localheight}{1.7ex}
\newcommand{\budget}{B}
\newcommand{\dsfollow}{$\color{darkgray}^{\spadesuit}$}
\newcommand{\bfntfollow}{$\color{darkgray}^{\Diamond}$}
\newcommand{\bfntffollow}{$\color{darkgray}^{\bigstar}$}
\newcommand{\tfont}[1]{\textsf{#1}}
\newcommand{\citefont}[1]{\small#1}
\begin{table*}[t!]
\centering
\def \pricewidth{8ex}
\def \pricewidthlong{15ex}
\newcommand{\lolocalheight}{1.6ex}
\resizebox{\textwidth}{!}{
  \begin{tabular}{@{\hspace{-.4ex}}l @{\hspace{.5ex}}l  l @{\hspace{1ex}}l @{\hspace{1ex}}l @{\hspace{1.5ex}}l @{\hspace{1ex}}l@{}}
  \toprule
  && \multicolumn{5}{c}{\normalsize \ShiftBribery}\\\cline{3-7}
  \\[-\localheight]
  \multicolumn{2}{l}{\textbf{parameter}}  & unit & convex  & arbitrary  & sortable & all-or-nothing \\
  \quad\; $\calR$ &  & prices & prices  & prices  & prices & prices\\
\midrule

\multicolumn{2}{l}{\textbf{\#shifts ($t$)}} &&&&&\\
 \quad\; \tfont{B}/\tfont{M} &[\citefont{Thm~\ref{thm:Borda-fpt-num_shifts} \& \ref{thm:Maximin-fpt-num_shifts}}]
 & $\classFPT$ & $\classFPT$ & $\classFPT$ & $\classFPT$ & $\classFPT$ \\ %  (Thm.~\ref{sb-thm:Borda-fpt-num_shifts} \& Thm.~\ref{sb-thm:Maximin-fpt-num_shifts})}\\%       

 \quad\; \tfont{C} & [\citefont{Thm~\ref{thm:Copeland-w[1]-c-num_shifts}}]
 & $\wone$-h& $\wone$-h& $\wone$-h& $\wone$-h& $\wone$-h\\ % (Thm.~\ref{sb-cor:Copeland-w[1]-c-num_shifts})} \\ %& $\wone$-com. & $\wone$-com. & $\wone$-com. & $\wone$-com.\\
 \\[-\lolocalheight]

\multicolumn{2}{l}{\textbf{\#affected voters ($n_a$)}}  \\
\quad\; \tfont{B}/\tfont{M}/\tfont{C} & [\citefont{Thm~\ref{thm:w[2]-h-num_voters_affected}}]
& $\wtwo$-h   & $\wtwo$-h  & $\wtwo$-h   & $\wtwo$-h   & $\wtwo$-h  \\ %(Thm.~\ref{sb-thm:w[2]-h-num_voters_affected})}} \\
&& & \\[-\lolocalheight]

\multicolumn{2}{l}{\textbf{budget ($\budget$)}} \\
\quad\; \tfont{B}/\tfont{M} &[\citefont{Cor~\ref{cor:Borda+Maximin-budget}}]
&  $\classFPT$  & $\classFPT$  
&  $\wtwo$-h &  $\wtwo$-h &  $\wtwo$-h \\ 

\quad\; \tfont{C} & [\citefont{Cor~\ref{cor:Copeland-budget}}]&  $\wone$-h &  $\wone$-h % (Cor.~\ref{sb-cor:Copeland-budget})}
& $\wtwo$-h & $\wtwo$-h & $\wtwo$-h \\ \\[-\lolocalheight]

\multicolumn{2}{l}{\textbf{\#voters ($n$)}} \\
\quad\; \tfont{B}/\tfont{M} & [\citefont{Pro~\ref{prop:fpt-num_voters-0/1-prices}}] 
& {\color{darkgray}$\wone$-h}\bfntffollow & {\color{darkgray}$\wone$-h}\bfntffollow& {\color{darkgray}$\wone$-h}\bfntffollow & {\color{darkgray}$\wone$-h}\bfntffollow  &  $\classFPT$\\

\quad\; \tfont{C} & [\citefont{Thm~\ref{thm:W[1]h-num_voters-copeland}  \& Pro~\ref{prop:fpt-num_voters-0/1-prices}}]  & $\wone$-h & $\wone$-h & $\wone$-h & $\wone$-h  &  $\classFPT$\\ 
\\[-\localheight]

\multicolumn{2}{l}{\textbf{\#candidates~($m$)}}    \\
\quad\; \tfont{B}/\tfont{M}/\tfont{C} & [\citefont{Thm~\ref{thm:xp-num_cands-0/1-prices}}] &  {\color{darkgray}$\classFPT$}\dsfollow & $\classXP$ & $\classXP$ % (Thm.~\ref{sb-thm:xp-num_cands-0/1-prices})
& {\color{darkgray}$\classFPT$}\bfntfollow& {\color{darkgray}$\classFPT$}\bfntfollow\\ \\[-\lolocalheight]

\midrule
&&\multicolumn{5}{c}{\normalfont  \ShiftBriberyO}\\\cline{3-7}
  \\[-\localheight]

\multicolumn{2}{l}{\textbf{\#voters ($n$)}} \\
\quad\; \tfont{B}/\tfont{M}/\tfont{C} & [\citefont{Thm~\ref{thm:W[1]h-num_voters-copeland}}] &  \multicolumn{5}{l}{$\classFPTAS$ for all considered price function families} \\ 
\\[-\localheight]

\multicolumn{2}{l}{\textbf{\#candidates~($m$)}}    \\
\quad\; \tfont{B}/\tfont{M}/\tfont{C} & [\citefont{Thm~\ref{thm:fpt-as-num-candidates}}] &  
\multicolumn{5}{l}{$\classFPTAS$ for sortable prices} \\ 

\bottomrule                                 
\end{tabular}
}
\caption[The parameterized complexity of $\calR$ \ShiftBribery and  $\calR$ \ShiftBriberyO for
  Borda (\tfont{B}), Maximin~(\tfont{M}), and Copeland{$^\alpha$}]{\label{tab:results}
  The parameterized complexity of $\calR$ \ShiftBribery and  $\calR$ \ShiftBriberyO for
  Borda (\tfont{B}), Maximin~(\tfont{M}), and Copeland{$^\alpha$}~(\tfont{C}) (for each
  rational number $\alpha$, $0\le \alpha \le 1$).  
  Note that \ShiftBriberyO is the optimization variant of \ShiftBribery, which seeks to minimize the budget spent.
  ``$\wone$-h'' (resp. ``$\wtwo$-h'') stands for $\wone$-hard (resp. $\wtwo$-hard).
  Definitions for $\classFPT$, $\wone$, $\wtwo$, $\classXP$
  are provided in \cref{subsec:parameterized-complexity}. 
  The $\wone$-hardness results marked with~``\bfntffollow'' follow from the work of \citet{BreFalNieTal2016b}. 
  The $\classFPT$ result marked with~``\dsfollow'' follows from the work of \citet{dor-sch:j:parameterized-swap-bribery}. 
  The $\classFPT$ results marked with~``\bfntfollow'' follow from the work of \citet{BreFalNieTal2016a}.
  % one of the co-authors 
  % of our technical report~\cite{BreCheFalNicNiearxiv2015}. %of the conference version~\cite{BreCheFalNicNiearxiv2015} of this work.
  }
\end{table*}
\begin{enumerate}
\item novel \emph{$\classFPT$ approximation schemes} exploiting the parameters 
``number of voters'' and ``number of candidates'' 
(such schemes are rare in the literature and of
significant practical interest),
\item a surprising W[1]-hardness result for the parameter ``number of
  voters'' when using Copeland$^\alpha$ voting (contrasting
  fixed-parameter tractability results when using Borda and Maximin),
  and
% \item a \emph{partial kernelization} (polynomial-time data reduction) result 
% for the parameter ``budget'' (maximum allowed total cost of shifting).
\item a \emph{partial kernelization} %~\cite{BGKN11}
(polynomial-time data reduction) result
for the parameter ``number of unit shifts''.
\end{enumerate}

\myparagraph{Article Outline}
The paper is organized as follows. In \cref{sec:prelim} we present
preliminary notions regarding elections and parameterized complexity
theory. In \cref{sec:shift-bribery} we formally define the
\ShiftBribery problem, together with its parameterizations and
definitions of price functions.  Our results are in \cref{sec:pbscm}
(parameterization by solution cost) and~\cref{sec:pbesm}
(parameterization by solution election size).  We discuss an outlook of our
research in \cref{sec:conclusions}.  In the appendix we provide those
proofs that were omitted from the main part for the sake of
readability.
%We discuss related work in
%\cref{sec:related-work} and we conclude in \cref{sec:conclusions}.

%% PF: commented out for the full version
% We omit most of the proofs due to space constraints; they can be found ...
%Due to space constraints, most proofs are omitted.
%deferred to an appendix which can be found in the attached supplemental material.

\section{Preliminaries}\label{sec:prelim}

Below we provide a brief overview of the notions regarding elections
and parameterized complexity theory. 
%For each integer $n$, we set $[n]
%= \{1, \ldots, n\}$.

%
% PF: We only need this in one proof? ---delete and use
%     explicit notation in that proof?
%%% For each vector $\vv{x} = (x_1, \ldots, x_n)$,
%%% each $i$, $1 \leq i \leq n$, and each $y$, by $(x_{-i},y)$ we mean
%%% vector $\vv{x}$ whose $i$'th element was replaced by $y$.\smallskip

\subsection{Elections and Voting Rules}
We use the standard, ordinal model of elections where an election $E =
(C,V)$ consists of a set $C = \{c_1, \ldots, c_m\}$ of candidates and
a set $V = \{v_1, \ldots, v_n\}$ of voters. Each voter~$v$
provides a preference order over $C$, i.\,e., a linear order
ranking the candidates from the most preferred one (position~$1$) to the least preferred one (position~$m$).
For example, if $C = \{c_1, c_2, c_3\}$, then writing $v
\colon c_1 \pref c_2 \pref c_3$ indicates that voter~$v$ likes~$c_1$
best, then~$c_2$, and then~$c_3$. 
For two candidates $c_i, c_j$ and
voter~$v$ we write $v \colon c_i \pref c_j$ to indicate that~$v$
prefers $c_i$ to~$c_j$. Further, we % sometimes
write $\pref_v$ to
denote voter $v$'s preference order. % We often use multiset notation
% to speak of a list~$V$. For example, we write $v \in V$ to
% indicate that some voter $v$ is included in the list~$V$ and
% write $\left|V\right|$ to denote the number of voters in $V$.
Given an election~$E$, for each two candidates $c$ and~$d$, we define
$N_E(c,d)$ to be the number of voters in~$E$ who prefer $c$ over~$d$.

% PF: We only needed pos to define scoring protocols and in one
%     other place; we can skip scoring protocols and rephrase that
%     other place to save space
%
% Let $E = (C,V)$ be some election. For each voter $v$ and candidate
% $c$, we define $\pos(c,v)$ to be the position of $c$ in $v$'s
% preference order (e.g., if $c$ is $v$'s most preferred candidate then
% $\pos(c,v) = 1$ and if $c$ is $v$'s least preferred candidate then
% $\pos(c,v) = \left|C\right|$). 
% For each two candidates $c$ and $d$, we define
% $N_E(c,d)$ to be the number of voters in $E$ who prefer $c$ over $d$.

A voting rule $\calR$ is a function that given an election~$E$ outputs
a non-empty set $\emptyset \neq \calR(E) \subseteq C$ of the tied
winners of the election.  Note that we use the nonunique-winner model,
where each of the tied winners is considered a winner and we disregard
tie-breaking.  This is a standard assumption in papers on bribery in
elections~\cite{fal-hem-hem:j:bribery,elk-fal-sli:c:swap-bribery,elk-fal:c:shift-bribery,SchFalElk15,dor-sch:j:parameterized-swap-bribery,bau-fal-lan-rot:c:lazy-voters}. However,
we mention that various types of tie-breaking can sometimes affect the
complexity of election-related problems quite
significantly~\cite{obr-elk:c:random-ties-matter,obr-elk-haz:c:ties-matter,fal-hem-sch:c:copeland-ties-matter,con-rog-xia:c:mle}.
Furthermore, we implicitly assume that all voting rules that we
consider are anonymous, i.e., their outcomes depend only on the
numbers of voters with particular preference orders and not on
the identities of the particular voters that cast them.
We consider
% (positional) scoring rules (with a particular focus on Borda), 
Borda, Maximin,
and the Copeland$^\alpha$ family of rules. 
%Each of 
These rules assign points to every candidate and pick as winners those
who get most; we write $\s{E}{c}$ to denote the number of points
candidate~$c\in C$ receives in election $E$---the particular voting rule
used to compute the score will always be clear from the context.
%The candidates with the highest score
%are the winners.

Let $E\coloneqq(C, V)$ be an election.
Under \emph{Borda}, each candidate~$c\in C$ receives from each voter~$v \in V$ 
as
many points as there are candidates that~$v$~ranks lower than~$c$.  
Formally, the Borda score of candidate~$c$ is $\s{E}{c} \coloneqq \sum_{d \in  C\setminus\{c\}}N_E(c,d)$. 
Similarly, the \emph{Maximin} score of a candidate~$c$ is the number of voters who prefer $c$ to his or her ``strongest competitor.'' Formally, for Maximin we have $\s{E}{c} \coloneqq \min_{d \in C\setminus\{c\}}N_E(c,d)$.
% 
% For an election with $m=\left| C\right|$ candidates, 
% a scoring rule is defined through
% scoring vector $(\alpha_1, \ldots, \alpha_m)$ of
% nonincreasing integers. Each candidate $c$ receives $\sum_{v \in
%   V}\alpha_{\pos(v,c)}$ points and the candidate(s) with most points
% win. Typical (families of) scoring rules include \emph{Plurality} (with
% scoring vectors of the form $(1,0,\ldots,0)$), \emph{$k$-Approval} (with
% scoring vectors such that $\alpha_i = 1$ for $i \leq k$ and $\alpha_i
% = 0$ for $i > k$), and \emph{Borda} (with scoring vectors of the form
% $(m-1, m-2, \ldots, 0)$).
% 
% 
% Under \emph{Maximin}, the score of candidate $c$ in election $E$ is
% defined as $\min_{d \in C\setminus\{c\}}N_E(c,d)$. Intuitively, the Maximin score
% of a candidate $c$ is the number of voters who prefer $c$ to his or
% her ``strongest competitor.''
% 

\subsection{Conventions for Describing Preference Orders}\label{subsec:conventions}
Under \emph{Copeland$^\alpha$} with $\alpha$ being a rational number,
$\alpha \in [0,1]$,
we organize a \emph{head-to-head} contest
between each two candidates $c$ and~$d$; if $c$ wins (i.e., if
$N_E(c,d) > N_E(d,c)$) then $c$ receives one point, if $d$ wins then
$d$ receives one point, and if they tie (i.e., $N_E(c,d) = N_E(d,c)$)
then they both receive $\alpha$ points.  
Formally, %More precisely, for each
%rational number~$\alpha \in [0, 1]$,
%  for an election $E = (C,V)$, 
Copeland$^\alpha$ assigns to each candidate $c \in C$ score:
\[\left|\{ d \in C\setminus\{c\} \colon N_E(c,d)\text{\textgreater}N_E(d,c)\}\right| +
\alpha \left|\{ d \in C\setminus\{c\} \colon N_E(c,d)\text{=}N_E(d,c)\}\right|.\]
Typical values of $\alpha$ are $0$, $1/2$, and
$1$, but there are cases where other values are used.  All our results
that regard \ShiftBribery for Copeland$^\alpha$ hold for each rational value of~$\alpha$.
For brevity's sake, we write ``Copeland'' instead of
``Copeland$^\alpha$ for arbitrary rational number~$\alpha$''
throughout this paper.

Given a subset~$A$ of candidates, unless specified otherwise, we write
$\seq{A}$ to denote an arbitrary but fixed preference order over $A$.
We write $\revseq{A}$ to denote the reverse of $\seq{A}$.  If $A$ is
empty, then $\seq{A}$ means an empty preference order.

Let $A$ and $B$ be two disjoint subsets of candidates which do not
contain our preferred candidate~$p$ (i.e., $p \notin A\cup B$).
Let $d,g,z\notin A\cup B$ be three distinct candidates. %that also do
%not belong to $A$ or $B$. 
We write
$\point(p, A, d, g, B, z)$ to denote the following two preference
orders:
\begin{align*}
  p \pref \seq{A} \pref d \pref g \pref \seq{B} \pref z \text{\quad and \quad}
  z \pref \revseq{B} \pref d \pref g \pref \revseq{A} \pref p.
\end{align*}
With these two preference orders, under the Borda rule we achieve the
effect that candidate~$d$ gains one point more than every other
candidate in $A\cup B \cup \{p,z\}$ and two points more than candidate
$g$.  Observe that $p$ is ranked first in the first preference order
and in order to let $d$'s score decrease (by shifting $p$ forward),
one has to shift $p$ by $|A|+1$ positions in the second vote.  This is
crucial for some of our reductions.

Consider a set $A = \{a_1, \ldots, a_{2x+1}\}$ of candidates. For
each $i$, $1 \leq i \leq 2x$, set $A_i \coloneqq \{a_{(i+1)\mod (2x+1)}, \ldots,
{a_{(i+x)\mod (2x+1)}}\}$.  For each candidate~$a_i$, we write
$\halfseqone(A,a_i)$ and $\halfseqtwo(A,a_i)$ to denote
the following two preference orders, respectively:
%%
%% PF: We should spell out these votes explicitly to avoid confusion
\begin{align*}
  a_i \pref \seq{A_i} \pref \seq{A\setminus A_i} \text{\quad and \quad}
  \revseq{A\setminus A_i}\pref a_i \pref \revseq{A_i}\text{.}
\end{align*}
%where $A_i=\{a_{(i+1)\mod (2x+1)}, \ldots, {a_{(i+x)\mod (2x+1)}}\}$
%and $\seq{A_i} = a_{(i+1)\mod (2x+1)} \pref \ldots \pref a_{(i+x)\mod
%  (2x+1)}$.  
%%
Under these two preference orders, we achieve the effect that
candidate~$a_i$ wins head-to-head contests against exactly half of the
other candidates (namely, those in $A_i$) and ties with all the
remaining ones. All the other pairs of candidates tie in their
head-to-head contests.

\subsection{Parameterized Complexity}\label{subsec:parameterized-complexity}
We point the reader to standard text books for general information on
parameterized complexity and algorithms~\citep{cyg-fom-kow-lok-mar-pil-pil-sau:b:fpt,DF13,flu-gro:b:parameterized-complexity,nie:b:invitation-fpt}.
There are also general accounts on the applications of parameterized complexity analysis to computational social choice~\cite{bet-bre-che-nie:b:fpt-voting,BCFGNW14}.

To speak of parameterized complexity of a given problem, we declare a
part of the input as a parameter (here we consider numerical
parameters only, e.g., for \ShiftBribery it could be the number of
candidates or the budget; see the next section). We say that a problem
parameterized by $k$ is \emph{fixed-parameter tractable}, that is, is
in $\classFPT$, if there exists an algorithm that given input $I$ with
parameter~$k$ gives a correct solution in time $f(k) \cdot
|I|^{O(1)}$, where $f(k)$ is an arbitrary computable function of $k$,
and $|I|$ is the length of the encoding of $I$.  To describe the
running times of our algorithms, we often use the $O^*(\cdot)$
notation. It is a variant of the standard $O(\cdot)$ notation where
polynomial factors are omitted.  For example, if the algorithm's
running time is $f(k)\cdot |I|^{O(1)}$, where $f$ is superpolynomial,
then we would say that it is $O^*(f(k))$.

Parameterized complexity theory also provides a hierarchy of hardness
classes, starting with~$\wone$, such that 
$\classFPT \subseteq \wone \subseteq \wtwo \subseteq
\ldots\subseteq \classXP$. 
%Two complete problems for classes $\wone$ and $\wtwo$ are given in the appendix.
For our purposes, it suffices to define the $\w$-classes through
their complete problems and an appropriate reducibility notion.

\begin{definition}
  Let $A$ and $B$ be two decision problems. We say that $A$ \emph{reduces} 
  to~$B$ (in a parameterized way) if there are two functions $g_1$ and $g_2$
  such that, given instance $I_A$ of~$A$ with parameter $k_A$, it holds
  that:
  \begin{enumerate}[i)]
  \item $I_A$ is a yes-instance of $A$ if and only if $g_1(I_A)$ is
  a yes-instance of $B$ with parameter~$g_2(k_A)$, and 
  \item $g_1$ is computable in $\classFPT$ time (for parameter~$k_A$).
  \end{enumerate}
\end{definition}

$\wone$ is the class of problems that reduce (in a parameterized way) to \probClique, and
$\wtwo$ is the class of problems that reduce to \probSC---in both
problems we take solution size~$k$ as the parameter. In effect, these
two problems are complete for these two classes. 
%
% \probClique is complete for $\wone$ while \probSC
% is complete for $\wtwo$.

\decprob{\probClique} {An undirected graph $G = (V(G), E(G))$ and a non-negative integer $k\ge 0$.}
{Does $G$ contain a clique of size~$k$, that is, a subset~$\clique\subseteq V(G)$ of $k$ vertices such that for each two distinct vertices~$u,w\in \clique$ one has~$\{u,w\}\in E(G)$?}

Note that even
though we use letters $V$ and~$E$ to denote, respectively, voter sets
and elections, when we speak of a graph~$G$, we write $V(G)$ to denote
its set of vertices and $E(G)$ to denote its set of edges.

\decprob{\probSC} {A family $\setFamily = (S_1, \ldots, S_m)$ of sets over a
  universe~$\setUniverse =\{u_1, \ldots, u_n\}$ of elements and a non-negative integer $k\ge 0$.}  
{Is there a size-$k$ \emph{set cover}, that is, a collection~$\setcover$ of $k$~sets in $\setFamily$ whose union is $\setUniverse$?}

For problems where an $\classFPT$ algorithm and a hardness proof are
elusive, one can at least try to show an $\classXP$ algorithm 
whose running time is polynomial if one treats the parameter as a constant.
%that runs in polynomial time, provided that one treats the parameter
%as a constant. 
As opposed to $\classFPT$, the degree of the polynomial
describing the ``$\classXP$'' running time can depend on the parameter.

% PF: We define an FPT-AS directly for Shift-Bribery, for the case
%     of the optimization variant, so I think we should not provide
%     the definition here (especially since the above part of the 
%     section talks about decision problems)
%
% Finally, an \emph{$\classFPT$-approximation scheme}~($\classFPT$-AS)~\cite{Mar08}
% with parameterization~$k$ is an algorithm whose inputs are an instance~$I$
% and an $\epsilon >0$, and it produces a $(1+\epsilon)$-approximate solution in
% $f(\epsilon,k)\cdot |I|^{O(1)}$ time for some computable function~$f$.

Below we provide a brief overview of the notions regarding elections
and parameterized complexity theory. 
%For each integer $n$, we set $[n]
%= \{1, \ldots, n\}$.

%
% PF: We only need this in one proof? ---delete and use
%     explicit notation in that proof?
%%% For each vector $\vv{x} = (x_1, \ldots, x_n)$,
%%% each $i$, $1 \leq i \leq n$, and each $y$, by $(x_{-i},y)$ we mean
%%% vector $\vv{x}$ whose $i$'th element was replaced by $y$.\smallskip

\subsection{Elections and Voting Rules}
We use the standard, ordinal model of elections where an election $E =
(C,V)$ consists of a set $C = \{c_1, \ldots, c_m\}$ of candidates and
a set $V = \{v_1, \ldots, v_n\}$ of voters. Each voter~$v$
provides a preference order over $C$, i.\,e., a linear order
ranking the candidates from the most preferred one (position~$1$) to the least preferred one (position~$m$).
For example, if $C = \{c_1, c_2, c_3\}$, then writing $v
\colon c_1 \pref c_2 \pref c_3$ indicates that voter~$v$ likes~$c_1$
best, then~$c_2$, and then~$c_3$. 
For two candidates $c_i, c_j$ and
voter~$v$ we write $v \colon c_i \pref c_j$ to indicate that $v$
prefers $c_i$ to~$c_j$. Further, we % sometimes
write $\pref_v$ to
denote voter $v$'s preference order. % We often use multiset notation
% to speak of a list~$V$. For example, we write $v \in V$ to
% indicate that some voter $v$ is included in the list~$V$ and
% write $\left|V\right|$ to denote the number of voters in $V$.
Given an election~$E$, for each two candidates $c$ and~$d$, we define
$N_E(c,d)$ to be the number of voters in~$E$ who prefer $c$ over~$d$.

% PF: We only needed pos to define scoring protocols and in one
%     other place; we can skip scoring protocols and rephrase that
%     other place to save space
%
% Let $E = (C,V)$ be some election. For each voter $v$ and candidate
% $c$, we define $\pos(c,v)$ to be the position of $c$ in $v$'s
% preference order (e.g., if $c$ is $v$'s most preferred candidate then
% $\pos(c,v) = 1$ and if $c$ is $v$'s least preferred candidate then
% $\pos(c,v) = \left|C\right|$). 
% For each two candidates $c$ and $d$, we define
% $N_E(c,d)$ to be the number of voters in $E$ who prefer $c$ over $d$.

A voting rule $\calR$ is a function that given an election~$E$ outputs
a non-empty set $\emptyset \neq \calR(E) \subseteq C$ of the tied
winners of the election.  Note that we use the nonunique-winner model,
where each of the tied winners is considered a winner and we disregard
tie-breaking.  This is a standard assumption in papers on bribery in
elections~\cite{fal-hem-hem:j:bribery,elk-fal-sli:c:swap-bribery,elk-fal:c:shift-bribery,elk-fal-sch:c:fallback-shift,dor-sch:j:parameterized-swap-bribery,bau-fal-lan-rot:c:lazy-voters}. However,
we mention that various types of tie-breaking can sometimes affect the
complexity of election-related problems quite
significantly~\cite{obr-elk:c:random-ties-matter,obr-elk-haz:c:ties-matter,fal-hem-sch:c:copeland-ties-matter,con-rog-xia:c:mle}.
Furthermore, we implicitly assume that all voting rules that we
consider are anonymous, i.e., their outcomes depend only on the
numbers of voters with particular preference orders and not on
the identities of the particular voters that cast them.
We consider
% (positional) scoring rules (with a particular focus on Borda), 
Borda, Maximin,
and the Copeland$^\alpha$ family of rules. 
%Each of 
These rules assign points to every candidate and pick as winners those
who get most; we write $\s{E}{c}$ to denote the number of points
candidate~$c\in C$ receives in election $E$---the particular voting rule
used to compute the score will always be clear from the context.
%The candidates with the highest score
%are the winners.

Let $E:=(C, V)$ be an election.
Under \emph{Borda}, each candidate~$c\in C$ receives from each voter~$v \in V$ 
as
many points as there are candidates that $v$~ranks lower than~$c$.  
Formally, the Borda score of candidate~$c$ is $\s{E}{c} := \sum_{d \in  C\setminus\{c\}}N_E(c,d)$. 
Similarly, the \emph{Maximin} score of a candidate~$c$ is the number of voters who prefer $c$ to his or her ``strongest competitor.'' Formally, for Maximin we have $\s{E}{c} := \min_{d \in C\setminus\{c\}}N_E(c,d)$.
% 
% For an election with $m=\left| C\right|$ candidates, 
% a scoring rule is defined through
% scoring vector $(\alpha_1, \ldots, \alpha_m)$ of
% nonincreasing integers. Each candidate $c$ receives $\sum_{v \in
%   V}\alpha_{\pos(v,c)}$ points and the candidate(s) with most points
% win. Typical (families of) scoring rules include \emph{Plurality} (with
% scoring vectors of the form $(1,0,\ldots,0)$), \emph{$k$-Approval} (with
% scoring vectors such that $\alpha_i = 1$ for $i \leq k$ and $\alpha_i
% = 0$ for $i > k$), and \emph{Borda} (with scoring vectors of the form
% $(m-1, m-2, \ldots, 0)$).
% 
% 
% Under \emph{Maximin}, the score of candidate $c$ in election $E$ is
% defined as $\min_{d \in C\setminus\{c\}}N_E(c,d)$. Intuitively, the Maximin score
% of a candidate $c$ is the number of voters who prefer $c$ to his or
% her ``strongest competitor.''
% 

Under \emph{Copeland$^\alpha$} with $\alpha$ being a rational number,
$\alpha \in [0,1]$,
we organize a \emph{head-to-head} contest
between each two candidates $c$ and~$d$; if $c$ wins (i.e., if
$N_E(c,d) > N_E(d,c)$) then $c$ receives one point, if $d$ wins then
$d$ receives one point, and if they tie (i.e., $N_E(c,d) = N_E(d,c)$)
then they both receive $\alpha$ points.  
Formally, %More precisely, for each
%rational number~$\alpha \in [0, 1]$,
%  for an election $E = (C,V)$, 
Copeland$^\alpha$ assigns to each candidate $c \in C$ score:
\[\left|\{ d \in C\setminus\{c\} \colon N_E(c,d)\text{\textgreater}N_E(d,c)\}\right| +
\alpha \left|\{ d \in C\setminus\{c\} \colon N_E(c,d)\text{=}N_E(d,c)\}\right|.\]
Typical values of $\alpha$ are $0$, $1/2$, and
$1$, but there are cases where other values are used.  All our results
that regard \ShiftBribery for Copeland$^\alpha$ hold for each rational value of~$\alpha$.
For brevity's sake, we write ``Copeland'' instead of
``Copeland$^\alpha$ for arbitrary rational number~$\alpha$''
throughout this paper.

\section{Shift Bribery}\label{sec:shift-bribery}

Given a voting rule $\calR$, in $\calR$ \ShiftBribery the goal is to
ensure that a certain candidate~$p$ (the \emph{preferred} candidate)
is an $\calR$-winner of the election. To achieve this effect, we can
shift $p$ forward in some of the voters' preference orders.  Each
shift may have a different price, depending on the voter and on the
length of the shift.  The problem was defined by Elkind et
al.~\citep{elk-fal-sli:c:swap-bribery}. Here we follow the notation of
Elkind and Faliszewski~\citep{elk-fal:c:shift-bribery}; see the
introduction for other related work.

\subsection{The Problem}
Let $E = (C,V)$ be some election, where $C = \{p,c_1, \ldots, c_m\}$
and $V = \{ v_1, \ldots, v_n\}$; $p$ is the preferred candidate.  A
\ShiftBribery price function~$\pi_{i}$ for voter $v_i \in V$, $\pi_i
\colon \naturals \rightarrow \naturals$, gives the price of shifting
$p$ forward in $v_i$'s preference order a given number of
positions. We require that $\pi_i(0) = 0$ and that $\pi_i(\ell) \leq
\pi_i(\ell+1)$ for each $\ell \in \naturals$. We also assume that if
$p$ is ranked on a position~$r$ in the preference order, then
$\pi_i(\ell) = \pi_i(\ell-1)$ whenever $\ell \geq r$.
% PF: removed the use of \pos
%
%, and that $\pi_i(k) =
%\pi_i(k-1)$ whenever $k \geq \pos(v_i,p)$.  
In other words, it costs
nothing to keep a voter's preference order as is, it never costs less
to shift $p$ farther, and we cannot shift $p$ beyond the top position
in the preference order.
\ifShowShiftBriberyExample %%% begin %%%
\begin{example}
Let $v_i$ be a voter with preference order $v_i \colon c_1 \pref c_2
\pref p \pref c_3$ and let $\pi_i$ be $v_i$'s price
function.
Then, by paying $\pi_i(1)$ we can change $v_i$'s preference
order to $c_1 \pref p \pref c_2 \pref c_3$, and by paying $\pi_i(2)$
we can change it to $p \pref c_1 \pref c_2 \pref c_3$.
\end{example}
\else
For instance, let $v_i$ be a voter with preference order $v_i \colon c_1 \pref
c_2 \pref p \pref c_3$ and let $\pi_i$ be $v_i$'s \ShiftBribery price function.
Then,  by paying $\pi_i(1)$ we can change $v_i$'s preference order to $c_1 \pref p \pref c_2 \pref c_3$, and
by paying $\pi_i(2)$ we can change it to $p \pref c_1 \pref c_2 \pref c_3$.
\fi %%% end %%%

It is clear that we need at most $|C|-1$ values to completely describe
each \ShiftBribery price function. We write $\Pi = (\pi_1, \ldots,
\pi_n)$ to denote the list of \ShiftBribery price functions for
the voters in $V$.

A \emph{shift action}~$\vv{s}$ is a vector $(s_1, \ldots, s_n)$ of
natural numbers, describing how far $p$ should be shifted in each of
the $n$~voters' preference orders.  We define $\shift(E,\vv{s})$ to be
the election $E' = (C,V')$ identical to $E$ except that $p$ has been
shifted forward in each voter~$v_i$'s preference order by~$s_i$
positions.  If that would mean moving $p$ beyond the top position in
some preference order, we shift $p$ up to the top position only.  We
write $\Pi(\vv{s}) = \sum_{i=1}^{n}\pi_i(s_i)$ to denote the price of
a given shift action.  A shift action $s$ is \emph{successful} if $p$
is a winner in the election $\shift(E,\vv{s})$. The term \emph{unit
  shift} refers to shifting~$p$ by one position in one preference
order.

Given the above notation, the decision variant of
$\calR$ \ShiftBribery is defined as follows.

\decprob{$\calR$ \ShiftBribery} 
{An election $E = (C,V)$ with voter set~$V = \{v_1, \ldots, v_n\}$, a list
  $\Pi = (\pi_1, \ldots, \pi_n)$ of \ShiftBribery price functions for
  $V$, a candidate $p \in C$, and an integer $B \in \naturals$.}
{Is there a shift action $\vv{s} = (s_1, \ldots, s_n)$ such that
  $\Pi(\vv{s}) \leq B$ and $p$ is an $\calR$-winner in $\shift(E,\vv{s})$?}

The optimization variant is defined analogously, but we do not include
the budget $B$ in the input and we ask for a shift action~$\vv{s}$
that ensures $p$'s victory while minimizing $\Pi(\vv{s})$.  For an
input instance~$I$ of the optimization variant of $\calR$~\ShiftBribery, we
write $\OPT(I)$ to denote the cost of the cheapest successful shift
action for $I$ (and we omit $I$ if it is clear from the context).  We
sometimes also write ``$\calR$ \ShiftBribery'' when we refer to the
optimization variant (and this is clear from the context).

% PF: For the AAMAS submission we do not need this definition
% 
% Let $\beta$ be a number, $\beta \geq 1$.  A $\beta$-approximation
% algorithm for $\calR$ \ShiftBribery is an algorithm that given an
% instance $I = (C,V,\Pi,p)$ returns a successful shift action $\vv{s}$
% such that $\Pi(\vv{s}) \leq \beta\OPT$.  
%

Typically, approximation algorithms are asked to run in polynomial time.
It is, however, not always possible to approximate the optimal solution to
an arbitrary given factor
in polynomial time.
Relaxing the polynomial running time to $\classFPT$ time, 
we can obtain the following notion (also see \citet{Mar2008} for more information).

%%%%%%%%%%%%%%% 
\begin{definition}[$\classFPT$-approximation scheme ($\classFPTAS$)]
\label{def:fpt-as}
An $\classFPT$-approximation scheme~($\classFPTAS$) %~\cite{?}
with parameter~$k$ for $\calR$ \ShiftBribery is an algorithm that,
given an instance~$I = (C,V,\Pi,p)$ and a number $\varepsilon > 0$,
returns a successful shift action $\vv{s}$ such that $\Pi(\vv{s}) \leq
(1+\varepsilon)\cdot \OPT(I)$.  This algorithm must run in $f(k,\varepsilon)\cdot |I|^{O(1)}$~time, 
where $f$ is a computable function depending on $k$ and $\varepsilon$.
\end{definition}
%%%%%%%%%%%%%%% 

When describing the running time of an approximation scheme, 
we treat $\varepsilon$ as a fixed constant.
Thus, a polynomial-time approximation scheme may, for example,
include exponential dependence on $1/\varepsilon$.

%  (when describing the running time of an
% approximation scheme, we treat $\varepsilon$ as a fixed constant so,
% for example, a polynomial-time approximation scheme may, for example,
% include exponential dependence on $1/\varepsilon$).

\subsection{Parameters for Shift Bribery}%\quad
So far, \ShiftBribery has not been studied from the parameterized
complexity theory
point of view. Dorn and
Schlotter~\citep{dor-sch:j:parameterized-swap-bribery} and Schlotter
et al.~\citep{SchFalElk15} have, however, provided
parameterized complexity results for \textsc{Swap Bribery} and for
\textsc{Support Bribery}. 
%two problems that are very close in spirit
%(see \cref{sec:related-work}).
% for several voting rules. 
%We base our parameterization of \ShiftBribery on their ideas, but we also study some other interesting parameters.

We consider two families of parameters, those referring to the properties of the successful shift action that we
seek and those describing the input election.  Regarding the first
group, we study: 
\begin{enumerate}
\item the total number~$t$ of unit shifts in the solution
(\#shifts), 
\item the total number~$n_a$ of voters whose preference orders are changed
(\#voters-affected), and 
\item the budget~$B$.
\end{enumerate} 
Regarding the second
group, we consider:
\begin{enumerate}
\item the number~$m$ of candidates
(\#candidates) and 
\item the number~$n$ of voters (\#voters).
\end{enumerate}
As we can test every possible value of a parameter in increasing order, 
we assume that the values of these parameters are passed explicitly as
part of the input.  Note that pairs of parameters such as $m$ and~$n$
clearly are incomparable, whereas $n$, as well as \#shifts, clearly
upper-bound \#voters-affected, making \#voters-affected
``stronger''~\cite{KN12} than the two other parameters. Similarly, $B$
upper-bounds both \#shifts and \#voters-affected, provided that each
shift has price at least one.

\subsection{Price Functions} 
% PF: rephrased because the discussion is in the introduction now
%We have stated in the introduction that the price functions used in
One of the conclusions from our work is that the price functions used
in \ShiftBribery instances may strongly affect the complexity of the
problem.  In this section we present the families of price functions
that we focus on.

% Up to now, researchers studying \ShiftBribery did not put emphasis on the nature of the price functions
% used in their proofs. However, we believe that depending on the price
% functions, the complexity of \ShiftBribery can change dramatically and
% our results support this view. We focus on the following families of
% price functions. 

\paragraph{All-or-nothing prices} 
A \ShiftBribery price function~$\pi$ is \emph{all-or-nothing} if there is a
value $c$ such that $\pi(0) = 0$ and for each ${\ell} > 0$,
$\pi({\ell}) = c$ (this value $c$ can be different for each voter).
Interestingly, the $\np$-hardness proofs of Elkind et
al.~\citep{elk-fal-sli:c:swap-bribery} use exactly
% for \textsc{$\mathcal{R}$ Shift Bribery} (given by Elkind et
% al.~\cite{elk-fal-sli:c:swap-bribery}), in essence, used 
this family of price functions (without referring to them directly,
though).  All-or-nothing price functions are a special case of what we
would intuitively call \emph{concave} functions.

\paragraph{Convex prices}
We are also interested in \emph{convex} price functions;
$\pi$ is convex if for each ${\ell}$, $\pi({\ell}+1) - \pi({\ell})
\leq \pi({\ell}+2) - \pi({\ell}+1)$ (provided that it is possible to
shift the preferred candidate by up to ${\ell}+2$ positions in the
given preference order).  

\paragraph{Unit prices}
To capture the setting where each unit shift has
the same cost for each voter, we define the price functions by setting $\pi({\ell}) \coloneqq
{\ell}$ for each~${\ell}$ such that $p$~can be shifted by
${\ell}$~positions. 
Unit prices are an extreme example of convex price functions.
%(the name ``unit prices'' refers to the fact that each unit shift has
%the same cost for each voter).% (and %, as it turns out,
% one of the easiest families of
% \ShiftBribery price functions).

\paragraph{Sortable prices} 
Finally, we consider sortable price functions.  A list $\Pi=(\pi_1,
\pi_2,\ldots)$ of price functions is called \emph{sortable} if for
each two voters $v_i, v_j \in V$ with the same preference order (that
is, for each two voters $v_i$ and $v_j$ such that $\mathord{\prec_i} =
\mathord{\prec_j}$) it holds that $\forall{1 \le \ell \le m-2} \colon
\pi_i(\ell)>\pi_j(\ell) \implies \pi_i(\ell+1)>\pi_j(\ell+1)$.
Informally, this means that one can sort each set~$V'$ of voters
having the same preference order so that the prices of shifting the
preferred candidate by each $\ell$ positions are nondecreasing along
the corresponding sorted order of~$V'$.  
Thus, for a given number of shifts, bribing the voters according to the sorted order is always optimal.
Many natural price function families are sortable.  
For example, a list of exponential functions
of the form $\pi_i(\ell)=a_i^\ell$ (where each voter $v_i$ may have an
individual base~$a_i$) is sortable. A list of polynomials of the form
$\pi_i(\ell)=a_i \cdot \ell^b$ (where the exponent~$b$ is the same for
the voters having the same preference order but each voter~$v_i$ may
have an individual coefficient~$a_i$) is sortable as well.

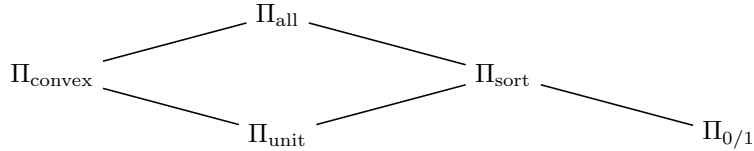
\begin{figure}[t]
\centering
\begin{tikzpicture}
  \node (all) at (0,1.6) {$\Pi_\all$};
  \node (convex) at (-3,0.8) {$\Pi_\convex$};
  \node (sort) at (3,0.8) {$\Pi_\sort$};
  \node (unit) at (0,0) {$\Pi_\unit$};
  \node (allornothing) at (6,0) {$\Pi_{0/1}$};
  \draw[semithick] (all) -- (convex) -- (unit);
  \draw[semithick] (all) -- (sort) -- (unit);
  \draw[semithick] (sort) -- (allornothing);
\end{tikzpicture}
\caption{The Hasse diagram of the inclusion relationship among the price function families (for a given election) that we study.}
\label{fig:inclusion-price-functions}
\end{figure}

\medskip Given an election $E=(C, V)$, we write $\Pi_\all$ to mean the
set of all possible price functions for this election, $\Pi_\convex$
to mean the set of the lists of convex price functions, $\Pi_\unit$ to
mean the set of the lists of unit price functions, $\Pi_{0/1}$ to mean
the set of the lists of all-or-nothing price functions, and
$\Pi_\sort$ to mean the set of all sortable lists of price functions.
We observe the following straightforward relations between these sets
(also see the Hasse diagram in \cref{fig:inclusion-price-functions}).

\begin{proposition}\label{prop:inclusion-price-functions}
  For each given election, the following relations hold between the
  families of price functions:
\begin{enumerate}
 \item  $\Pi_\unit \subset \Pi_\convex \subset \Pi_\all$,
 \item  $\Pi_{0/1} \subset \Pi_\sort \subset \Pi_\all$, 
 \item  $\Pi_\unit \subset \Pi_\sort$.
\end{enumerate}
\end{proposition}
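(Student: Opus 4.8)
The plan is to treat each of the five proper containments by splitting it into an inclusion ``$\subseteq$'' and a strictness claim ``$\neq$'': the inclusions reduce to pointwise checks of the defining inequalities, and each strictness is certified by one explicit witnessing list. I would dispatch the containments involving $\Pi_\convex$ and $\Pi_\unit$ first, since these are per-function conditions, and handle the two sortability containments of item~2 last, since sortability is a property of the whole list rather than of individual functions.

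For item~1, I would observe that a unit price function has $\pi(\ell+1)-\pi(\ell)=1$ for every admissible $\ell$, so its second differences vanish and the convexity inequality $\pi(\ell+1)-\pi(\ell)\le\pi(\ell+2)-\pi(\ell+1)$ holds with equality; hence $\Pi_\unit\subseteq\Pi_\convex$, and $\Pi_\convex\subseteq\Pi_\all$ holds because a convex price function is in particular a price function. Strictness is witnessed by $\pi(\ell)=2\ell$ (convex but not unit) and by any all-or-nothing function with constant $c>0$, whose first difference $c$ exceeds its next difference $0$, so it is not convex although it lies in $\Pi_\all$. The inclusion $\Pi_\unit\subseteq\Pi_\sort$ of item~3 is then immediate: for unit prices any two equal-order voters satisfy $\pi_i(\ell)=\ell=\pi_j(\ell)$, so the premise $\pi_i(\ell)>\pi_j(\ell)$ of the sortability implication never holds and the condition is vacuously true; the all-or-nothing function with $c=2$ is sortable but not unit, giving the strict inclusion.

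The step I expect to require the most care is item~2, because the sortability condition ranges over pairs of voters sharing a preference order and interacts with the capping convention at $p$'s position. For $\Pi_{0/1}\subseteq\Pi_\sort$ I would fix two equal-order voters with all-or-nothing constants $c_i$ and $c_j$; since they rank $p$ identically, both functions equal their constant throughout $1\le\ell\le m-2$, so $\pi_i(\ell)>\pi_j(\ell)$ forces $c_i>c_j$ and hence $\pi_i(\ell+1)=c_i>c_j=\pi_j(\ell+1)$, verifying the implication, while $\Pi_\sort\subseteq\Pi_\all$ is trivial. Strictness is shown by two crossing examples: a single voter with $\pi(1)=1$ and $\pi(2)=3$ is sortable (no same-order pair exists) but not all-or-nothing, separating $\Pi_{0/1}$ from $\Pi_\sort$; and two equal-order voters with $\pi_1(1)=\pi_1(2)=3$ and $\pi_2(1)=1$, $\pi_2(2)=5$ satisfy $\pi_1(1)>\pi_2(1)$ yet $\pi_1(2)<\pi_2(2)$, so their list is not sortable although it lies in $\Pi_\all$, separating $\Pi_\sort$ from $\Pi_\all$. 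The only genuine subtlety throughout is to make sure each witness uses enough candidates for the relevant differences to be defined, for which $m\ge3$ (with $p$ ranked last) already suffices.
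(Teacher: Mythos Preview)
Your proof is correct. The paper does not actually prove this proposition: it introduces it with ``We observe the following straightforward relations between these sets'' and then simply states the proposition without argument, so there is no proof in the paper to compare against. Your approach---verifying each $\subseteq$ directly from the definitions and exhibiting an explicit witness for each strict containment---is exactly the natural one, and your witnesses all work.

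One small remark: you flag at the end that the strictness witnesses require enough room (at least $m\ge 3$ and some voter ranking $p$ low enough), and that is indeed a genuine caveat. The proposition as stated says ``for each given election,'' but in degenerate elections (say $m\le 2$, or every voter ranking $p$ first) all price-function families collapse and the inclusions are no longer strict. The paper glosses over this, and since you explicitly note the assumption your witnesses need, you have handled it more carefully than the paper does.
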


\section{Parameterizations by Solution Cost Measures}\label{sec:pbscm}

In this section we present our results for parameters measuring the
solution cost, i.e., for the number of unit shifts, 
for the number of voters affected by at least one shift,
and for the budget. It turns out that parameterization
by the number of unit shifts tends to lower complexity ($\classFPT$
algorithms for Borda and Maximin and $\wone$-hardness for Copeland)
than parameterization by the number of affected voters ($\wtwo$-hardness). The case of parameterization by the budget lies in between, and
the complexity depends on each particular price function family.

\subsection{Unit Shifts}\label{subsec:unit shifts}
\probShiftBBorda and \probShiftBMaximin parameterized by the number~$t$ of unit
shifts in the solution are in $\classFPT$ for arbitrary price functions.
% For Borda and Maximin, \ShiftBribery parameterized by the number of unit
% shifts in the solution is in $\classFPT$ for each price-function family that we
% study.
The reason is that in these cases it is easy to isolate a small number
of candidates on which one needs to focus. More precisely, we can shrink the number of candidates as well as the number of voters to be bounded by 
functions in $t$ (in effect, achieving a partial kernelization~\cite{BGKN11}).

\newcommand{\thmbordafptnumshifts}{\probShiftBBorda parameterized by the number~$t$ of unit shifts is in $\classFPT$ for arbitrary price functions. The running time of the algorithm is $O^*((2^{t}\cdot (t+1)\cdot t)^{t})$.}

\newcommand{\criticalcandidatesetdef}[1]{
  \ensuremath{C(#1) \coloneqq \{c \in C\setminus\{p\} \mid \s{E}{c} >
  \s{E}{p}+#1\}}
}

\newcommand{\votersetbound}[1]{
  \ensuremath{2^{#1}\cdot (#1+1)\cdot {#1}}
}

\begin{theorem}\label{thm:Borda-fpt-num_shifts}
   \thmbordafptnumshifts
 \end{theorem}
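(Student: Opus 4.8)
The plan is to exploit the special structure of Borda: a single unit shift moves $p$ past exactly one candidate, so it increases $\s{E}{p}$ by one and decreases the score of that one passed candidate by one. Consequently, if a shift action uses a total of $T$ unit shifts, then $p$'s final score is exactly $\s{E}{p}+T$, and the total score decrease summed over all other candidates equals $T$. I would first guess the total number $T\in\{0,1,\dots,t\}$ of unit shifts actually used (this is the source of the $(t+1)$ factor). Fixing $T$, a candidate $c$ can only endanger $p$'s victory if its final score exceeds $\s{E}{p}+T$; call $c$ \emph{active} if $\s{E}{c}>\s{E}{p}+T$, and note that each active candidate must then be passed at least $\s{E}{c}-\s{E}{p}-T\ge 1$ times. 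Since the total number of passes equals $T$, if $\sum_{c\text{ active}}(\s{E}{c}-\s{E}{p}-T)>T$ there is no solution with $T$ shifts and I discard this guess; otherwise at most $t$ candidates are active, which is precisely the desired shrinking of the candidate set. Crucially, every non-active candidate already has score at most $\s{E}{p}+T$ and shifting can only decrease scores, so once $p$ gains exactly $T$ points and each active candidate is reduced by its required amount, $p$ automatically wins and the remaining candidates need not be tracked at all.

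Next I would bound the number of voters. Since the solution uses at most $t$ unit shifts, it shifts $p$ in at most $t$ voters. For a voter $v_i$ and a shift length $\ell$, the effect on the active candidates is completely described by the pair $(\ell, S_i^\ell)$, where $S_i^\ell$ is the set of active candidates lying among the $\ell$ positions directly above $p$ in $v_i$ (these are exactly the active candidates whose score $v_i$ reduces when $p$ is shifted up by $\ell$). The number of such pairs is at most $(t+1)\cdot 2^{t}$, because $\ell$ ranges over $\{0,\dots,t\}$ and $S_i^\ell$ is a subset of the at most $t$ active candidates. For each pair $(\ell,S)$ I keep only the $t$ cheapest voters $v_i$ (measured by $\pi_i(\ell)$) realizing it and delete all remaining voters; this leaves at most $2^{t}\cdot(t+1)\cdot t$ voters. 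The reduction is safe: given any successful shift action, replace each shifted voter $v_i$ (using length $\ell_i$, hence realizing the pair $(\ell_i,S_i^{\ell_i})$) by one of the retained cheapest voters realizing the same pair. Because at most $t$ voters are shifted and I retained $t$ representatives per pair, a system of distinct representatives exists (Hall's condition holds trivially, since every demand is linked to $t$ candidate voters and there are at most $t$ demands in total), and the replacement preserves the effect on every active candidate while not increasing the cost.

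Finally, on the kernelized instance, with at most $t$ active candidates and at most $2^{t}\cdot(t+1)\cdot t$ voters, I would brute-force over all shift actions that distribute exactly $T$ unit shifts among the retained voters. There are at most $(2^{t}\cdot(t+1)\cdot t)^{t}$ such actions, and for each it takes polynomial time to check that every active candidate is reduced by its required amount and that the total price $\Pi(\vv{s})$ is within the budget $B$; by the discussion above, any such action makes $p$ a winner. Ranging over the $t+1$ guesses of $T$ only multiplies the running time by $t+1$, which is absorbed into the $O^*$ notation, yielding the claimed $O^*((2^{t}\cdot(t+1)\cdot t)^{t})$ bound. I expect the main obstacle to be the voter-kernelization step: one must argue precisely that the pair $(\ell, S_i^\ell)$ captures \emph{all} the information about a voter that a solution can use, justify the technical point that reducing only the active candidates (while ignoring all other candidates) is sufficient because $p$'s final score is pinned to $\s{E}{p}+T$, and verify the distinct-representatives argument so that no retained voter is reused for two different shifts. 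The score-counting observation and the final brute force are routine once the kernel is established.
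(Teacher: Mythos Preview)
Your proposal is correct and follows essentially the same approach as the paper's proof: guess the exact number $T\le t$ of unit shifts (so that $p$'s final Borda score is pinned to $\s{E}{p}+T$), identify the at most $t$ ``active'' candidates whose score exceeds this value, bucket voters by the pair $(\ell,\text{subset of active candidates passed by an }\ell\text{-shift})$, keep only the $t$ cheapest voters per bucket, and brute-force over the $O^*((2^t(t+1)t)^t)$ resulting shift actions.

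The one place where you diverge from the paper is the safety argument for the voter reduction. The paper uses an iterative replacement: while some shifted voter $v_i$ lies outside the retained set, its bucket $V^j_{C'}$ necessarily has $t'$ members (otherwise $v_i$ would have been retained), the solution touches at most $t'$ voters and one of them is $v_i\notin V^j_{C'}$, so some $v_k\in V^j_{C'}$ is untouched and can replace $v_i$ at no greater cost. Your Hall-type argument reaches the same conclusion but is slightly more delicate on the \emph{cost} side: an arbitrary system of distinct representatives need not be pointwise cheaper. To make it airtight you should apply Hall to the restricted neighborhoods $R'_i=\{u\in R_{P_i}:\pi_u(\ell_i)\le\pi_{v_i}(\ell_i)\}$; Hall's condition still holds because either some $v_i\notin R_{P_i}$ (so $R'_i=R_{P_i}$ has size $t$) or every $v_i\in R'_i$ (so the distinct demands themselves witness $|N(A)|\ge|A|$). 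With this refinement your argument and the paper's are interchangeable.
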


\begin{proof}
  Let $I = (E=(C,V),\Pi,p,B)$ be an input instance of \probShiftBBorda for
  which we seek a shift action that uses at most $t$ unit shifts. 
  Our algorithm iterates over all possible number~$t'$ with $0 \leq
  t' \leq t$, which we interpret as the exact number of shifts used in
  the solution.
% 	and then, in each iteration, executes the following algorithm.

  Under the Borda rule, applying a shift action that uses $t'$~unit
  shifts increases the score of $p$ to exactly $\s{E}{p}+t'$. This
  means that irrespective of which shift action with $t'$~unit shifts
  we use, if some candidate $c$ has score at most $\s{E}{p}+t'$ in
  election $E$, then after applying the shift action, $p$~will
  certainly have score at least as high as $c$. On the contrary, if
  in election~$E$ some candidate $c$ has score greater than
  $\s{E}{p}+t'$, then a successful shift action must ensure that
  $c$ loses at least $\s{E}{c}-(\s{E}{p}+t')$~points.  Since
  every unit shift decreases the score of exactly one candidate, it follows that if
  there is a successful shift action that uses exactly $t'$ shifts,
  then the set $\criticalcandidatesetdef{t'}$ has at most $t'$ elements. Our algorithm first
  computes the set~$C(t')$. If $|C(t')| > t'$, then the algorithm skips the current iteration and continues with $t'\leftarrow t'+1$.

  By the above argument, we can focus on a small group of candidates, namely on~$C(t')$.
  Obviously, this set has at most $t'$ elements and can be found in linear time.
  Now if we only need to brute-force search all possible shift actions among 
  a small group of voters, whose size is upper-bounded by a function in $t'$,
  then we obtain our desired $\classFPT$ result.
  Indeed, the remainder of this proof is devoted to show that 
  only a small subset of voters (whose size is upper-bounded by a function in $t'$) are relevant to search for a successful shift action
  and this subset can be found in polynomial time.
  
  We first define a subset of ``relevant'' voters and then show why it is enough to focus on them.
  For each candidate subset $C'$ of $C(t')$ and for each number $j$, 
  $0 \leq j \leq t'$, 
  we compute a subset~$V^j_{C'}$ of $t'$~voters~$v$ such that:
  \begin{enumerate}[i)]
  \item if we shift $p$ by $j$ positions in $v$'s preference order,
    then $p$ passes each candidate from $C'$ and $j-|C'|$ other candidates from $C\setminus C(t')$, and
  \item $\sum_{v \in V^j_{C'}}\pi_v(j)$ is minimal. 
  \end{enumerate}
  If there are several subsets of voters that satisfy these conditions, then we
  pick one arbitrarily.
  We set $V(t') \coloneqq \bigcup_{\genfrac{}{}{0pt}{}{C' \subseteq C(t')}{ 0 \leq j \leq t'}}V^j_{C'}$.
 
  If there is a shift action $\vv{s}$ that uses exactly $t'$ unit
  shifts and has price at most $B$, then there is a shift action
  $\vv{s'}$ that uses exactly $t'$ shifts, has price at most $B$, and
  affects only the voters in $V(t')$. To see this, assume that there is a
  voter $v_i \not\in V(t')$ and a number $j$ such that $\vv{s}$ shifts
  $p$ by $j$ positions in $v_i$. Let $C'$ be the set of candidates
  from $C(t')$ that $p$ passes if shifted $j$ positions in $v_i$. By
  definition of $V_{C'}^{j}$ and by simple counting arguments, there is a
  voter $v_k \in V^j_{C'}$ for which $\vv{s}$ does not shift $p$.
  Again, by definition of $V_{C'}^{j}$, $\pi_{v_k}(j) \leq \pi_{v_i}(j)$. Thus, if in
  $\vv{s}$ we replace the shift by $j$~positions in~$v_i$ by a shift
  of $j$~positions in~$v_k$, then we still get a successful shift action
  using exactly $t'$ shifts and with price no greater than that of $\vv{s}$. We can repeat this
  process until we obtain a successful shift action of cost at most~$B$ 
  that affects the voters in~$V(t')$ only.
  
  In effect, to find a successful shift action that uses at most $t'$
  unit shifts, it suffices to focus on
  the %candidates from $C'$ and on the
  voters from $V(t')$. The cardinality of $V(t')$ can be upper-bounded by
  $|V(t')| \leq \sum_{\genfrac{}{}{0pt}{}{C'\subseteq C(t')}{0 \le j \le t'}}t' \leq
  \votersetbound{t'}$. 
% 2^{t'}(t'+1)t'$.
  Since in the preference order of each voter in~$V(t')$ 
  we can shift $p$ by at most $t'$ positions 
  and we can do so for at most $t'$ voters, there are at most $|V(t')|^{t'}$ shift actions that we need to try. 
  Moreover, $V(t')$ can be found in polynomial time.
  We can consider them all in $\classFPT$ time $O^*((2^{t'}\cdot (t'+1)\cdot t')^{t'})$.
\end{proof}

Using the ideas of the previous proof, we can transform a given election into one that
% It turns out that it is possible to transform the underlying election to one that
includes only $f(t)$ candidates and $g(t)$ voters, where $f$ and $g$
are two computable functions.  In this way, we obtain a so-called
partial problem kernel~\cite{BGKN11,Kom15}.  It is a \emph{partial}
problem kernel only since the prices defined by the price
function~$\Pi$ and the budget are not necessarily upper-bounded by a
function of~$t$.

\newcommand{\kernelnumbercandidates}{
  \ensuremath{t^4\cdot 2^t}}

\newcommand{\kernelnumbervoters}{
  \ensuremath{t^3\cdot 2^t}}

\newcommand{\criticalcandidateset}{\ensuremath{C_{\mathrm{crit}}}}
\newcommand{\criticalvoterset}{\ensuremath{V_{\mathrm{crit}}}}

\newcommand{\lembordakernelnumshifts}{
  An instance of \probShiftBBorda parameterized by the number~$t$ of unit shifts can be reduced to an equivalent instance with the same budget, 
  and with $O(\kernelnumbercandidates)$~candidates and $O(\kernelnumbervoters)$~voters.}

\begin{theorem} \label{thm:Borda-partialkernel-num-shifts}
  \lembordakernelnumshifts
\end{theorem}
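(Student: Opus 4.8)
The plan is to reuse the two bounded objects constructed in the proof of \cref{thm:Borda-fpt-num_shifts} and to rebuild a small election around them. Let $t_0$ be the smallest value $t'\le t$ with $|C(t')|\le t'$; if no such $t_0$ exists, then $|C(t')|>t'$ for every $t'\le t$, so no shift action with at most $t$ unit shifts can succeed and we output a trivial no-instance. Otherwise set $\criticalcandidateset\coloneqq C(t_0)$. Since the sets $C(0)\supseteq C(1)\supseteq\cdots$ are nested and, by the minimality of $t_0$, every successful shift action must use at least $t_0$ shifts, the must-pass candidates of any solution lie in $\criticalcandidateset$, and $|\criticalcandidateset|\le t_0\le t$. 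The decisive counting is that the deficits $d_c\coloneqq\s{E}{c}-\s{E}{p}$ of critical candidates are small: the necessary condition $\sum_{c\in C(t_0)}(d_c-t_0)\le t_0$ (implicit in the previous proof, as the $t_0$ shifts must cover all point deficits of $C(t_0)$) gives $\sum_{c\in\criticalcandidateset}d_c\le t_0(1+|\criticalcandidateset|)=O(t^2)$. Finally let $\criticalvoterset$ be the union over all $t'\le t$ of the relevant voter sets $V(t')$ of the previous proof; as computed there, $|\criticalvoterset|=O(t^3\cdot 2^t)$.

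Next I would build the new election $E'$. Its candidates are $\{p\}\cup\criticalcandidateset$ plus, for each voter $v\in\criticalvoterset$, up to $t$ fresh private dummies. For such a voter I copy the block of at most $t$ candidates sitting immediately above $p$, keeping the critical candidates in their original relative order and replacing every non-critical candidate by one of $v$'s dummies; I place $p$ directly below this block and push every remaining candidate below $p$, keeping $v$'s price function unchanged for the only relevant shift lengths $j\le t$. This preserves, for each $j\le t$, both the set of critical candidates $p$ passes when shifted $j$ positions in $v$ and the price $\pi_v(j)$, since neither the number of positions nor the critical candidates in the block are disturbed. With fresh dummies per voter, the candidate count is $|\criticalcandidateset|+t\cdot|\criticalvoterset|=O(t^4\cdot 2^t)$, matching the claim.

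The remaining task is to repair scores. After copying the voters of $\criticalvoterset$, the deficits $\s{E'}{c}-\s{E'}{p}$ will generally differ from the targets $d_c=O(t^2)$, and the dummies carry essentially arbitrary scores. I would correct both with padding voters of prohibitive price (cost larger than $B$): using reverse vote pairs and the $\point$-type gadget of \cref{subsec:conventions}, one can raise the deficit of any single critical candidate by one while leaving all other pairwise differences intact, so $\sum_c d_c=O(t^2)$ such pairs set every $d_c$ correctly; the same device, together with ranking each dummy at the bottom of all votes in which it is not a copied top-block entry, keeps every dummy strictly below $p$. This adds only $O(t^2)$ voters, so the total is $|\criticalvoterset|+O(t^2)=O(t^3\cdot 2^t)$, and the budget $B$ and the parameter $t$ are unchanged.

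Equivalence then follows in both directions. For the forward direction, the exchange argument of \cref{thm:Borda-fpt-num_shifts} lets us assume an optimal successful shift action in $E$ touches only voters of $\criticalvoterset$; replaying the identical per-voter shifts in $E'$ is successful with the same number of shifts and the same cost, because the passed critical candidates, their deficits, and the prices all agree. For the backward direction, since padding voters cost more than $B$ and dummies never tie or beat $p$, any successful shift action of cost at most $B$ in $E'$ may be assumed to shift only copied voters, and it translates back verbatim to $E$. The main obstacle is exactly the score bookkeeping of the third paragraph: one must hit each prescribed deficit $d_c$ precisely while guaranteeing that none of the $O(t^4\cdot 2^t)$ dummies ever ties or beats $p$, all without the padding prices accidentally opening cheaper shifts and without exceeding $O(t^3\cdot 2^t)$ added voters; the bound $\sum_c d_c=O(t^2)$ is precisely what makes this achievable.
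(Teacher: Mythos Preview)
Your overall strategy matches the paper's: isolate the critical candidate set $\criticalcandidateset$ and the critical voter set $\criticalvoterset$ from the proof of \cref{thm:Borda-fpt-num_shifts}, rebuild copies of the critical voters with fresh dummies filling the top block, and then use $\point$-type padding pairs at prohibitive price to install the correct score deficits. The candidate count works out as you say, and your observation that the nested sets $C(0)\supseteq C(1)\supseteq\cdots$ make $\bigcup_{t'}C(t')=C(t_0)$ is exactly what the paper uses implicitly.

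There is, however, a genuine gap in your voter count. After you copy the voters of $\criticalvoterset$ and nothing else, the Borda deficit $\s{E'}{c}-\s{E'}{p}$ of a critical candidate $c$ is not under control: $c$ may sit below $p$ in almost all of the $O(t^3\cdot 2^t)$ copied voters, and in each such voter it loses up to $|C_{\text{new}}|-1$ points relative to $p$. The resulting deficit can thus be negative with absolute value of order $|\criticalvoterset|\cdot|C_{\text{new}}|$, and closing that gap one unit at a time with $\point$-pairs would require far more than $O(t^2)$ padding voters. The paper fixes this by pairing \emph{each} copied voter $v_i$ with a reverse voter $v'_i$ of prohibitive price; the pair $(v_i,v'_i)$ contributes zero to every Borda score difference, so all deficits start at exactly $0$, and then $\sum_c d_c=O(t^2)$ pairs of $\point$-voters suffice to reach the targets. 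Your phrase ``reverse vote pairs'' appears to refer only to the internal structure of the $\point$ gadget, not to this neutralisation step; without it the $O(t^2)$ padding claim fails and the stated voter bound is not achieved.

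A smaller issue: the inequality $\sum_{c\in C(t_0)}(d_c-t_0)\le t_0$ is a necessary condition only for a solution using exactly $t_0$ shifts, and nothing guarantees that a yes-instance admits one rather than some $t'>t_0$. The clean argument is that any $c$ with $d_c>2t$ makes the instance a trivial no (with $t$ shifts $p$ gains exactly $t$ points and $c$ loses at most $t$), so one may assume $d_c\le 2t$ for each of the at most $t$ critical candidates, whence $\sum_c d_c\le 2t^2$.
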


\begin{proof}
  By the proof of \cref{thm:Borda-fpt-num_shifts}, 
  we know that we only need to focus on a subset~$\criticalcandidateset$ of candidates whose size is upper-bounded by a function of the parameter~$t$.
  This set~$\criticalcandidateset$ is defined as follows:
  \[
  \criticalcandidateset\coloneqq\bigcup_{t'=0}^{t} {C(t')},
  \]
  where $\criticalcandidatesetdef{t'}$ if the number of candidates whose scores are greater than $\s{E}{p}+t'$ is at most $t'$; otherwise it will not be possible to decrease the score of every of those candidates by at least one within $t'$ unit shifts, thus, $C(t')\coloneqq\emptyset$. 

  If~$n \le t^3 \cdot 2^t$, then we simply set~$\criticalvoterset \coloneqq V$ to be the set of all voters.
  Otherwise, from the previous proof it follows that we only need to focus on a small group of voters from the following set
  \[
  \criticalvoterset\coloneqq\bigcup_{t'=0}^{t} {V(t')},
  \]
  where $V(t')$ is defined as in the proof of \cref{thm:Borda-fpt-num_shifts}.
  Briefly put, $V(t')$ consists of the voters such that if we shift $p$ in some of their preference orders by a total of $t'$~positions,
  then we may make $p$ a winner.
  Since we only need to compute~$\bigcup_{t'=0}^{t} {V(t')}$ when~$n \ge t^3 \cdot 2^t$, it follows that~$\criticalvoterset$ can be computed in polynomial time using the straightforward algorithm in the previous proof.

  Clearly, $|\criticalcandidateset|\le t\cdot (t-1)/2$ and $|\criticalvoterset| \le \min\{n, \sum_{t'=0}^{t} \votersetbound{t'}\} \le \min \{ n, t^3\cdot 2^t \}$.

  The remaining task now is to construct an election \emph{containing} all the
  candidates from $\criticalcandidateset$ and the voters from
  $\criticalvoterset$ along with their price functions such that 
  the election size is still upper-bounded by a function in $t$.
  
  % To this end, for each candidate~$c_i\in \criticalcandidateset$,
  % let $s_i$ be the score difference between $c_i$ and $p$ in the original
  % election.
  % Note that $s_i \le t$.
  We construct the candidate set~$C_{\textrm{new}}$ of the new election as follows:
  We introduce a dummy candidate~$d_i$ for every candidate~$c_i\in \criticalcandidateset$.
  This dummy candidate will be used to realize the original score difference between $p$ and every candidate from $\criticalcandidateset$.
  We denote the set of all these dummy candidates as $D$.

  For each voter~$v_i\in \criticalvoterset$,
  we need to ``replace'' the candidates in his or her original preference order
  that do not belong to $\criticalcandidateset$ 
  but that are still ``relevant'' for the shift action.
  Let $C_i$ be the set of candidates from $\criticalcandidateset$ that
  are ranked ahead of $p$ by voter~$v_i$,
  that is, $C_i\coloneqq\{c'\in \criticalcandidateset \mid c' \pref_i p\}$.
  We introduce a new set~$F_j$ of dummy candidates whose size equals the number
  of irrelevant candidates that rank ahead of $p$ by no more than $t$ positions.
  That is, 
  \[F_i \coloneqq \{c \in C\setminus \criticalcandidateset \mid c \text{ is
    ranked at most } t \text{ positions ahead of } p \text{
    by } v_i\}.\]
  % By the definition of the voters in $\criticalvoterset$,
  % we know that there are at most $t$~candidates in $F_j$. 
  Now, the new candidate set~$C_{\textrm{new}}$ is set to 
  \[
  C_{\textrm{new}} \coloneqq \criticalcandidateset\cup D \cup \bigcup_{v_i \in \criticalvoterset}{F_i}.
  \]
  Clearly, $|C_{\textrm{new}}| \le 2|\criticalcandidateset| + t \cdot |\criticalvoterset| \le t\cdot (t-1) + \min\{ t^4 \cdot 2^t, t \cdot n \}.$
  
  The new voter set~$V$ consists of two blocks:
  \begin{enumerate}
  \item The first block is constructed to maintain the score difference between $p$ and any
    candidate~$c_j \in \criticalcandidateset$ in the original election.
    To this end, let $s_j$ be this score difference.
    We introduce $s_j$~pairs of voters with the preference orders~$\point(p,
    \emptyset, c_j, d_j, C_{\textrm{new}}\setminus \{p, c_j, d_j\}, \cdot)$ (see \cref{subsec:conventions} for the definitions).
    Note that the two preference orders that are specified by $\point()$ will cause
    $c_j$ to have two points more than~$d_j$ 
    and one point more than any other candidate, including $p$.
    We set their price for the first unit shift to $B+1$.
  \item The second part is constructed to ``maintain'' the important part of the
    voters from $\criticalvoterset$.
    That is, 
    for each voter~$v_i\in \criticalvoterset$,
    we set his or her new preference order to be 
    \[
    v_i: \seq{C_i \cup F_i} \pref p \pref \seq{C_{\textrm{new}} \setminus (C_i \cup F_i \cup \{p\})},
    \]
	 where~$\seq{C_i \cup F_i}$ corresponds to the order of the candidates in~$v_i$'s original preference order and the price function remains unchanged. 
    We also add a dummy voter~$v'_i$ with the reverse preference order of $v_i$ and we set his or her price for shifting $p$ by the first position to $B+1$.     
  \end{enumerate}
  The size of the newly constructed voter set~$V_{\textrm{new}}$ is $(\sum_{c_j \in \criticalcandidateset}{2s_j}) + 2|\criticalvoterset| \le t\cdot (t-1)+t^3 \cdot 2^{t+1}$.

  The equivalence of the two instances can be verified as follows:
  The newly added dummy candidates, either from $D$ or from $F_i$ have at most the same score as $p$.
  Each candidate~$c_j\in \criticalcandidateset$ has exactly $s_j$~points more than~$p$.
  Since it will be too expensive to bribe the voters from the first block 
  or the dummy voters~$v'_i$~($B+1$ budget for one unit shift),
  the only possibility of shifting $p$ within the budget~$B$ is to bribe the voters who come from the original election and who have the ``same'' preference orders up to renaming of the candidates
  that do not belong to $\criticalcandidateset$.
  %Thus, the equivalence of the two instances follows.
\end{proof}

The fixed-parameter tractability result for \probShiftBMaximin follows by using a similar approach as in \cref{thm:Borda-fpt-num_shifts}.

\newcommand{\thmmaximinfptnumshifts}{\probShiftBMaximin parameterized by the number~$t$ of unit shifts is in $\classFPT$ for arbitrary price functions. The running time of the algorithm is $O^*((2^{t}\cdot (t+1)\cdot t)^{t})$.}

\begin{theorem}\label{thm:Maximin-fpt-num_shifts}
  \thmmaximinfptnumshifts
\end{theorem}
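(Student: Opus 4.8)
The plan is to mirror the algorithm of \cref{thm:Borda-fpt-num_shifts} almost verbatim, the only genuine new ingredient being that under Maximin the final score of $p$ is no longer pinned down by the number of shifts alone. A unit shift of $p$ past a candidate $c$ in some vote raises $N_E(p,c)$ by one and lowers $N_E(c,p)$ by one, and it touches no other pairwise count; hence shifting $p$ can only increase $\s{E}{p}$ and can only decrease the scores of the candidates that $p$ overtakes, while leaving every other score unchanged. In particular, a shift action using $t'$ unit shifts satisfies $\s{E}{p}\le\s{E'}{p}\le\s{E}{p}+t'$, where $E'$ denotes the bribed election, because each unit shift raises a single entry $N(p,\cdot)$ by one and therefore lifts the minimum by at most one. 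I would thus have the algorithm iterate over both the exact number $t'\le t$ of shifts used and a guessed target value $s^*$ for $\s{E'}{p}$, ranging over the $O(t)$ integers in $[\s{E}{p},\s{E}{p}+t']$; since this only multiplies the running time by a polynomial factor in $t$, it is absorbed by the $O^*$ notation and the final bound will coincide with the Borda bound.

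For a fixed pair $(t',s^*)$ I would isolate the relevant candidates as $R\coloneqq C_{\mathrm{raise}}\cup C_{\mathrm{dang}}$, where $C_{\mathrm{raise}}\coloneqq\{c\colon N_E(p,c)<s^*\}$ collects the candidates that $p$ must overtake in order to push $\s{E'}{p}$ up to $s^*$ (any such $c$ left untouched keeps $N_{E'}(p,c)=N_E(p,c)<s^*$ and caps $p$'s score below $s^*$), and $C_{\mathrm{dang}}\coloneqq\{c\colon\s{E}{c}>s^*\}$ collects the candidates that cannot be beaten merely by raising $p$ and must instead have their own score dragged down, which---since only $N(c,p)$ is movable---forces $p$ to pass each of them at least $N_E(c,p)-s^*\ge1$ times. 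Every candidate outside $R$ already satisfies $N_E(p,c)\ge s^*$ and $\s{E}{c}\le s^*$, so it neither constrains $p$'s target minimum nor threatens $p$ once $\s{E'}{p}\ge s^*$ is reached; such candidates are irrelevant and passing them is harmless. The key counting observation, which is what keeps the running time equal to that of Borda, is that $|R|\le t'$ in every feasible iteration: each member of $R$ must be passed at least once, a single unit shift passes exactly one candidate, and passes of distinct candidates are distinct unit shifts, so the $t'$ available shifts already bound the number of distinct candidates that must be passed. If $|R|>t'$ the iteration is discarded.

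With $R$ of size at most $t'$ in hand, I would reuse the voter-selection and exchange argument of \cref{thm:Borda-fpt-num_shifts} unchanged. For every subset $C'\subseteq R$ and every shift length $j\in\{0,\dots,t'\}$ I compute a cheapest set $V^j_{C'}$ of at most $t'$ voters in each of which shifting $p$ by $j$ positions overtakes exactly the relevant candidates in $C'$ together with $j-|C'|$ irrelevant ones, and set $V(t',s^*)\coloneqq\bigcup_{C'\subseteq R,\,0\le j\le t'}V^j_{C'}$, of size at most $2^{t'}\cdot(t'+1)\cdot t'$. The exchange argument carries over because swapping a used voter $v_i\notin V(t',s^*)$ (shifting by $j$ and overtaking the relevant set $C'$) for a cheaper, unused $v_k\in V^j_{C'}$ overtakes exactly the same relevant candidates by the same amounts; consequently every count $N(p,c)$ and $N(c,p)$ with $c\in R$ changes identically, so $p$'s score and all relevant candidates' scores are preserved, and the swap keeps the action successful and no more expensive. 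It then suffices to brute-force all shift actions supported on $V(t',s^*)$, of which there are at most $|V(t',s^*)|^{t'}$, checking for each whether $p$ becomes a Maximin winner within budget $B$; this yields the claimed running time $O^*((2^t\cdot(t+1)\cdot t)^t)$.

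The part I expect to require the most care is verifying that passing irrelevant candidates---forced whenever a needed relevant candidate sits several positions above $p$ in a vote---never sabotages the solution: I must check that overtaking a candidate with $N_E(p,c)\ge s^*$ keeps its count at least $s^*$ (so $p$'s minimum is unharmed), that lowering an already-beaten candidate's score cannot make it win, and that overshooting the required number of passes for a member of $R$ is likewise harmless. These monotonicity checks are routine, but they are exactly the points where Maximin differs from Borda and where the argument could silently break.
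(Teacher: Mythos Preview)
Your proposal is correct and follows essentially the same approach as the paper: guess the target score of~$p$, identify a small set of ``relevant'' candidates that must be passed, build the voter sets $V^j_{C'}$ exactly as in the Borda proof, and brute-force. Your version is in fact slightly more careful than the paper's: you explicitly include the dangerous candidates $C_{\mathrm{dang}}=\{c:\s{E}{c}>s^*\}$ in the relevant set~$R$, whereas the paper's stated definition $C(t')=\{c:N_E(p,c)<\s{E}{p}+t'\}$ only captures your $C_{\mathrm{raise}}$, and your monotonicity checks at the end are precisely what is needed to make the exchange argument airtight for Maximin.
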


\begin{proof}
   Let $I = (E=(C,V),\Pi, p, B)$ be an input instance of
  \probShiftBMaximin.  Suppose that,
  altogether, we are allowed to shift $p$ by at most $t$ positions.
  Under Maximin, such a shift can make $p$ gain at most~$t$~points.
  Just similar to the case of Borda in \cref{thm:Borda-fpt-num_shifts}, 
  our algorithm iterates over all possible numbers~$t'$ with $0 \leq
  t' \leq t$, which we interpret as the exact number of points 
  that $p$ gains after a successful shift action with $t$ unit shifts and then,
  in each iteration, executes the following algorithm.
  % The first step of our algorithm is to guess a number $i$, $0 \leq i
  % \leq t$, of points that candidate~$p$ gains after a successful shift
  % action (assuming that there, indeed, is a successful shift action
  % that uses at most $t$ shifts and has price at most $B$).

  For each candidate~$c\in C\setminus \{p\}$, if $N_{E}(p, c) <
  \s{E}{p}+t'$, then we have to shift $p$ to pass $c$ in some
  preference orders.
% to increase the number of voters preferring $p$
%  over $c$.
%   Indeed, such candidates can not have more
%   than $\s{p}+j$ points since otherwise it is not possible to make $p$ winning
%   with $\s{p}+j$ points after $s$ unit shifts.
%   Maximin scores
%   before the bribery are already more than $\s{p}$ are
%   \emph{interesting}.
  Since in every unit shift~$p$ passes exactly one candidate, a
  counting argument shows that if there are more than $t-t'+1$
  candidates~$c$ with $N_{E}(p, c) < \s{E}{p}+t'$, then there is no
  successful shift action that uses at most $t'$ unit shifts. 
  Thus the set~$C({t'})=\{c\in C\setminus \{p\} \mid N_{E}(p,c) < \s{E}{p}+t'\}$ has at most $t-t'+1$ candidates. 
  Hence, if $|C(t')|>t-t'+1$, then the algorithm skips the current iteration and continues with $t'\leftarrow t'+1$.

  By the above argument, it suffices to focus on a small set of
  candidates: the set $C(t')$. Using the same reasoning as in \cref{thm:Borda-fpt-num_shifts} for Borda voting,
  it also suffices to focus on a subset~$V(t')$ of at most $2^t\cdot (t+1)\cdot t$ voters (we can compute this set in a way
  analogous to that for Borda voting).  Since we can
  use at most $t$ unit shifts and, thus, we can affect at most $t$
  voters, there are at most $O^*(|V(t')|^t) = O^*((2^t\cdot (t+1)\cdot t)^t)$ shift
  actions that we have to try. We can do so in $\classFPT$ time.
\end{proof}

\newcommand{\obsmaximinkernelnumshifts}
{
  An instance of \probShiftBMaximin parameterized by the number~$t$ of unit shifts can be reduced to an equivalent instance with the same budget, 
  and with $O(\kernelnumbercandidates)$~candidates and $O(\kernelnumbervoters)$~voters.
}

Using an analogous approach to construct an equivalent election as for \cref{thm:Borda-partialkernel-num-shifts},
we obtain the following.
\begin{corollary}\label{obs:Maximin-partialkernel-num-shifts}
  \obsmaximinkernelnumshifts
\end{corollary}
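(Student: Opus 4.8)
The plan is to mimic the construction from the proof of \cref{thm:Borda-partialkernel-num-shifts}, replacing the Borda-specific score bookkeeping by the pairwise-margin bookkeeping appropriate for Maximin. First I would recall from the proof of \cref{thm:Maximin-fpt-num_shifts} that only a bounded set of candidates and voters is relevant, and set
\[
\criticalcandidateset \coloneqq \bigcup_{t'=0}^{t} C(t'), \qquad \criticalvoterset \coloneqq \bigcup_{t'=0}^{t} V(t'),
\]
where $C(t') = \{c \in C\setminus\{p\} \mid N_E(p,c) < \s{E}{p}+t'\}$ (set to $\emptyset$ whenever $|C(t')| > t-t'+1$, since then no shift action gaining $t'$ points can succeed) and $V(t')$ is the relevant voter set built exactly as in that proof. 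As in the Borda case these satisfy $|\criticalcandidateset| = O(t^2)$ and $|\criticalvoterset| \le \min\{n, t^3\cdot 2^t\}$, and both are computable in polynomial time (computing $\criticalvoterset$ explicitly only when $n \ge t^3\cdot 2^t$, and otherwise taking $\criticalvoterset \coloneqq V$).

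Next I would build the reduced election. Its candidate set would consist of $\criticalcandidateset$, together with, for each critical voter $v_i$, the set $F_i$ of the at most $t$ candidates ranked within $t$ positions ahead of $p$ in $v_i$ (the only non-critical candidates $p$ can still reach with the available shifts), plus a collection of dummy candidates used to install the required head-to-head margins; this keeps the candidate count at $O(t^4\cdot 2^t)$. The voter set would again split into two blocks: a first block of gadget voters, each assigned price $B+1$ for the first unit shift so that the briber can never touch them, whose sole purpose is to realize the correct pairwise margins; and a second block retaining the $\criticalcandidateset\cup F_i$-prefix of each critical voter $v_i$ with its original price function, each paired with a reverse dummy voter priced at $B+1$ to neutralize that prefix's side effects on the remaining margins. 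Summing these contributions gives the claimed $O(t^3\cdot 2^t)$ voters, exactly as in \cref{thm:Borda-partialkernel-num-shifts}.

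The crucial difference from the Borda construction, which I expect to be the main obstacle, is that a candidate's Maximin score is the \emph{minimum} of its pairwise margins rather than a sum. It is therefore not enough to preserve aggregate score gaps: using the paired gadget voters I must simultaneously enforce that (i) $N_{E'}(p,c)$ equals the original $N_E(p,c)$ for every $c \in \criticalcandidateset$, so that the number of unit shifts needed for $p$ to pass $c$ is preserved; (ii) the Maximin score of $p$ is governed precisely by its margins against $\criticalcandidateset$, so that passing critical candidates is what drives $p$'s score up; and (iii) every dummy, filler, and non-critical candidate has Maximin score at most $\s{E}{p}$, so it can never become the binding obstacle to $p$'s victory. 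The $\point$ and $\halfseqone/\halfseqtwo$ gadgets of \cref{subsec:conventions} are tailored to produce exactly such controlled ties and unit-margin gaps, so I would assemble the first block from them, tuning one margin at a time while keeping the others balanced at ties.

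Finally I would verify equivalence: since every gadget and reverse voter is unbribable within budget $B$, the only profitable shifts in the reduced instance lie on the preserved prefixes of the critical voters, and by construction these correspond one-to-one, with identical prices, to the profitable shifts in the original instance. Combined with the margin guarantees (i)--(iii), $p$ can be made a Maximin winner within budget $B$ in the reduced instance if and only if it can in the original one, and the size bounds follow by the same counting as in \cref{thm:Borda-partialkernel-num-shifts}.
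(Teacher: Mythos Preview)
Your proposal is correct and follows the paper's approach, which merely states that the construction is analogous to \cref{thm:Borda-partialkernel-num-shifts} without further detail; you have correctly spelled out the Maximin-specific adjustment, namely that the first block of gadget voters must install the pairwise deficits $\s{E}{p}+t'-N_E(p,c)$ for the critical candidates rather than Borda score gaps. One small imprecision: requiring $N_{E'}(p,c)=N_E(p,c)$ literally is neither achievable nor needed---what matters is preserving these deficits (each bounded by~$t$) and keeping $p$'s Maximin score governed by $\criticalcandidateset$, which your conditions~(ii) and~(iii) already capture.
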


For Copeland, we do not get $\classFPT$ membership, but we show
$\wone$-hardness even for unit prices and for all-or-nothing prices,
which implies hardness for each of our price function families.

\newcommand{\thmcopelandwonenumbershifts}{%  For each rational number~$\alpha$, $0 \leq \alpha \leq 1$,
  \probShiftBCopeland parameterized by the number of unit
  shifts is $\wone$-hard for each price function family that we consider.
  % for all-or-nothing prices and it
  %is $\wone$-complete for unary-encoded all-or-nothing prices.
}

\begin{theorem}\label{thm:Copeland-w[1]-c-num_shifts}
  \thmcopelandwonenumbershifts
\end{theorem}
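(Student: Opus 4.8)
The plan is to prove $\wone$-hardness by a parameterized reduction from \probClique{} (or, if it proves more convenient for bookkeeping, the equivalent \probMCClique), taking the number of unit shifts as the parameter. Because of the inclusions recorded in \cref{prop:inclusion-price-functions}, it suffices to establish hardness for the two ``minimal'' families, namely unit prices and all-or-nothing prices: every remaining family (convex, sortable, arbitrary) contains one of these two, so hardness transfers upward automatically. I would therefore build a single core construction and then instantiate its price functions in the two ways.

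Given a graph $G$ and target clique size $k$, I would first introduce one \emph{vertex-candidate} $y_i$ for each vertex $i$, one \emph{edge-candidate} $a_e$ for each edge $e$, the preferred candidate $p$, and a pool of dummy candidates. Using the $\point(\cdot)$ and $\halfseqone/\halfseqtwo$ gadgets of \cref{subsec:conventions}, I would fix the baseline head-to-head relations, and hence all Copeland scores, so that $p$ is \emph{just} short of winning: $p$ must gain exactly the number of pairwise victories that corresponds to beating $\binom{k}{2}$ edge-candidates. The point of these gadgets is that they pin down the precise score margins between $p$ and the critical candidates while leaving the contests among all the other candidates tied, so that no unintended Copeland points appear.

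The heart of the construction is a \emph{selection} mechanism. Shifting $p$ forward past a vertex-candidate $y_i$ encodes ``selecting'' vertex~$i$, and I would arrange the votes so that selecting a vertex costs only a number of unit shifts bounded by a function of $k$ (independent of $|V(G)|$). For each edge $e=\{i,j\}$ I would place $a_e$ above $p$ with $y_i$ and $y_j$ sitting between them in several dedicated votes, so that the head-to-head contest $p$ versus $a_e$ flips in $p$'s favour precisely when $p$ has been shifted past \emph{both} $y_i$ and $y_j$. Consequently a shift action of total length $f(k)$ lets $p$ collect $\binom{k}{2}$ additional Copeland points if and only if the selected vertices induce $\binom{k}{2}$ edges, i.e.\ form a $k$-clique. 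Setting the budget and the shift bound to $f(k)$ then makes the instance a yes-instance exactly when $G$ contains a clique of size $k$.

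The main obstacle is exactly this ``both endpoints'' enforcement through Copeland's majority-based contests \emph{while} keeping the total number of shifts bounded by $f(k)$ and independent of $|V(G)|$; balancing how many votes mention each $a_e$ against the exact margins set by the gadgets is the delicate part. For the price instantiation, setting every price function to unit makes the cost of a shift action equal to its number of unit shifts, so the same construction simultaneously yields hardness for unit prices (and, since budget then equals shift count, feeds directly into \cref{cor:Copeland-budget}). For all-or-nothing prices I would keep $p$ within a bounded distance of its target position in every vote that an optimal solution needs to touch, so that paying the fixed all-or-nothing cost still displaces $p$ by only $O(f(k))$ positions in total; the budget then caps the number of bribed votes and the equivalence argument carries over unchanged.
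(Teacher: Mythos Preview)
Your reduction source (\probClique) and the observation that proving hardness for unit prices and for all-or-nothing prices suffices are both correct. However, the core mechanism you sketch has a real gap. The head-to-head contest $p$ versus $a_e$ is decided purely by counting votes with $p\succ a_e$; passing $y_i$ in one vote and $y_j$ in another contributes nothing to this count unless $a_e$ itself sits between those vertex-candidates and $p$ in those very votes. If it does, then ``selecting'' vertex $i$ forces you to climb past every $a_e$ with $i\in e$ on the way to $y_i$, costing $\deg(i)$ shifts and destroying the $f(k)$ bound. Conversely, if you place $a_e$ only in dedicated edge-votes of the form $a_e\succ y_i\succ y_j\succ p$, then with budget $3\binom{k}{2}$ you can shift $p$ to the top in any $\binom{k}{2}$ such votes and beat $\binom{k}{2}$ edge-candidates---but nothing forces those edges to share endpoints, so no clique is enforced. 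You correctly identify this ``both endpoints'' enforcement as the obstacle, but the proposal does not actually overcome it.

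The paper's construction avoids a separate selection step altogether. For each edge $e=\{x,y\}$ there is one vote $x\succ y\succ p\succ\cdots$ (with an additional edge-candidate $q_e$ on top in the unit-price variant, and filler candidates blocking all other votes). The crucial device is that the clique is forced by the \emph{vertex} contests, not the edge ones: each vertex candidate is arranged to beat $p$ by exactly $2k-3$ votes, so $p$ must pass it in $k-1$ distinct edge-votes to flip that contest. The budget admits exactly $\binom{k}{2}$ edge-votes, each supplying two vertex-passes, giving $k(k-1)$ passes in total; since $p$ must beat $k$ vertex candidates at $k-1$ passes apiece, a double count forces the $k$ beaten vertices and the $\binom{k}{2}$ chosen edge-votes to form a clique. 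In the unit-price case the edge-candidates $q_e$ are not there to certify edges but merely to force $p$ all the way to the top of every bribed vote, so the counting argument applies without slack.
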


\begin{proof}  
  We first show the result for all-or-nothing prices which, by
  \cref{prop:inclusion-price-functions}, also covers the case of
  convex prices, sortable prices, and  arbitrary price functions.
  In the second part, we deal with unit prices.

\paragraph{All-or-nothing prices}
 For the sake of completeness, fix some rational number~$\alpha$, $0 \leq \alpha \leq 1$, for the Copeland ``tie breaking'';
  we will see later that we will not use $\alpha$ since the number of voters in the constructed instance is odd.

%  We now give the $\wone$-hardness proof. Specifically,
  We give a reduction from \probClique parameterized by the clique size. 
  Let $G = (V(G),E(G))$ be our input graph and let $k$ be the clique size that we seek (we assume $k>1$).  Let $n =
  \left|V(G)\right|$ and $m = \left|E(G)\right|$.  Without loss of
  generality, assume that $n$ is odd and that $n>6$
  (both assumptions can be realized by adding several isolated
  vertices to the graph).
%  Finally, assume that the graph has at least $k(k-1)$ edges (one can
%  add $k(k-1)$ pairs of vertices, each connected by an edge).

  We construct an election $E = (C,V)$ as follows. 
  First, we set the available budget $B$ to be $k\cdot (k-1)/2$.
	We let $C = V(G)
  \cup \{p,d\} \cup D$, where $p$ is our preferred candidate, $d$ is the unique winner in $E$, and $D$ is a set of $2n$ dummy candidates.
  Note that we use $V(G)$ to denote both the vertex set and a subset of the candidates.
  However, given an element in $V(G)$ we will always make it clear whether we mean a vertex or a candidate.

  We form $V$ so that $\left|V\right|$ is odd and the following holds:
  \begin{enumerate}[i)]
  \item For each edge~$e = \{x,y\}$ in $E(G)$, 
    there are two voters $v_e$ 
    with preference order 
    \begin{align*}
       x \pref y \pref p \pref \seq{C\setminus \{x,y,p\}}  
% 		\text{\quad and \quad} \revseq{C\setminus \{x,y,p\}} \pref p \pref y \pref x. 
	 \end{align*}
	 and $v_{e}'$ with the reversed ordering of~$v_e$.
    They have all-or-nothing price functions $\pi_e, \pi_{e}'$ such that $\pi_e(0) = 0$ and $\pi_e(j) = 1$ for each $j > 0$ and $\pi_{e}'(j)= B+1$ for each $j \ge 0$. 
%     We refer to the subset of voters~$v_{e}, e\in E(G)$ as $V_\edges$ and the subset of voters~$v_{e}', e\in E(G)$ as $V'_{\edges}$.
    We set~$V_\edges \coloneqq \{ v_e \mid e \in E(G) \}$. % and~$V'_\edges \coloneqq \bigcup_{e \in E(G)} v'_e$

 \item There is a set~$V_\structure$ of polynomially many voters
   who implement the following results of head-to-head contests (this
   indeed can be done by applying the classic McGarvey's
   theorem~\cite{mcg:j:election-graph}): 
   Each candidate in $V(G)$ wins head-to-head contests against $p$ (by $2k-3$ voters' preference orders) and against $d$ (by one
   voter's preference order).  
   Candidate~$d$ wins head-to-head contests against all candidates in $D$ (by one voters' preference order).
   Candidate~$p$ wins head-to-head contests against $2n-k-1$ candidates in $D$
   and loses the remaining ones (by one preference order); 
   $p$ also wins the head-to-head contest against $d$ (by one preference order).
   Each candidate in $V(G) \cup D$ wins exactly half of the head-to-head contests against the other members of $V(G) \cup D$ (i.e., each member of $V(G) \cup D$ wins head-to-head contests against exactly  $\lfloor 3n/2\rfloor$ other members
   of $V(G) \cup D$; recall that $n$ is odd).
   
   A formal definition of the preference orders of voters in~$V_{\structure}$ reads as follows. 
   Let $D_1\subseteq D$ be a subset of $k+1$ (arbitrary) candidates.
  \begin{enumerate}[a)]
  \item There is one voter with preference order $\seq{V(G) \cup D} \pref p \pref d$.
  \item There are $(k-2)$ pairs of voters with preference orders
    \begin{align*}
      \seq{V(G)} \pref p \pref d \pref \seq{D}\text{\quad and \quad}
      \revseq{D} \pref d \pref \revseq{V(G)} \pref p.
    \end{align*}
  \item There are two voters with preference orders
    \begin{align*}
      d \pref \seq{D} \pref p \pref \seq{V(G)} \text{\quad and \quad}
      \revseq{V(G)} \pref p \pref d \pref \revseq{D}.
    \end{align*}
  \item There are two voters with preference orders
    \begin{align*}
      & p \pref \seq{D\setminus D_1} \pref \seq{D_1} \pref \seq{V(G)} \pref d \text{\quad and \quad} \\
      & d \pref \revseq{V(G)} \pref \revseq{D_1} \pref p \pref \revseq{D\setminus D_1}.
    \end{align*}
    \item %  For technical reasons, we rename the candidates in $V(G)$ to be
   % $a_1, \ldots, a_n$ and the candidates in $D$ to be $a_{n+1},
   % \ldots, a_{2n}$.
   % Now, 
   For each candidate~$x \in V(G)\cup D$,  
   % when we use the preference order construction from
   % $\halfseqone$ and $\halfseqtwo$ (see \cref{subsec:conventions} for
   % the definitions).   
   % There is one voter with preference
   %  order~$d \pref \revseq{D} \pref \revseq{V(G)}$. Note that 
   %    For each candidate~$a' \in V(G)\cup D \setminus \{a\}$,
   there are two voters with preference orders
   \begin{align*}
     & \halfseqone(V(G)\cup D, x) \pref p \pref d \text{\quad and \quad}\\
     & d \pref p \pref \halfseqtwo(V(G)\cup D, x).
   \end{align*}
   (See \cref{subsec:conventions} for the construction of
   preference orders $\halfseqone$ and $\halfseqtwo$.)
  \end{enumerate}
     
  Finally, each voter in $V_\structure$ has all-or-nothing price function $\pi$ such that $\pi(0) = 0$ and
   $\pi(j) = B+1$ for each $j > 0$.
  \end{enumerate}

  Note that due to the budget, we can only afford to bribe the voters in~$V_\edges$.

  Prior to any bribery, $p$ has $2n-k$ points, $d$ has $2n$ points,
  and each other candidate has at most $\lfloor3n/2\rfloor+2$
  points. This % By our assumptions, this 
  means that $d$ is the unique winner
  of $E$. We claim that there is a shift action $\vv{s}$ such that
  $\Pi(\vv{s}) \leq B$ and $p$ is a winner of $\shift(E,\vv{s})$ if and
  only if $G$ has a clique of order $k$.

  Assume that there is a successful shift action that ensures $p$'s victory. 
  Given our price functions, we can bribe up to
  $k(k-1)/2$ voters in $V_\edges$, in each case to shift $p$
  by two positions to the top. Further, it is clear that a successful
  shift action must ensure that $p$ obtains $k$ additional points (it
  is impossible to decrease the score of $d$). This means that there
  is some set $Q \subseteq V(G)$ with at least $k$ candidates such
  that $p$ passes each candidate from~$Q$ at least $k-1$ times (if $x
  \in Q$ and $p$ passes $x$ in $k-1$ preference orders, then the value $N_E(x,p) -
  N_E(p,x)$ changes from $2k-3$ to $-1$ and $p$ wins the head-to-head
  contest against $x$).
  Given 
%the voters in $V_\edges$, with budget $B = k(k-1)/2$ we
  our budget, we can altogether shift~$p$ by $t = k(k-1)$ positions (each
  two positions correspond to an edge in~$G$; $t$ is the value of our parameter). 
  Thus $Q$ contains
  exactly $k$ candidates, in each unit shift $p$ passes one of them,
  and $p$ passes each candidate from $Q$ exactly $k-1$ times. This is
  possible if and only if candidates in $Q$ correspond to a clique in
  $G$.

  On the contrary, if $Q$ is a set of $k$ vertices from $V(G)$ that
  form a clique, then shifting $p$ forward in $B$ preference orders from
  $V_\edges$ that correspond to the edges of the graph induced by $Q$ ensures $p$'s victory.

\paragraph{Unit prices}
  The above argument can be adapted to the case of unit prices. 
  The key trick is to insert sufficiently many ``filler'' candidates directly in front of $p$ in the preference orders of some voters 
  who are not intended to be affected. This simulates the effect of all-or-nothing prices.

  We again reduce from \probClique parameterized by the order of the clique. 
  The budget~$B$ is set to $3 k\cdot (k-1) / 2$.
  We use the same notation for the input instance $(G,k)$ of \probClique.
  Without loss of generality, we assume that $(n+m)$ is odd.
  We set the candidate set~$C$ to be $V(G)\cup \{p,d\}\cup D_1 \cup D_2 \cup F \cup R$ 
  where:
	\begin{enumerate}
		\item $p$ is our preferred candidate,
		\item $d$ will be the unique winner in the constructed election~$E$,
		\item $D_1$ is a set of $4(m+n)-(k\cdot (k-1)/2+k+1)$ dummy candidates, 
		\item $D_2$ is a set of $k\cdot (k-1)/2+k+1$ dummy candidates,
		\item $F\coloneqq F_1 \cup F_2$ consists of two sets of $B$ ``filler candidates'' each, and
		\item $R\coloneqq\{q_e \mid e \in E(G)\}$ is a set of $|E(G)|$ ``edge candidates''.
	\end{enumerate}

  The filler candidates are used to simulate the affect of prices higher than the budget~$B$
  by forming a ``wall'' of $B$ candidates in front of~$p$ against which~$p$ wins anyway.
  The edge candidates together with a budget blow-up of $k\cdot (k-1)/2$ in the score difference
  between~$d$ and~$p$ are used to enforce that one has to shift~$p$ to the top position
  whenever one bribes a voter.
  
  We form the voter set~$V$ as follows:
  \begin{enumerate}[i)]
  \item For each edge~$e=\{x,y\}$ in $E(G)$, add two voters~$v_e, v'_e$ with preference orders
    \begin{align*}
		v_{e}\colon & q_e \pref x \pref y \pref p \pref \seq{F} \pref \seq{C\setminus (\{x,y,p,q_e\}\cup F)} \text{\quad and }\\
		v'_{e}\colon & \revseq{C\setminus (\{x,y,p,q_e\}\cup F)} \pref \revseq{F} \pref p \pref y \pref x \pref q_e\text{,}
    \end{align*}
	 that is, $v_e'$ has the reversed preference order of~$v_e$.
    Let $V_{\edges}$ be the set of all voters~$v_e$.
  \item Add one voter with preference order
    \[
     \seq{R} \pref \seq{V(G)} \pref d \pref \seq{D_2} \pref \seq{F} \pref p \pref \seq{D_1}\text{.}
    \]
  \item Add two voters with preference orders (the boldfaced parts highlight the parts in the two voters that do \emph{not} occur reversed.)
    \begin{align*}
      \boldsymbol{p \pref d} \pref \seq{F}\pref \seq{C \setminus (\{p,d\} \cup F)} &\text{\quad and }\\
      \revseq{C \setminus (\{p,d\} \cup F)} \pref \revseq{F} \pref \boldsymbol{p \pref d} \text{.}&
    \end{align*}
  \item Add $k-2$ pairs of voters with preference orders
    \begin{align*}
    \boldsymbol{\seq{{V(G)}}\pref\seq{F}\pref p} \pref \seq{C \setminus (\{p\} \cup F \cup V(G))} &\text{\quad and }\\
    \revseq{C \setminus (\{p\} \cup F \cup V(G))} \pref \boldsymbol{\revseq{V(G)} \pref \revseq{F} \pref p}\text{.}&
  \end{align*}

  \item Add $k-1$ four tuples of voters with preference orders
    \begin{align*}
      \boldsymbol{p \pref \seq{D_1} \pref \seq{F_1}} \pref \seq{F_2} \pref \seq{C \setminus (\{p\} \cup F \cup D_1)} &\text{,}\\
      \revseq{C \setminus (\{p\} \cup F \cup D_1)} \pref \revseq{F_2} \pref \boldsymbol{p \pref \revseq{D_1} \pref \revseq{F_1}} &\text{,}\\
      \boldsymbol{p \pref \seq{D_1} \pref \seq{F_2}} \pref \seq{F_1} \pref \seq{C \setminus (\{p\} \cup F \cup D_1)} &\text{,\quad and }\\
      \revseq{C \setminus (\{p\} \cup F \cup D_1)} \pref \revseq{F_1} \pref \boldsymbol{p \pref \revseq{D_1} \pref \revseq{F_2}} &\text{.}
    \end{align*}
  \item Set $A=V(G)\cup D_1\cup D_2 \cup R$. 
    For each candidate~$x\in A$, 
    add one pair of voters with preference orders (see \cref{subsec:conventions} for the definitions of $\halfseqone$ and $\halfseqtwo$)
    \begin{align*}
      & \halfseqone(\seq{A},x) \pref \seq{F} \pref p \pref d\text{ \quad and }\\
      & d \pref p \pref \revseq{F} \pref \halfseqtwo(\seq{A},x)\text{.}
    \end{align*}
  \end{enumerate}
\newcommand{\mr}[2]{\multirow{#1}{*}{#2}}

% \begin{table}
%   \begin{tabular} {|c|lclcl|}
%     \hline
%   candidate & \multicolumn{5}{c|}{head-to-head contests won} \\
%   \hline
%   \hline
%   \mr{2}{$d$}   & $4(m+n)$ & +& $2B$ &&\\
%                 & (against $D_1 \cup D_2$)&& (against $F$) &&  \\
%   \hline
%   \mr{3}{$p$}   & $4(m+n)-$ & + & $2B$  & +  & $1$ \\
%                 & $(k(k-1)/2)+k+1)$ & & (against $F$) && (against $d$) \\
%                 &  (against $D_1$)  & & & &  \\
%   \hline
%   \mr{3}{$\in D_1$} &  $2(m+n)+\lfloor(n+m)/2\rfloor$  & + & $2B$  &&                 \\
%                 & (against half of && (against $F$) && \\
%                 & $V(G)\cup D_1 \cup D_2 \cup R$) &&  && \\
%   \hline
%   \mr{3}{$\in D_2$} & $2(m+n)+\lfloor(n+m)/2\rfloor$ &+& $2B$ & + & $1$   \\
%                 & (against half of &&(against $F$) && (against $p$)\\
%                 & $V(G)\cup D_1 \cup D_2 \cup R$) && && \\
%   \hline
%   \mr{3}{$\in V(G)$} & $2(m+n)+\lfloor(n+m)/2\rfloor$ &+& $2B$ &+& $2$    \\
%                 & (against half of &&(against $F$)&& (against\\
%                 & $V(G)\cup D_1 \cup D_2 \cup R$) &&&&$p$ and $d$)\\
%   \hline
%   $\in F$ &  $\le 2B$ (against $F$)     &&&&\\
%   \hline
%   \mr{3}{$\in R$} & $2(m+n)+\lfloor(n+m)/2\rfloor$ & + & $2B$ &+& $2$ \\
%                 & (against half of &&(against $F$) && (against \\
%                 & $V(G)\cup D_1 \cup D_2 \cup R$) &&&& $p$ and $d$)\\
%                 \hline
%   %\hline
%  \end{tabular}
\newcommand{\against}{}
 \begin{table}[t]
  \begin{tabular} {|l|l|l|}
    \hline
  \multicolumn{1}{|c|}{candidate} & \multicolumn{1}{c|}{head-to-head contests won} & \multicolumn{1}{c|}{against}\\
  \hline
  \hline
  \mr{2}{$d$}   &  $4(m+n)$ & (\against $D_1 \cup D_2$)\\
                & $+$ $ 2 B$&  (\against $F$)\\
  \hline
  \mr{3}{$p$}   &  $4(m+n)- (\frac{k\cdot (k-1)}{2}+k+1)$ & (\against $D_1$)\\
                &  $+$ $2B$  & (\against $F$) \\
                &  $+$ $1$ & (\against $d$)\\
  \hline
  \mr{2}{$\in D_1$} & $2(m+n)+\lfloor(n+m)/2\rfloor$  & (\against half of  $V(G)\cup D_1 \cup D_2 \cup R$)     \\
                & $+$ $2B$ & (\against $F$)  \\
  \hline
  \mr{3}{$\in D_2$} &  $2(m+n)+\lfloor(n+m)/2\rfloor$ & (\against half of $V(G)\cup D_1 \cup D_2 \cup R$) \\
                & $+$ $2B$ & (\against $F$) \\
                & $+$ $1$ &(\against $p$) \\
  \hline
  \mr{3}{$\in V(G)$} & $2(m+n)+\lfloor(n+m)/2\rfloor$ & (\against half of $V(G)\cup D_1 \cup D_2 \cup R$)  \\
                & $+$ $2B$ & (\against $F$)\\
                & $+$ $2$  & (\against $p$ and $d$)\\
  \hline
  $\in F$ &  $\le$ $ 2B$ & (\against $F$) \\
  \hline
  \mr{3}{$\in R$} & $2(m+n)+\lfloor(n+m)/2\rfloor$ & (\against half of $V(G)\cup D_1 \cup D_2 \cup R$)\\
                & $+$ $2B$ &(\against $F$) \\
                & $+$ $2$ & (\against  $p$ and $d$)\\
                \hline
  %\hline
 \end{tabular}

 \caption{Head-to-head contests won in the construction.
   Except for the candidate~$d$, every other candidate wins less than $4(m+n)+ 2B$
   head-to-head contests.}
 \label{tab:w1-budget-copeland-contests}
\end{table}

One can check that the score difference between $p$ and the unique winner~$d$ is $k+k\cdot (k-1)/2$, see \cref{tab:w1-budget-copeland-contests} for details.
% Using a similar reasoning as for the case of all-or-nothing prices, 
% we can show that the \probClique instance has a clique of order $k$ if and only if in the constructed election for unit prices, our preferred candidate~$p$ can be made a winner after $3/2 k\cdot (k-1)$ unit shifts.

Now, observe that with budget $B$, $p$ can only gain any additional points by
defeating candidates from $V(G)\cup R$ and by passing candidates in the
preference orders of the voters in $V_{\edges}$(the candidates from $F$ block all
other possibilities).
Since in the election~$E$, each vertex candidate~$c_i\in V(G)$ wins against
$p$ by $2k-3$ voters and each edge candidate~$q_e$ wins against $p$ by $1$ voter, 
the only possibility for $p$ to gain at least $k\cdot (k-1)/2$ additional points (within budget~$B$) is to 
win against $k\cdot (k-1)/2$ candidates from~$R$.
This means that for every successful shift action there is some set $E' \subseteq E(G), |E'|=k\cdot (k-1)/2$
such that $p$~was shifted to the top
position in the preference order of each the voter~$v_e$, $e \in E'$.
To achieve this, the whole budget must be used.
To gain the additional $k$ points (not from contests against candidates from $R$),
$p$ must have been shifted in front of $k$ vertex candidates $k-1$ times each.
This is only possible if $G[E']$ induces a clique of size~$k$.
\end{proof}

\subsection{Number of Affected Voters and Budget} 
\label{subsec:num affected voters}
For the number of affected voters, \ShiftBribery is $\wtwo$-hard for
each of Borda, Maximin, and Copeland$^\alpha$, and this is true for each family of
price functions that we consider: The result for all-or-nothing prices
follows almost directly from the $\np$-hardness proofs due to
Elkind et al.~\citep{elk-fal-sli:c:swap-bribery}.
Their reductions have to be adapted to work for \probSC (see below for the definition) 
rather than its restricted variant, \textsc{Exact Cover by 3-Sets}~\cite{gar-joh:b:int}, but this can be done quite easily. 
To obtain the result for unit prices, with some effort, it is still possible to carefully
modify their proofs, maintaining their main ideas.
Our proofs are included in~\ref{sec:omitted}.

\newcommand{\corBordawtwonumberaffectedvoters}{
  \probShiftBBorda parameterized by the number of affected voters is 
  $\wtwo$-hard for each price function family that we consider.
}

\newcommand{\corMaximinwtwonumberaffectedvoters}{
  \probShiftBMaximin parameterized by the number of affected voters is 
  $\wtwo$-hard for each price function family that we consider.
}

\newcommand{\corCopelandwtwonumberaffectedvoters}{
  \probShiftBCopeland parameterized by the number of affected voters is 
  $\wtwo$-hard for each price function family that we consider.
}

\newcommand{\corwtwonumberaffectedvoters}{
  \textsc{Borda, Maximin}, and \probShiftBCopeland parameterized by the number of affected voters are 
  $\wtwo$-hard for each price function family that we consider.
}

% \begin{corollary}\label{cor:Borda-w[2]-h-num_voters_affected}
%   \corBordawtwonumberaffectedvoters
% \end{corollary}
% 
% 
% \begin{corollary}\label{cor:Maximin-w[2]-h-num_voters_affected}
%   \corMaximinwtwonumberaffectedvoters
% \end{corollary}
% 
% 
% \begin{corollary}\label{cor:Copeland-w[2]-h-num_voters_affected}
%   \corCopelandwtwonumberaffectedvoters
% \end{corollary}

\begin{theorem}\label{thm:w[2]-h-num_voters_affected}
	\corwtwonumberaffectedvoters
\end{theorem}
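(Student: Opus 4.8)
The plan is to give a parameterized reduction from \probSC{} --- which, by the definition of \wtwo{} in \cref{subsec:parameterized-complexity}, is the defining complete problem for \wtwo{} when parameterized by the solution size~$k$ --- to each of the three problems, so that the number of affected voters in a successful shift action equals~$k$. Given a \probSC{} instance with family $\setFamily=(S_1,\dots,S_m)$ over universe $\setUniverse=\{u_1,\dots,u_n\}$, I would introduce one \emph{element candidate}~$c_i$ per element~$u_i$, a distinguished candidate that is the unique current winner, and one \emph{set voter}~$v_j$ per set~$S_j$, arranging the preference order of~$v_j$ so that exactly the element candidates $\{c_i\mid u_i\in S_j\}$ sit directly above~$p$. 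The intended correspondence is that bribing~$v_j$ shifts~$p$ past precisely these candidates, i.e., ``covers'' the elements of~$S_j$; a successful bribery then corresponds to a subfamily of~$\setFamily$ covering~$\setUniverse$, and the number of affected voters equals the number of chosen sets. Compared with the $\np$-hardness reductions of Elkind et al.~\citep{elk-fal-sli:c:swap-bribery}, the only change needed for the all-or-nothing case is to start from \probSC{} rather than from \textsc{Exact Cover by 3-Sets}; since each bribed set voter counts as one affected voter, the solution-size parameter translates directly into the number-of-affected-voters parameter.

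For Maximin and Copeland the covering mechanism is particularly clean because both rules depend only on pairwise contests. For Maximin I would balance the election (using McGarvey's theorem~\cite{mcg:j:election-graph} together with the point-pair and half-seq gadgets of \cref{subsec:conventions}) so that every element candidate~$c_i$ is a bottleneck of~$p$ with $N_E(p,c_i)$ one short of the value needed to tie the winner, while every other head-to-head value of~$p$ already meets that threshold. Shifting~$p$ above~$c_i$ in one set voter raises $N_E(p,c_i)$ by one, and since the Maximin score is a \emph{minimum} over pairwise comparisons, this helps only the~$c_i$ bottleneck and never inflates~$p$ elsewhere; thus~$p$ reaches the threshold if and only if every element is covered. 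Copeland is analogous, with each~$c_i$ tied with~$p$ in the initial head-to-head and a single shift flipping the contest in~$p$'s favour. For Borda the same covering idea works but the bookkeeping is more delicate, since passing a candidate also raises~$p$'s \emph{total} score; here I would follow the balancing of Elkind et al., realized via the point-pair gadget of \cref{subsec:conventions} (which fixes score differences and forces the required shift lengths), so that overtaking the covered candidates contributes exactly the necessary points and nothing more.

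To cover all five price-function families I would prove hardness twice. The construction above uses all-or-nothing prices, which by \cref{prop:inclusion-price-functions} ($\Pi_{0/1}\subset\Pi_\sort\subset\Pi_\all$) immediately yields \wtwo-hardness for sortable and arbitrary prices as well. For unit prices --- and hence also convex prices, via $\Pi_\unit\subset\Pi_\convex\subset\Pi_\all$ --- the all-or-nothing prices are unavailable, so I would adapt the construction using the \emph{filler} technique already employed in the proof of \cref{thm:Copeland-w[1]-c-num_shifts}: insert a wall of many filler candidates directly in front of~$p$ in every voter that should not be usefully bribable, and choose the scores so that~$p$ defeats all filler candidates regardless. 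Partially shifting~$p$ into such a wall is then useless, while shifting it far enough to reach the element candidates simulates the all-or-nothing jump to the top. With the budget tuned accordingly, a set voter is worth bribing only if~$p$ is driven all the way past its element candidates, which reproduces the covering behaviour and keeps the number of affected voters equal to~$k$.

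The main obstacle will be the unit-price case. Unlike with all-or-nothing prices, one cannot forbid the briber from touching an unwanted voter simply by setting its price above the budget; instead, such shifts must be made \emph{unprofitable} through the filler walls and the surrounding score balance, and one must rule out strategies that spread the budget over many short shifts or over more than~$k$ voters. Getting the filler counts, the initial score gaps, and the budget to interact so that the only successful shift actions are exactly those shifting~$p$ to the top in the set voters of a size-$k$ cover --- and verifying that no alternative distribution of unit shifts succeeds --- is the technically demanding part, and is precisely where the main ideas of Elkind et al.'s proofs must be reworked rather than merely transferred.
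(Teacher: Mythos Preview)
Your plan is correct and follows the paper's approach: reduce from \probSC, make each set a voter with exactly its element candidates ranked just above~$p$, handle all-or-nothing prices by pricing the non-set voters at~$B+1$, and handle unit prices via filler walls. The one structural detail you omit is that the paper additionally places a distinguished candidate ($d$ for Borda and Copeland, $g$ for Maximin) at the very \emph{top} of each set voter and tunes its score so that~$p$ must pass it in exactly~$k$ votes; combined with the budget and the filler walls in the remaining voters, this is what forces the briber to pick exactly~$k$ set voters and to shift~$p$ all the way to the top in each. For Maximin and Copeland your simpler variant (relying directly on the bound $n_a\le k$ and on only the element candidates being bottlenecks) would also go through, but for Borda with unit prices the filler wall alone is not enough---passing fillers still raises~$p$'s Borda score---and the top-candidate device is precisely what closes that gap; it is what you would recover once you actually ``follow the balancing of Elkind et al.''
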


% \subsection{Budget}
% \label{subsec:budget}
When we parameterize \ShiftBribery by the available budget, the
results fall between those from \cref{subsec:unit shifts,thm:w[2]-h-num_voters_affected}. 
In essence, the hardness proofs for all-or-nothing prices carry over from
the number of affected voters case to the budget case (and this
implies hardness for arbitrary prices and sortable prices), while the
results for the number of unit shifts carry over to the setting with
convex/unit prices. 

The hardness results translate because 
in the construction for all-or-nothing prices behind \cref{thm:w[2]-h-num_voters_affected} % \cref{cor:Borda-w[2]-h-num_voters_affected}, \cref{cor:Maximin-w[2]-h-num_voters_affected}, and \cref{cor:Copeland-w[2]-h-num_voters_affected},
the budget equals the parameter value
``solution size'' of the \probSC instance from which we reduce.
The $\classFPT$ results parameterized by the number of unit shifts
translate because for convex/unit prices the budget is an upper bound
on the number of possible unit shifts.
%  Doing so
% boils down to removing the padding candidates $F$ (they are necessary
% for the case of unit price functions, but are not necessary for the
% all-or-nothing case), setting the budgets to be $k$, and setting the
% price functions so that shifting $p$ to top costs $1$ for the voters
% that describe the input $\probSC$ instance (these are voters $v_1,
% v_3, v_5, \ldots, v_{2m+1}$ in the constructions from
% Corollary~\ref{cor:w[2]-h-num_voters_affected}); for all other voters
% we set the price of shifting $p$ to be $k+1$ (i.e., to exceed the
% budget). The $\classFPT$ results parametrized by the number of unit shifts
% translate because for convex/unit prices the budget is an upper bound
% on the number of possible unit shifts.
\newcommand{\corbordamaximinbudget}{
   \textsc{Borda} and \probShiftBMaximin parameterized by the budget~$B$ are $\wtwo$-hard for
  arbitrary, sortable, and all-or-nothing prices, and are in $\classFPT$
  for convex and unit prices with running time $O^*((2^{B}\cdot (B+1)\cdot B)^{B})$.
}

\begin{corollary}\label{cor:Borda+Maximin-budget}
 \corbordamaximinbudget
\end{corollary}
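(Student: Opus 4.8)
The plan is to derive both halves of the statement from results already in hand---\cref{thm:w[2]-h-num_voters_affected} together with \cref{prop:inclusion-price-functions} for the hardness, and \cref{thm:Borda-fpt-num_shifts,thm:Maximin-fpt-num_shifts} for the tractability---rather than to build fresh reductions or algorithms. The unifying observation is that the budget~$B$ is sandwiched between the two cost measures already analysed: for all-or-nothing prices bribing one voter costs a fixed amount (which we normalize to one), so the budget tracks the number of affected voters, whereas for unit and convex prices every \emph{paid} unit shift costs at least one, so the budget upper-bounds the number of unit shifts.

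For the hardness half I would revisit the all-or-nothing reduction behind \cref{thm:w[2]-h-num_voters_affected}. That reduction starts from \probSC with solution-size parameter~$k$ and builds a \ShiftBribery instance in which a cheap successful shift action corresponds to covering the universe with $k$~sets; each chosen set is realized by bribing a single voter at unit cost, so the budget of the constructed instance equals~$k$ as well. Hence the very same construction, now read with the budget as parameter, is a valid parameterized reduction from \probSC (parameter~$k$) to \probShiftBBorda and \probShiftBMaximin (parameter $B=k$), giving $\wtwo$-hardness for all-or-nothing prices. Since $\Pi_{0/1}\subset\Pi_\sort\subset\Pi_\all$ by \cref{prop:inclusion-price-functions}, every such hard instance is simultaneously a sortable and an arbitrary instance, so the hardness transfers to those two families.

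For the fixed-parameter tractability half I would reduce to the parameterization by the number~$t$ of unit shifts and invoke \cref{thm:Borda-fpt-num_shifts,thm:Maximin-fpt-num_shifts}, which run in time $O^*((2^{t}\cdot(t+1)\cdot t)^{t})$ for arbitrary prices. For unit prices this is immediate: since $\pi_i(\ell)=\ell$, any action of cost at most~$B$ uses at most~$B$ unit shifts, so we set $t\coloneqq B$. For convex prices a small preprocessing step is needed because the cheapest unit shifts may be free. I would first perform, in every vote, the maximal prefix of free shifts of~$p$; convexity guarantees that every further unit shift then costs at least one, so in the residual instance a cost-$B$ action again uses at most~$B$ unit shifts. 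Performing these free shifts preserves equivalence: shifting~$p$ further up never decreases~$p$'s Borda or Maximin score and never helps another candidate, so a successful action for the original instance induces a successful action of no larger cost for the residual one, and conversely. With $t\coloneqq B$ in the residual instance, \cref{thm:Borda-fpt-num_shifts,thm:Maximin-fpt-num_shifts} then yield the claimed running time $O^*((2^{B}\cdot(B+1)\cdot B)^{B})$.

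The main obstacle I anticipate is the convex case of the $\classFPT$ half---specifically the free-shift preprocessing and the accompanying monotonicity argument that it neither changes the answer nor inflates the cost; the hardness half is essentially an audit of the existing all-or-nothing reduction to confirm that its budget coincides with the \probSC solution size, followed by the routine inclusion argument from \cref{prop:inclusion-price-functions}.
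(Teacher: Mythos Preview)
Your proposal is correct and takes essentially the same approach as the paper: the hardness comes from the all-or-nothing construction behind \cref{thm:w[2]-h-num_voters_affected} (where the budget is set to the \probSC parameter~$k$) together with the inclusions of \cref{prop:inclusion-price-functions}, and the $\classFPT$ result follows because for convex/unit prices the budget bounds the number of unit shifts, reducing to \cref{thm:Borda-fpt-num_shifts,thm:Maximin-fpt-num_shifts}. Your free-shift preprocessing and monotonicity argument for convex prices is a detail the paper's one-line justification glosses over (it simply asserts that the budget upper-bounds the number of unit shifts), so your treatment is in fact slightly more careful than the paper's.
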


\newcommand{\corcopelandbudget}{%For each rational value~$\alpha$, $0 \leq \alpha \leq 1$,
  \probShiftBCopeland parameterized by the budget is
  $\wtwo$-hard for arbitrary, sortable, and all-or-nothing prices, and
  is $\wone$-hard for convex and unit prices.
}

For \probShiftBCopeland with unit prices, the budget equals the number of unit shifts in the reduction behind \cref{thm:Copeland-w[1]-c-num_shifts}. 
This implies the following corollary.

\begin{corollary}\label{cor:Copeland-budget}
  \corcopelandbudget
\end{corollary}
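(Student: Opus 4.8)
The plan is to obtain both parts of the corollary directly from the two hardness reductions that have already been built, by observing that in each of them the budget is bounded by a computable function of the source problem's parameter. Consequently the existing maps are already valid parameterized reductions when the parameter is taken to be the budget, and their correctness is inherited verbatim.

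For the $\wtwo$-hardness part I would reuse the Copeland construction behind \cref{thm:w[2]-h-num_voters_affected} that employs all-or-nothing prices. That reduction starts from \probSC with solution size~$k$ as the parameter and is arranged so that every successful shift action bribes exactly~$k$ voters at the same all-or-nothing price; in particular the budget~$B$ of the produced instance equals the solution size~$k$. Since the construction~$g_1$ is left unchanged and the new parameter~$B$ is a function of~$k$, the same map is a parameterized reduction for the budget parameter, establishing $\wtwo$-hardness for all-or-nothing prices. By the inclusions $\Pi_{0/1} \subset \Pi_\sort \subset \Pi_\all$ from \cref{prop:inclusion-price-functions}, the very same instance also witnesses $\wtwo$-hardness for sortable and for arbitrary prices.

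For the $\wone$-hardness part I would reuse the unit-price reduction behind \cref{thm:Copeland-w[1]-c-num_shifts}, which reduces from \probClique with clique size~$k$ and sets the budget to $B = 3k(k-1)/2$. The key observation is that under unit prices the cost $\Pi(\vv{s})$ of any shift action equals the total number of unit shifts it performs, so in this instance the budget parameter and the number-of-unit-shifts parameter coincide. As that reduction already proves $\wone$-hardness with respect to the number of unit shifts, and since~$B$ is a function of~$k$, the identical construction is a parameterized reduction for the budget, yielding $\wone$-hardness for unit prices; by $\Pi_\unit \subset \Pi_\convex$ the same instance is a legitimate convex-price instance, so the hardness transfers to convex prices as well. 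The only point needing care---and the closest thing to an obstacle---is checking that the budget in each reused reduction is genuinely bounded by a function of the respective parameter rather than of the instance size; inspecting the two constructions confirms this ($B = k$ in the \probSC case and $B = 3k(k-1)/2$ in the \probClique case), so no further correctness argument is required.
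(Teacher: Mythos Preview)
Your proposal is correct and matches the paper's own argument essentially verbatim: the paper also derives the $\wtwo$-hardness by reusing the all-or-nothing Copeland reduction behind \cref{thm:w[2]-h-num_voters_affected} (where $B=k$) and the $\wone$-hardness by reusing the unit-price reduction behind \cref{thm:Copeland-w[1]-c-num_shifts} (where the budget coincides with the number of unit shifts), together with the inclusions from \cref{prop:inclusion-price-functions}.
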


\section{Parameterizations by Election-Size Measures}\label{sec:pbesm}

In this section we consider \ShiftBribery parameterized by either the
number of candidates or the number of voters. Elections with few
candidates are natural in politics (for example, there is typically
only a handful of candidates in presidential elections) and elections
with few voters arise naturally in multiagent systems 

(for example, Dwork et al.~\citep{dwo-kum-nao-siv:c:rank-aggregation} suggested
election-based methods for aggregating results from several web search
engines; see also the work of Brandt et al.~\cite{bra-har-kar-see:c:few-voters} and of Chen et al.~\cite{che-fal-nie-tal:c:few-voters} for further
results and motivation regarding elections with few voters).  
For example, 
\citet{dwo-kum-nao-siv:c:rank-aggregation} suggested voting-based methods for aggregating results from several web search engines.
\citet{bet-guo-nie:j:dodgson-parametrized,bra-har-kar-see:c:few-voters,FelJanLokRosSau2010} 
 considered winner determination problems with few voters,
while \citet{che-fal-nie-tal:c:few-voters} considered voting control by adding alternatives with few voters.

\subsection{Number of Voters}

Let us now consider \ShiftBribery parameterized by the number of
voters. We have not found $\classFPT$ algorithms for our rules in this
setting, but we did find a general $\classFPT$-approximation scheme.
The idea of our algorithm is to use a scaling technique combined
with a brute-force search through the solution space. The scaling part
of the algorithm reduces the range of prices and then the brute-force
search finds a near-optimal solution. 
% The ideas underlying the proof
% are similar to those of Elkind and
% Faliszewski~\citep[Proposition~2]{elk-fal:c:shift-bribery}

\newcommand{\thmfptasnumvoters}{
  Let $\calR$ be a voting rule for which winner determination
  parameterized by the number~$n$ of voters is in $\classFPT$.
  There is a factor-$(1+\varepsilon)$ approximation algorithm solving $\calR$ \ShiftBribery
  in time $O^*(\lceil n/\varepsilon+1\rceil^n)$ times the cost of $\calR$'s winner
  determination.
}

\begin{theorem}\label{thm:fpt-as-num-voters}
 \thmfptasnumvoters
\end{theorem}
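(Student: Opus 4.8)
The plan is to combine a cost-rescaling step with an exhaustive search over a discretized solution space, calling the assumed $\classFPT$ winner-determination algorithm as a black box. First I would dispose of two easy cases: if $p$ already wins under the all-zero shift action I return it (here $\OPT = 0$), and I would test feasibility by shifting $p$ to the top in every vote and running winner determination once; if $p$ does not win even then, no shift action can help and I report failure. So from now on assume $0 < \OPT < \infty$.

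The key quantity to guess is $g := \max_i \pi_i(s^*_i)$, the largest price paid by a single voter in some fixed optimal action $\vv{s}^* = (s^*_1, \dots, s^*_n)$. Since $g$ is one of the at most $nm$ values $\pi_i(\ell)$ occurring in the input, I can simply iterate over all of them; note $g \le \OPT \le n\cdot g$. For the correct guess I set the scaling granularity $\delta := \varepsilon g / n$ and, for every voter $v_i$ and every level $c \in \{0, 1, \dots, \lceil n/\varepsilon\rceil\}$, I define $s_i(c) := \max\{\, \ell : \pi_i(\ell) \le c\cdot\delta \,\}$, the farthest $p$ can be shifted in $v_i$ while paying at most $c\delta$. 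Because each voter pays at most $g$ in the optimum, only these $\lceil n/\varepsilon\rceil + 1 = \lceil n/\varepsilon + 1\rceil$ levels are ever relevant. The search then ranges over all level vectors $(c_1, \dots, c_n) \in \{0, \dots, \lceil n/\varepsilon\rceil\}^n$; for each one I build the shift action $(s_1(c_1), \dots, s_n(c_n))$, invoke winner determination to test whether $p$ wins, and keep the cheapest winning action encountered over all guesses and all vectors.

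For the running time there are $\lceil n/\varepsilon + 1\rceil^n$ level vectors per guess, the number of guesses is polynomial (hence absorbed by $O^*$), and each vector triggers a single winner-determination call, giving the claimed bound. For correctness I focus on the correct guess $g$ and put $c^*_i := \lceil \pi_i(s^*_i)/\delta\rceil$. Then $c^*_i \le \lceil g/\delta\rceil = \lceil n/\varepsilon\rceil$, so the vector $(c^*_1, \dots, c^*_n)$ lies in the search space, while $c^*_i\delta \ge \pi_i(s^*_i)$ forces $s_i(c^*_i) \ge s^*_i$ componentwise. The cost of the associated action is $\sum_i \pi_i(s_i(c^*_i)) \le \sum_i c^*_i\delta \le \OPT + n\delta = \OPT + \varepsilon g \le (1+\varepsilon)\OPT$, using $g \le \OPT$. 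It remains to argue that this action actually wins: since its shift vector dominates $\vv{s}^*$ componentwise and $\vv{s}^*$ makes $p$ win, monotonicity in $p$'s position---shifting $p$ further up never turns $p$ from a winner into a loser, which holds for Borda, Maximin, and Copeland---guarantees $p$ still wins.

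I expect this last point to be the main obstacle: unlike the budget, the winning constraint cannot be approximated, so the rounded action must genuinely win. This is exactly why I round the spending budget \emph{up} (never down) and take the \emph{maximal} affordable shift $s_i(c)$ per level rather than an arbitrary representative; together with the monotonicity property this ensures that some vector in the search both wins and costs at most $(1+\varepsilon)\OPT$. The only secondary care point is the bookkeeping of constants, so that the per-voter level count collapses to exactly $\lceil n/\varepsilon + 1\rceil$ and the guessing overhead stays polynomial and thus inside the $O^*$.
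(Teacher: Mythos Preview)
Your proposal is correct and follows essentially the same approach as the paper: guess the maximum per-voter cost in an optimal solution, rescale by $\delta = \varepsilon g/n$, and brute-force over the $\lceil n/\varepsilon+1\rceil^n$ discretized per-voter budget vectors, taking the farthest affordable shift in each voter. Your explicit appeal to monotonicity (shifting $p$ further never destroys a win) is exactly the point the paper uses implicitly when arguing that the enumerated actions of the form $\costprimeshift(\vv{b})$ include a successful one no costlier than the optimum in the scaled instance.
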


\begin{proof}

  Let $\calR$ be our voting rule and let $I = (C,V,\Pi,p)$ be an
  instance of the optimization version of $\calR$ \ShiftBribery. 
  Further, let
  $\varepsilon > 0$ be the desired approximation parameter. We will
  show how to compute a $(1+\varepsilon)$-approximate solution for
  $I$.

  We will need the following notation.  For a vector $\vv{b} = (b_1,
  \ldots, b_n)\in \mathbb{N}^{n}$, shift action $\costshift(\vv{b}) =
  (\costshift_1(b_1), \ldots, \costshift_n(b_n))$ is such that for
  each $i$, $1 \leq i \leq n$, value~$\costshift_i(b_i)$ is the
  largest number~$t_i$ such that $\pi_i(t_i) \leq b_i$. In other
  words, $\costshift(\vv{b})$ is the shift action that shifts $p$ in
  each voter $v_i$'s preference order as much as possible, without exceeding
  the per-voter budget limit~$b_i$.

%   Intuitively, the idea of the algorithm is as follows.
%   First, guess the maximum budget to spend on a single voter.
%   Then, the basic idea is to rescale and round the price of each shift
%   to be the smallest integer~$x$ between $0$ and~$\lceil n/\epsilon \rceil$
%   such that the old price is at most $x \cdot \epsilon/n$ of the maximum budget
%   per voter.
  
  Let $m = \left|C\right|$ and $n = \left|V\right|$.  The algorithm works as
  follows. First, we guess a voter $v_i \in V$ and a number $j \in
  [m]$.  We set $\pi_{\max} = \pi_i(j)$. We interpret $\pi_{\max}$ as
  the cost of the most expensive shift within the optimal
  solution. (Note that there are only $n\cdot m$ choices of $v_i$ and $j$).
  We set $K = \varepsilon \cdot \pi_{\max}/n$ and define a list~$\Pi'$
  of $n$ price functions as follows. For each $v_i \in V$ and $j \in
  \{0, \ldots, m\}$, we set
  \[
     \pi'_i(j) = \begin{cases} 
        \left\lceil \frac{\pi_i(j)}{K}\right \rceil & \text{if } \pi_i(j) \leq \pi_{\max} \\
        \frac{n\cdot (n+1)}{\varepsilon}+1                  & \text{otherwise}.
     \end{cases}
  \]
  We form an instance $I' = (C,V,\Pi',p)$. Note that for each $i$, $1
  \leq i \leq n$, and each $j$, $0 \leq j \leq m$, we have that if
  $\pi_i(j) \leq \pi_{\max}$ then $\pi'_i(j) \leq
  n/\varepsilon+1$.

  We compute (if there exists one) a lowest-cost shift action
  $\vv{s^{_{'}}}$ for $I'$ that ensures $p$'s victory and does not use
  shifts that cost $n\cdot (n+1)/\varepsilon+1$. We can do so by
  considering all vectors $\vv{b} = (b_1, \ldots, b_n)$ where 
  each~$b_i$ is in $\{0, \ldots, \left\lceil n/\varepsilon
  \right\rceil\}$: For each $\vv{b}$ we check if $p$ is an
  $\calR$-winner of $\shift(E,\costprimeshift(\vv{b}))$ and, if so,
  we store $\costprimeshift(\vv{b})$. We take $s'$ to denote the
  cheapest stored shift action (we make an arbitrary choice if there
  are several stored shift actions of the same cost; if there is no
  stored shift action then it means that our guess of $\pi_{\max}$ was
  incorrect). This process requires considering
  $O(\lceil n/\varepsilon+1\rceil^n)$ shift actions. 

  After trying each guess for $\pi_{\max}$, we return the cheapest
  successful shift action $s'$ that we obtained. We claim that this
  shift action has cost at most $(1+\varepsilon)\cdot \OPT(I)$.

  Consider an iteration where the guess of $\pi_{\max}$ is
  correct. Let $K$, $\Pi'$, and~$s'$ be as in that iteration and let $s$
  be a lowest-cost successful shift action for~$I$ (i.e., $p$ is
  an $\calR$-winner of $\shift(E,s)$ and $\Pi(s) = \OPT(I)$).
  We have the following inequalities:
  \[
  \Pi(s') \leq K\cdot \Pi'(s') \leq K\cdot \Pi'(s) \leq \Pi(s) + K\cdot n \leq \Pi(s) +
  \varepsilon\cdot \pi_{\max}.
  \]
  The first inequality follows because of the rounding in $\Pi'$, the
  second inequality follows because $s'$ is optimal for $I'$, the
  third inequality follows because for each $i$, $1 \leq i \leq n$, and
  each $j$, $0 \leq j \leq m$, $K\cdot\left\lceil \pi_i(j)/K \right\rceil \leq \pi_i(j) + K$,
  and the final equality follows by the definition of $K$.
  Since $\pi_{\max} \leq \OPT(I)$ and by the above inequalities, we have:
  \[ 
    \Pi(s')  \leq \Pi(s) + \varepsilon\cdot \pi_{\max} \leq
    (1+\varepsilon)\cdot \text{OPT}(I) .
  \]
  Thus the algorithm returns a $(1+\varepsilon)$-approximate solution.
  Our estimates of the running time given throughout the analysis
  justify that the algorithm runs in $\classFPT$ time.
\end{proof}

%\cref{thm:fpt-as-num-voters} follows by combining a brute-force search with price scaling.

% For general price functions, finding an $\classFPT$ algorithm parameterized
% by the number of voters seems challenging, but it might be easier if
% we restrict the class of price functions.  Indeed, for the case of
% all-or-nothing price functions we can obtain a very simple $\classFPT$
% algorithm.

Is it possible to obtain full-fledged $\classFPT$ algorithms for the
parameterization by the number of voters?  For the case of
all-or-nothing price functions, we do provide a very simple
$\classFPT$ algorithm, but for the other price-function classes this
is impossible (under the assumption that $\classFPT \neq \wone$).

\newcommand{\propfptnumvotersallornothingprices}{
  Let $\calR$ be a voting rule for which winner determination
  parameterized by the number of voters is in $\classFPT$. 
  $\calR$ \ShiftBribery parameterized by the number of voters is
  in $\classFPT$ for all-or-nothing prices.
}

\begin{proposition}\label{prop:fpt-num_voters-0/1-prices}
 \propfptnumvotersallornothingprices
\end{proposition}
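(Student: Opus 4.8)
The plan is to exploit the defining feature of all-or-nothing prices: for each voter $v_i$ there is a single threshold cost $c_i$ such that \emph{any} nonzero shift costs exactly $c_i$, and the cost is independent of the shift length. This means that if we decide to bribe a voter at all, we should always shift $p$ all the way to the top of that voter's preference order, since this maximizes the benefit at no additional cost. Consequently, a shift action is fully determined (up to irrelevant choices) by the \emph{subset} $S \subseteq V$ of voters that we bribe: the resulting election is obtained by moving $p$ to the top in each voter in $S$ and leaving the others untouched.

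First I would observe that since there are only $n$ voters, there are only $2^n$ candidate subsets $S \subseteq V$ to consider. For each such subset, I would compute the resulting election $\shift(E,\vv{s}_S)$, where $\vv{s}_S$ shifts $p$ to the top in exactly the voters of $S$; this takes polynomial time. Then I would invoke the assumed $\classFPT$ winner-determination algorithm for $\calR$ (parameterized by $n$) to check whether $p$ is an $\calR$-winner of this modified election. Finally, among all subsets $S$ for which $p$ wins, I would select the one minimizing the total cost $\sum_{v_i \in S} c_i$ (equivalently, for the decision version, check whether any such $S$ has cost at most $B$). The overall running time is $O^*(2^n)$ times the cost of winner determination, which is $\classFPT$ in $n$ because winner determination is assumed to be $\classFPT$ in $n$.

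The correctness argument is the main thing to nail down, and it is the one genuinely nontrivial (though short) step. I would argue that restricting attention to ``top-shift'' actions loses nothing: given any successful shift action $\vv{s}$ with $\Pi(\vv{s}) \le B$, let $S = \{v_i : s_i > 0\}$ be the set of affected voters. Replacing $\vv{s}$ by the action $\vv{s}_S$ that shifts $p$ to the top in each voter of $S$ does not increase the cost, since for all-or-nothing prices $\pi_i(s_i) = c_i = \pi_i(\shift\text{-to-top})$ whenever $s_i > 0$. Moreover, shifting $p$ \emph{further} forward in any voter can only (weakly) increase $p$'s score against every competitor under any of our voting rules, because each additional unit shift either moves $p$ past some candidate $c$ (helping $p$ in the head-to-head contest $N(p,c)$, hence helping under Borda, Maximin, and Copeland) or does nothing. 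Hence if $p$ won under $\vv{s}$, then $p$ still wins under $\vv{s}_S$, so the top-shift action is also successful and no more expensive. This establishes that the brute-force search over subsets finds an optimal (or feasible, for the decision version) solution whenever one exists.

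I expect the monotonicity claim—that pushing $p$ higher never hurts $p$—to be the only point requiring care, and I would state it as a general property of the three rules rather than re-verifying it rule by rule: shifting $p$ up weakly increases $N_E(p,d)$ for every $d$ and leaves $N_E(c,d)$ unchanged for $c,d \neq p$, which immediately gives weak monotonicity of $p$'s Borda, Maximin, and Copeland scores while not increasing any opponent's score. Everything else is routine: the subset enumeration is $2^n$, the winner check is $\classFPT$ in $n$ by hypothesis, and cost minimization over the winning subsets is immediate.
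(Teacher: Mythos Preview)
Your proposal is correct and follows the same approach as the paper: enumerate all $2^n$ subsets of voters, shift $p$ to the top in each, and run winner determination. The paper's proof is a two-sentence sketch that leaves the ``shifting to the top is without loss of generality'' step implicit; you spell out the underlying monotonicity argument for Borda, Maximin, and Copeland, which is a reasonable and more careful elaboration of the same idea.
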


\begin{proof}
  Note that with all-or-nothing prices, it suffices to consider
  shift actions where for each voter's preference order
 we either shift the preferred
  candidate to the top or do not shift him or her at all. Thus, given an
  election with $n$ voters it suffices to try each of the $2^n$
  shift actions of this form.
\end{proof}

% Unfortunately, this result does not extend to the other of our price
% function classes. Indeed, 
For Copeland we show $\wone$-hardness already for the case of unit
prices, via a somewhat involved reduction from a variant of the
$\probClique$ problem.  For Borda the same result holds due to a very
recent argument of \citet{BreFalNieTal2016b} (and one can verify that
their technique generalizes to the case of Maximin).

\newcommand{\thmwonecopelandnumvoters}{
  % For each rational value~$\alpha$, $0 \leq \alpha \leq 1$, 
  \probShiftBCopeland parameterized by the number of voters is $\wone$-hard for unit prices.
}
\begin{theorem}\label{thm:W[1]h-num_voters-copeland}
  \thmwonecopelandnumvoters
\end{theorem}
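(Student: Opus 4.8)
The plan is to reduce from a colored/structured variant of \probClique --- most naturally \probMCClique (Multicolored Clique), where the vertex set is partitioned into $k$ color classes $V_1, \ldots, V_k$ and we seek a clique using exactly one vertex from each class. The key design challenge is that the number of voters must be bounded by a function of $k$ alone, since that is the parameter. This rules out the edge-gadget approach of \cref{thm:Copeland-w[1]-c-num_shifts}, where one voter per edge was used (that gave a bound on the \emph{number of shifts}, not the number of voters). Instead, the number of voters must be $O(f(k))$, so the entire graph structure has to be encoded in the \emph{price functions} and in the \emph{positions of candidates within a constant (in $k$) number of voters}, exploiting the fact that prices are not counted toward the parameter.

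First I would set up a candidate set that encodes the graph: one candidate $c_v$ for each vertex $v \in V(G)$, the preferred candidate $p$, a designated current winner $d$, and suitable dummy/filler candidates used (via McGarvey's theorem, \cite{mcg:j:election-graph}) to hard-wire a desired majority graph among the non-$p$ candidates. The intent is that $p$ can gain a Copeland point against a vertex candidate $c_v$ only by being shifted past $c_v$ in a small number of specially placed voters, and the \emph{unit} price of achieving the necessary shift in those voters encodes which vertices $v$ can be ``selected.'' Because prices are unrestricted, I can make shifting $p$ past the block of candidates corresponding to one color class cost exactly a prescribed amount, and calibrate the budget so that $p$ can afford to ``buy'' its way past exactly one vertex candidate per color class. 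The clique constraint --- that the selected vertices are pairwise adjacent --- then has to be enforced by making the head-to-head outcome between $p$ and an edge-encoding candidate (or between $p$ and the winner threshold) depend on \emph{pairs} of selected vertices being adjacent.

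The main obstacle --- and where I would spend the most care --- is encoding the quadratically many adjacency constraints using only $f(k)$ voters while charging only \emph{unit} prices. The standard trick in these few-voter $\wone$-hardness reductions (cf.\ the Copeland few-voters literature) is to arrange the vertex candidates of each color class $V_i$ in a contiguous block within a constant number of voters, and to have separate ``verification'' voters, one for each \emph{pair} of colors $(i,j)$, in whose preference orders the relative positions of the blocks are arranged so that shifting $p$ forward by the amount dictated by selecting $v \in V_i$ and $w \in V_j$ passes a critical threshold candidate if and only if $\{v,w\} \in E(G)$. Since there are only $\binom{k}{2}$ color pairs, this uses $O(k^2)$ verification voters, keeping the voter count polynomial in $k$. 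Translating ``select vertex $v$ in class $i$'' into a single consistent shift distance across all the verification voters simultaneously --- so that one shift action cannot cheat by selecting different vertices for different constraints --- is the delicate part; I would enforce consistency by routing all shifts in a given color class through one shared ``selector'' voter whose price function is unit-priced but whose candidate layout forces a fixed offset, and couple the budget tightly (total budget equal to the exact cost of one legal selection per class plus the cost of passing all $\binom{k}{2}$ adjacency thresholds) so that overshooting in any coordinate is infeasible.

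I would then prove correctness in the two standard directions: given a multicolored clique $Q = \{v_1, \ldots, v_k\}$ with $v_i \in V_i$, I construct the shift action that selects each $v_i$, verify it stays within budget and makes $p$ beat $d$ in Copeland score (here is where the odd-voter-count and McGarvey calibration must make $d$ the unique pre-bribery winner by exactly the right margin); conversely, from any successful shift action within budget I argue that the tight budget forces exactly one vertex selected per color class and that every adjacency threshold being crossed forces those selected vertices to be pairwise adjacent, yielding a multicolored clique. The running-time/parameter bookkeeping --- confirming the construction is computable in \classFPT\ time and that the new number of voters is bounded by a function of $k$ --- would complete the reduction; the Borda (and Maximin) counterpart is credited to \citet{BreFalNieTal2016b}, so I would only need the Copeland case here.
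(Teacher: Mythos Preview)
Your high-level plan --- reduce from \probMCClique with $O(f(k))$ voters, encoding the graph in candidate positions rather than in per-edge voters --- matches the paper, but two concrete elements of your sketch would not work as written.

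First, invoking McGarvey's theorem to ``hard-wire a desired majority graph'' is fatal here: McGarvey's construction realizes a tournament by adding \emph{voters} (polynomially many in the number of candidates), which destroys the very bound on the number of voters that the reduction must establish. The paper never uses McGarvey in this proof; instead it engineers all head-to-head margins directly within its $2k+3$ voters by inserting large blocks of dummy candidates (sets $D_1,D_2,D_3$ of size $n_G^5$, selection-candidate sets $S^{(i)}_j$ of size $n_G^3$, filler sets of size $2n_G^3$) at carefully chosen positions.

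Second, your separation into per-color ``selector'' voters and per-color-pair ``verification'' voters has an unresolved consistency problem. With unit prices the briber chooses the shift distance in each voter independently, so nothing forces the shift in the $(i,j)$-verification voter to equal ``the amount dictated by selecting $v\in V_i$ and $w\in V_j$.'' Routing through a shared selector voter does not help: a shift in one voter does not constrain shifts in another; only the global budget links them, and that alone is too coarse to pin down one vertex per color \emph{consistently across all pairs}. The paper's mechanism is different and avoids this entirely. For each color $i$ there are exactly two voters $v_{2i-1},v_{2i}$ whose ``interesting parts'' list the selection- and edge-candidate blocks for color $i$ in \emph{opposite} orders; a tight per-color budget then forces a complementary split of shifts between the two voters that singles out one index $j_i$. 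Crucially, the edge candidate $c_e$ for $e=\{x,y\}$ with $x$ of color $i$ and $y$ of color $j$ sits inside the color-$i$ voters \emph{and} the color-$j$ voters, with $N_E(c_e,p)-N_E(p,c_e)=7$, so $p$ must pass $c_e$ in all four of $v_{2i-1},v_{2i},v_{2j-1},v_{2j}$ --- which happens precisely when both endpoints are the selected vertices in their respective color gadgets. There are no separate verification voters; adjacency is checked by the same $2k$ selection voters. This coupling of selection and verification within a single pair of voters per color is the key idea your sketch is missing.
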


\newcommand{\numVertices}{\ensuremath{n_G}}
\newcommand{\numEdges}{\ensuremath{m_G}}
\newcommand{\numEdgesS}{\ensuremath{\frac{\Delta \cdot \numVertices}{2}}}
\newcommand{\numVerticesI}[1]{\ensuremath{\ell_{#1}}}
\newcommand{\selectColorCand}[2]{\ensuremath{S^{(#1)}_{#2}}}
\newcommand*{\LargerCdot}{\raisebox{-0.25ex}{\scalebox{1.4}{$\cdot$}}}
\newcommand{\selectSet}{\ensuremath{S}}
\newcommand{\fillerColorCand}[2]{\ensuremath{F^{(#1)}_{#2}}}
\newcommand{\fillerSet}{\ensuremath{F}}
\newcommand{\edgeCand}[1]{\ensuremath{c_{#1}}}
\newcommand{\edgeCandSet}{\ensuremath{R_{\edges}}}
\newcommand{\dummyCandSet}[1]{\ensuremath{D_{#1}}}
\newcommand{\dummySet}{\ensuremath{D}}

\newcommand{\incidentCands}[2]{\ensuremath{R^{(#1)}_{#2}}}
\newcommand{\arbincidentCands}{\ensuremath{R}}

\newcommand{\nameselectCand}{selection candidate\xspace}
\newcommand{\nameselectCands}{selection candidates\xspace}
\newcommand{\nameSelectCand}{Selection candidate\xspace}
\newcommand{\nameSelectCands}{Selection candidates\xspace}

\newcommand{\namefillerCand}{filler candidates\xspace}
\newcommand{\namefillerCands}{filler candidates\xspace}
\newcommand{\nameFillerCand}{Filler candidates\xspace}
\newcommand{\nameFillerCands}{Filler candidates\xspace}
\newcommand{\interestingPart}[1]{\ensuremath{\seq{P_{#1}}}}

\newcommand{\tmpdeg}{\numVerticesI{i}}

\begin{proof}
  	We give a reduction from \probMCClique---a variant of \probClique 
        %where the vertices in the input graph are colored with~$k$ colors and the task is to find a multicolored clique of size~$k$.
	which, given an undirected graph $G = (V(G), E(G))$, a non-negative integer $k\ge 0$, 
        and a coloring~$\col\colon V(G) \rightarrow [k]$,
        asks whether $G$~contains a size-$k$ \emph{multicolored clique}~$\clique\subseteq V(G)$,
        that is size-$k$ vertex subset~$\clique \subseteq V(G)$ 
        such that the vertices in $\clique$ are pairwise adjacent and 
        have pairwise distinct colors. 
	We assume in the following that~$G$ is regular, 
        that is, all vertices have the same degree, 
        and the coloring is proper, 
        that is, there are no edges between vertices of the same color.
	Note that even this restricted variant of \probMCClique parameterized by the clique size~$k$ 
        is $\wone$-hard~\cite{MS12}.

	Observe that in the given \probMCClique instance $(G,k,\col)$, 
        a \emph{multicolored clique} contains exactly one vertex of color~$i$ for each $i\in [k]$.
	We exploit this fact in our reduction which we now describe at high level: 
	For each color~$i \in [k]$ there is a color-gadget consisting of two voters~$v_{2i-1}$ and~$v_{2i}$.
	Each possible combination of shifting~$p$ 
        in the preference orders of these two voters will refer to a selection of one vertex of color~$i$.
	Each edge in~$E(G)$ is represented by one candidate;
        we call all candidates representing edges \emph{edge candidates}.
	In the constructed election, 
        our preferred candidate~$p$ loses all head-to-head contests against the edge candidates
        and has a score difference of $\binom{k}{2}$ with the original winner~$d$.
        With some technical gadget, 
        in order to make~$p$ a winner, 
        $p$ has to win head-to-head contests against at least~$\binom{k}{2}$ edge candidates after the shifting.
	Furthermore, 
        to win in a head-to-head contest against the edge candidate representing edge~$\{x,y\}$, 
        one has to ``select''~$x$ and~$y$ in the respective color-gadgets.
	Thus, if~$p$ can be made a winner, 
        then the corresponding selected vertices in the color-gadgets form a multicolored clique.
	We now describe the construction in detail.
	
	First, set~$\numVertices = |V(G)|$ and~$\numEdges = |E(G)|$,
        and let~$\Delta$ be the vertex degree in the regular graph~$G$, 
        implying~$\numEdges = \Delta \cdot \numVertices / 2$.
	%For technical reasons, 
        %we assume in the following without loss of generality that~$k \cdot \numEdges + \Delta \le \numVertices^3$.
        Furthermore, for each~$i \in [k]$, 
        let $\tmpdeg$ denote the number of vertices with color~$i$.
        %let~$X_i = \{x^i_1, \ldots, x^i_{\ell_i}\}$ be the set of vertices in~$G$ colored with~$i$.
	We start with initially empty sets~$V$ of voters and~$C$ of candidates.
	First, add to $C$ the preferred candidate~$p$ and the candidate~$d$.
        Candidate $d$ will be the unique winner in the original election.
        Then, we extend the set~$C$ of candidates by adding
        \begin{enumerate}[i)]
        \item for each color~$i \in [k]$ and each index~$j \in [\tmpdeg+1]$,
          add a set~$\selectColorCand{i}{j}$ of $\numVertices^3$~\emph{\nameselectCands}
          and a set~$\fillerColorCand{i}{j}$ of $2\numVertices^3$~\emph{\namefillerCands}; 
          we say that $\selectColorCand{i}{j}$ corresponds to the $j$th vertex with colored $i$ if $j\le \tmpdeg$.\\
          Let $\selectSet$ be the set of all \nameselectCands and let $\fillerSet$ be the set of all \namefillerCands.
          For each color~$i\in [k]$,
          define~$\selectColorCand{i}{\LargerCdot}\coloneqq\bigcup_{j=1}^{\tmpdeg}\selectColorCand{i}{j}$ and
          define~$\selectColorCand{-i}{\LargerCdot}\coloneqq\selectSet\setminus\selectColorCand{i}{\LargerCdot}$; 
          equivalently, define~$\fillerColorCand{i}{\LargerCdot}\coloneqq\bigcup_{j=1}^{\tmpdeg}\fillerColorCand{i}{j}$ and 
          define~$\fillerColorCand{-i}{\LargerCdot}\coloneqq\fillerSet\setminus\fillerColorCand{i}{\LargerCdot}$; 
        \item for each edge~$e\in E(G)$, 
          add an \emph{edge candidate}~$\edgeCand{e}$;
          we say that $\edgeCand{e}$~corresponds to edge~$e$.
          Denote by $\edgeCandSet$ the set of all edge candidates. 
          For each color~$i \in [k]$ and each index~$j \in [\tmpdeg]$, 
          we denote by $\incidentCands{i}{j}$ 
          the set of edge candidates corresponding to the edges that are incident to the $j$th vertex with color~$i$.
          Accordingly, we define $\incidentCands{i}{\LargerCdot}\coloneqq\bigcup_{j\in[\tmpdeg]}{\incidentCands{i}{j}}$ 
          and define $\incidentCands{-i}{\LargerCdot}\coloneqq\bigcup_{i' \in [k]\setminus\{i\}}{\incidentCands{i'}{\LargerCdot}}$;
        \item add three sets of dummy candidates, denoted by $\dummyCandSet{1}, \dummyCandSet{2}, \dummyCandSet{3}$, each of size~$\numVertices^5$.\\
          Define~$\dummySet\coloneqq\dummyCandSet{1}\cup \dummyCandSet{2}\cup
          \dummyCandSet{3}$.
          Accordingly, for each $i \in [3]$, we define $\dummyCandSet{-i} \coloneqq \dummySet
          \setminus \dummyCandSet{i}$.
        \end{enumerate}
        
        Note that $|C|=3\numVertices^3\cdot\sum_{i=1}^{k}(\tmpdeg+1)+\numEdges+3\numVertices^5 = 3(k + \numVertices + \numVertices^2)\cdot\numVertices^3 + \numEdges$.

	Next, we specify the voters.
	The voter set~$V$ consists of $2k + 3$ voters:~$v_{2i-1},v_{2i}$ for all~$i \in [k]$ and~$v_{2k+1}, v_{2k+2}, v_{2k+3}$.
	The first~$2k$ voters form the~$k$ color-gadgets as mentioned above. 
	The last three voters are necessary to obtain the desired scores.
	We begin with the color-gadgets. 
	For each color~$i \in [k]$, we first specify the ``interesting'' part
        of the preference orders of voters~$v_{2i-1}$ and~$v_{2i}$ by setting
        \noindent
        \begin{align*}
          \interestingPart{2i-1} & \coloneqq \seq{\selectColorCand{i}{2}}
          \pref \seq{\incidentCands{i}{1}} \pref 
          \seq{\fillerColorCand{i}{2}} \pref \ldots \pref
          \seq{\selectColorCand{i}{\tmpdeg+1}} \pref
          \seq{\incidentCands{i}{\tmpdeg}} \pref
          \seq{\fillerColorCand{i}{\tmpdeg+1}} \pref p,\\
          \interestingPart{2i} & \coloneqq \seq{\selectColorCand{i}{\tmpdeg}}
          \pref \seq{\incidentCands{i}{\tmpdeg}} \pref 
          \seq{\fillerColorCand{i}{\tmpdeg}} \pref \ldots \pref 
          \seq{\selectColorCand{i}{1}} \pref
          \seq{\incidentCands{i}{1}} \pref
          \seq{\fillerColorCand{i}{1}} \pref p.
	\end{align*}
        \noindent
        Note that both $\interestingPart{2i-1}$ as well as
        $\interestingPart{2i}$ have exactly $\tmpdeg \cdot
        (3\numVertices^3 + \Delta) + 1$~candidates.
	Having these ``interesting'' parts, we specify the preference
        orders of voters~$v_{2i-1}$ and $v_{2i}$ as
	% \begin{align*}
        %         v_{2i-1}\colon & d \pref \seq{\selectColorCand{i}{1}} \pref \seq{\fillerColorCand{i}{1}} \pref \seq{\dummyCandSet{1}} \pref \interestingPart{2i-1} \pref p
        %         \\ &\pref \seq{\dummySet \setminus \dummyCandSet{1}} 
        %         \pref \seq{\selectColorCand{-i}{\LargerCdot}\cup\incidentCands{-i}{\LargerCdot}} \pref \seq{\fillerColorCand{-i}{\LargerCdot}}\\
	% 	v_{2i}\colon & \revseq{\selectColorCand{-i}{\LargerCdot} \cup \incidentCands{-i}{\LargerCdot}} \pref \seq{\selectColorCand{i}{\tmpdeg+1}} \pref \seq{\fillerColorCand{i}{\tmpdeg+1}} \pref
        %          \seq{\dummySet \setminus \dummyCandSet{1}} \pref \interestingPart{2i} \pref p
        %         \\ & \pref \revseq{\dummyCandSet{1}} \pref \revseq{\fillerColorCand{-i}{\LargerCdot}}
	% \end{align*}
	\begin{align*}
                d \pref \seq{\selectColorCand{i}{1}} \pref
                \seq{\fillerColorCand{i}{1}} \pref
                \seq{\dummyCandSet{1}} \pref \interestingPart{2i-1}
                \pref \seq{\dummyCandSet{-1}} 
                \pref \seq{\selectColorCand{-i}{\LargerCdot}\cup\incidentCands{-i}{\LargerCdot}} \pref \seq{\fillerColorCand{-i}{\LargerCdot}},\\
		\revseq{\selectColorCand{{-i}}{\LargerCdot} \cup \incidentCands{{-i}}{\LargerCdot}} \pref \seq{\selectColorCand{i}{\tmpdeg+1}} \pref \seq{\fillerColorCand{i}{\tmpdeg+1}} \pref
                 \revseq{\dummyCandSet{-1}} \pref \interestingPart{2i} 
                \pref \revseq{\dummyCandSet{1}} \pref d \pref \revseq{\fillerColorCand{-i}{\LargerCdot}}.
	\end{align*}
	Let~$\arbincidentCands\subseteq \edgeCandSet$ be a set of $\binom{k}{2}$ arbitrary edge candidates and
        let $\arbincidentCands'\coloneqq\edgeCandSet\setminus \arbincidentCands$ be the set consisting of the remaining edge candidates.
        % and let $R\coloneqqC\setminus (\{d, p\} \cup \edgeCandSet \cup \dummySet)$.
	To complete~$V$, we set the preference orders of
        voters~$v_{2k+1}$, $v_{2k+2}$, and $v_{2k+3}$ as 
	\begin{align*}
		d \pref \seq{\edgeCandSet} \pref \seq{\dummyCandSet{1}} \pref p \pref \seq{\dummyCandSet{-1}} \pref \seq{\selectSet\cup\fillerSet}, \\
		\seq{\arbincidentCands'} \pref d \pref \seq{\arbincidentCands} \pref \seq{\dummyCandSet{2}} \pref p \pref \seq{\dummyCandSet{-2}} \pref \seq{\selectSet\cup\fillerSet}, \\
		\seq{\selectSet\cup\fillerSet} \pref \seq{\edgeCandSet} \pref \seq{\dummyCandSet{3}}  \pref p \pref  \seq{\dummyCandSet{-3}} \pref d.
	\end{align*}
	Finally, set the budget~$B \coloneqq (\numVertices+k) \cdot (3\numVertices^3 + \Delta)$ and the election~$E \coloneqq (C,V)$. 
	This completes the construction (recall that we consider unit prices).

	Before showing the correctness of the construction we determine the score of each candidate.
	Recall that the score of candidate~$d$ 
        in Copeland for odd number of voters is equal to the number of candidates against whom~$d$ wins the head-to-head contests.
	To this end, let $c_d \in D$, let $c_u \in S\cup F$, 
        and let $c_e \in \edgeCandSet$ with $e$ being an edge incident to the $r$th vertex of color $i$ and the $s$th vertex of color~$j$.
        The scores are:
	\begin{align*}
		\s{E}{d} 	= {} & {} |C \setminus
                \arbincidentCands'| - 1 = |C|-\numEdgesS+\binom{k}{2} - 1,\\
		\s{E}{p} 	= {} & {} |C\setminus
                (\edgeCandSet\cup\selectSet)| - 1 =
                |C|-\numEdgesS-\numVertices^3\cdot (\numVertices+k) - 1,\\
		\s{E}{c_d} 	< {} & {} |C\setminus \edgeCandSet| -
                1 = |C|-\numEdgesS - 1 < \s{E}{c},\\
		\s{E}{c_e}	\le {} & {} |C\setminus
                (\selectColorCand{i}{r} \cup
                \selectColorCand{i}{r+1} \cup
                \selectColorCand{j}{s} \cup \selectColorCand{j}{s+1})|  - 1 
                                \\= {} & {} |C|-4\numVertices^4  - 1 
				<  \s{E}{c},\\
		\s{E}{c_u}	\le {} & {} |C\setminus
                (\dummyCandSet{1}\cup \{d\})| - 1  < |C|-\numVertices^5 < \s{E}{c}.
	\end{align*}
	% \begin{align*}
	% 	\s{E}{d} 	= {} & {} |D \cup C([k]) \cup C(E_k) \cup\{p\}| \\= {}&{} 3\numVertices^5 + 3\numVertices^3(\numVertices + k) + \binom{k}{2} \\
	% 	\s{E}{p} 	= {} & {} |D \cup C^d([k])| = 3\numVertices^5 + 2\numVertices^3(\numVertices + k) \\
	% 	\s{E}{x} 	< {} & {} |D \cup C([k])| = 3\numVertices^5 + 3\numVertices^3(\numVertices + k) \\< {}&{} \s{E}{c}\\
	% 	\s{E}{c_e}	< {} & {} |D \cup C([k]) \cup C(E(G)) \cup \{c,p\}| - |C^s_{j_1}(i_1) \cup C^s_{j_1+1}(i_1) \cup C^s_{j_2}(i_2) \cup C^s_{j_2+1}(i_2)| \\
	% 					= {} & {} 3\numVertices^5 + 3\numVertices^3(\numVertices + k) + \numEdges + 2 - 4\numVertices^3 \\ < {}&{} \s{E}{c}\\
	% 	\s{E}{c_x}	< {} & {} |C([k]) \cup C(E(G)) \cup \{p\} \cup D_2 \cup D_3| \\
	% 					= {} & {} 3\numVertices^3(\numVertices + k) + \numEdges + 1 + 2\numVertices^5 \\ < {}&{}\s{E}{c}.
	% \end{align*}
	Hence, candidate~$d$ is the unique winner with score
        $|C|-\Delta\cdot \numVertices/2+\binom{k}{2} - 1$ 
        and~$p$ needs~$(\numVertices+k) \cdot n^3_G + \binom{k}{2}$ additional points to become a winner.
	Observe that in each preference order, 
        when $d$ beats $p$,
        it is at least~$\numVertices^5 > B$ positions ahead of~$p$.
	Hence, during the shifting of~$p$
        the score of~$d$ cannot be decreased, 
        implying that $p$ can only win by increasing its score.
        This can only happen if $p$ wins against candidates in~$\edgeCandSet \cup \selectSet$.
	Furthermore, note that for all~$c_u \in \selectSet$ and all~$\edgeCand{e} \in \edgeCandSet$,
        it holds that  
	\begin{align*}
		N_E(\edgeCand{e},p) & = N_E(p,\edgeCand{e}) + 7,\\
		N_E(c_u,p) & = N_E(p,c_u) + 1.
	\end{align*}
	In particular, 
        this means that~$p$ has to pass $c_u$ in just one voter's preference order in order to win against~$c_u$. 
	To win the contest against~$c_e$, however, 
        $p$ has to pass $c_e$ in at least four voters' preference orders.
	Furthermore, 
        the number of selection candidates is much higher than the number of edge candidates. 
	Hence, $p$ has to be shifted over basically all selection candidates to obtain the necessary points.
	Because of the distribution of the selection candidates, 
        it follows that the only way of winning against all selection candidates 
        is to spend~$(\tmpdeg+1) \cdot (3\numVertices^3 + \Delta)$~shifts in each pair of voters~$v_{2i-1}, v_{2i}$.
	This means that for each color~$i$, 
        we can select exactly one vertex of index~$j\in [\tmpdeg]$ 
        such that~$p$ passes in~$v_{2i-1}$'s and in~$v_{2i}$'s
        preference orders all candidates from~$\incidentCands{i}{j}$,
        implying that~$p$ wins against all edge candidates with the corresponding edges incident to the $j$th vertex of color~$i$.
	Thus, if~$p$ wins against~$\binom{k}{2}$ edge candidates, 
        then the corresponding edges describe a multicolored clique in~$G$.

	We now formalize this idea when proving the correctness of our construction: 
        $(G,k,\col)$ is a yes-instance of \probMCClique if and only if~$(E,p,B)$ is a yes-instance of \probShiftBCopeland with unit prices. 

        \medskip
	``$\Rightarrow:$''
	Let~$\clique \subseteq V(G)$ be a size-$k$ multicolored clique.
	We construct a shift-action making~$p$ a winner in~$\shift(E,\vv{s})$ with at most~$B$ shifts.
	To this end let~$\clique = \{u^{(1)}_{j_1}, \ldots, u^{(k)}_{j_k}\}$ be the multicolored clique
        such that for each $i \in [k]$, $u^i_{j_i}$ is the $j_i$th vertex of color~$i$.
	For each color~$i \in [k]$,
        perform the following shifts in $v_{2i-1}$'s and $v_{2i}$'s
        preference orders:
	Shift~$p$ in $v_{2i-1}$'s preference order directly in front
        of the first candidate in $\seq{\selectColorCand{i}{j_i+1}}$ 
        and shift~$p$ in $v_{2i}$'s preference order directly in front of the first candidate in $\seq{\selectColorCand{i}{j_i}}$.
%  and in~$v_{2i}$ in directly behind~$\vv{C(x_i)}$, that is, in~$v_{2i-1}$ \emph{and} in~$v_{2i}$ we shift~$p$ over~$\vv{C(I(x_i))}$.
	We perform no shifts in the last three preference orders.
	This completes the description of~$\vv{s}$.
	First, observe that altogether these are~$(\numVertices+k) \cdot (3\numVertices^3 + \Delta) = B$ shifts.
	Next, observe that $p$ passes in $v_{2i-1}$'s preference order all selection candidates from~$\bigcup_{r=j_i}^{\tmpdeg}\selectColorCand{i}{r+1}$.
	Furthermore, $p$ passes in $v_{2i}$'s preference order all selection candidates from~$\bigcup_{r \in [j_i]}\selectColorCand{i}{r}$.
	Recall that for all~$c_u \in \selectColorCand{i}{\LargerCdot}$,
        we have~$N_E(c_u,p) = N_E(p,c_u) + 1$.
        This implies that $p$ wins in~$\shift(E,\vv{s})$ against all selection candidates.
	Furthermore, note that for each edge~$e = \{u^i_{j_i}, u^{h}_{j_h}\}$ with~$u^i_{j_i}, u^{h}_{j_h} \in \clique$,
        $p$ also passes $\edgeCand{e}$ in the preference orders of
        voters~$v_{2i-1}, v_{2i}, v_{2j-1},$ and $v_{2j}$ as~$\edgeCand{e} \in \incidentCands{i}{j_i}$ and~$\edgeCand{e} \in \incidentCands{h}{j_h}$.
	As~$N_E(\edgeCand{e},p) = N_E(p,\edgeCand{e}) + 7$ for each~$\edgeCand{e} \in C(E(G))$, 
        $p$ wins against the~$\binom{k}{2}$ edge candidates corresponding to the edges in $G[\clique]$.
	Altogether $p$ gains~$(\numVertices+k) \cdot n^3_G + \binom{k}{2}$ additional points after the shifting and, 
        hence, $p$ is a winner in~$\shift(E,\vv{s})$.

        \medskip
	``$\Leftarrow:$'' Let $\vv{s} = (s_1, \ldots, s_{2k+3})$ be a shift action making~$p$ a winner in~$\shift(E,\vv{s})$. 
	This means that $p$ gains at least~$(\numVertices+k)\cdot n^3_G + \binom{k}{2}$ points through the shifting.
	% For convenience, for~$i \in [k]$ let~$s_{2i-1}$ and~$s_{2i}$ be the number of shifts in~$v_{2i-1}$ and~$v_{2i}$, respectively.
	% Furthermore, let $s_{2k+1}$, $s_{2k+2}$, and $s_{2k+3}$ denote the number of shifts in~$v_1$, $v_2$, and $v_3$.
	% Denote the number of vertices with color~$i$ by~$\ell_i$.
	As $p$ already wins in the original election~$E$ against all dummy candidates in~$D$ and 
        as $|D_1| = |D_2| = |D_3| > B$, by the construction of the preference orders,
        we can assume that~$p$ is shifted in no preference order from the last three voters, 
        and that $p$ is shifted at most~$\tmpdeg \cdot
        (3\numVertices^3 + \Delta)$ positions in~$v_{2i-1}$'s
        or~$v_{2i}$'s preference order, 
        that is, $s_{2k+1} = s_{2k+2} = s_{2k+3} = 0$, 
        $s_{2i-1} \le \tmpdeg \cdot (3\numVertices^3 + \Delta)$, 
        and~$s_{2i} \le \tmpdeg \cdot (3\numVertices^3 + \Delta)$.

	We next show that, for each color~$i$ there is at most one vertex of color $i$
        such that~$p$ is shifted in~$v_{2i-1}$'s and in~$v_{2i}$'s
        preference orders over some edge candidates corresponding to this vertex's incident edges.
	In other words, 
        if~$p$ wins in~$\shift(E, \vv{s})$ against the edge candidates~$\edgeCand{e_1}$ and~$\edgeCand{e_2}$ 
        with~$e_1 = \{x_1, y_1\}$ and~$e_2 = \{x_2,y_2\}$
        such that $x_1$ and~$x_2$ are both colored with~$i$, 
        then~$x_1 = x_2$.
	Suppose for the sake of contradiction that~$p$ wins in~$\shift(E, \vv{s})$ 
        against the edge candidates~$\edgeCand{e_1}$ and~$\edgeCand{e_2}$ 
        such that~$e_1 = \{x_1, y_1\}$, $e_2 = \{x_2,y_2\}$, $x_1$ and~$x_2$ are both colored with~$i$, and~$x_1 \neq x_2$.
	A lower bound for $s_{2i-1} + s_{2i}$ under this assumption can be seen as follows.
	Recall that in order to make~$p$ a winner one must shift~$p$ over each selection
	candidate from~$\selectColorCand{i}{\LargerCdot}$.
	This can only be done by shifting $p$ at least once (in $v_{2i-1}$ or $v_{2i}$) over
        each of altogether $\ell_i \cdot (3\numVertices^3 + \Delta)$ candidates from
        $X^{(i)}\coloneqq\selectColorCand{i}{\LargerCdot} \cup \incidentCands{i}{\LargerCdot}
        \cup \fillerColorCand{i}{\LargerCdot}$.
	Furthermore, at least two sets of filler candidates (each containing $2\numVertices^3$
        candidates and corresponding either to $x_1$ or to $x_2$) and the two edge candidates
        $\edgeCand{e_1}$ and~$\edgeCand{e_2}$ have to be passed by~$p$ a second time.
	Hence, $s_{2i-1} + s_{2i} \ge \ell_i \cdot (3\numVertices^3 +
        \Delta) + 4\numVertices^3 + 2 = (\tmpdeg+1) \cdot (3 \numVertices^3 + \Delta) + \numVertices^3 - \Delta + 2$.
	Thus, the remaining budget~$B'$ is:
	\begin{align*} 
		B' = {} & {}\sum_{j \in [k]\setminus \{i\}} s_{2j-1} + s_{2j} \\
			= {} & {} B - (s_{2i-1} + s_{2i}) \\ 
			\le {} & {} (\numVertices + k) \cdot
                        (3\numVertices^3+\Delta) - (\ell_i+1) \cdot (3 \numVertices^3 + \Delta) - \numVertices^3 + \Delta - 2 \\
% 			= {} & {} (\numVertices - \ell_i + 1 - k)(2\numVertices^3+\Delta) - \numVertices^3 - 1\\
			= {} & {}  -(\numVertices^3 - \Delta + 2) +
                        (3\numVertices^3+\Delta) \cdot \sum_{j \in [k]\setminus\{i\}} (\ell_j + 1)
	\end{align*}
	The first equation holds since $s_{2k+1} = s_{2k+2} = s_{2k+3} = 0$,
	To see the second equation and the inequality apply the definitions of $B'$ and $B$
        and the above discussed lower bound for $s_{2i-1} + s_{2i}$.

	Next, observe that in order to win in~$\shift(E,\vv{s})$, 
        $p$ has to win head-to-head contests against at least~$|\selectSet| - \numEdges$~selection candidates.
	This implies that for each~$j \in [k]$, 
        at least $(\ell_j +1) \cdot (3\numVertices^3 + \Delta) - \numEdges$ unit shifts in the color-gadget for color~$j$ are needed.
	The number of required unit shifts sum up to:
	\begin{align*}
		& \sum_{j \in [k]\setminus\{i\}} ((\numVerticesI{j}+1)
                \cdot (3 \numVertices^3 + \Delta) - \numEdges) \\
		= {} & {}  - (k-1) \cdot \numEdges +
                (3\numVertices^3+\Delta)\cdot \sum_{j \in [k]\setminus\{i\}} (\numVerticesI{j} + 1) \\
		> {} & {} - (\numVertices^3 - \Delta + 2) + (3\numVertices^3+\Delta)\cdot \sum_{j \in [k]\setminus\{i\}} (\numVerticesI{j} + 1).
	\end{align*}
	To see the last inequality recall that $\numEdges=\Delta \cdot \numVertices/2$
	and $k<\numVertices$ and observe that $\Delta \cdot \numVertices^2/2< \Delta \cdot \numVertices^3/2-\Delta+2$.
	Finally, it follows that the number of additional shifts needed to make $p$ a winner
        exceeds the remaining budget~$B'$; a contradiction.
% 	Hence, the remaining budget~$B'$ is insufficient to make~$p$ winning against $\numVertices^3 - \numEdges$~candidates from each set~$C^s_r(j)$ for all~$j \in [k]$ and~$r \in [\ell_i]$.
% 	This implies, that~$p$ does not win in~$\shift(E,\vv{s})$, a contradiction.

	Next, we show how to construct a multicolored clique given the shift-action $\vv{s}$ that makes~$p$ a winner. 
	Denote by~$X$ the set of edge candidates against whom $p$ wins already in~$\shift(E,\vv{s})$, 
        let~$E(X)$ be the corresponding set of edges, 
        and let~$V(X)$ be the vertices incident to these edges, 
        that is,~$V(X) = \bigcup_{e \in E(X)}e$.
	We now prove that~$V(X)$ forms a multicolored clique of size~$k$.
	We already showed above that for each pair~$e_1 = \{x_1, y_1\} \in E(X)$ and~$e_2 = \{x_2,y_2\} \in E(X)$ such that~$x_1$ and~$x_2$ are both colored with~$i$ it follows that~$x_1 = x_2$.
	Hence, $V(X)$ contains at most one vertex of each color, 
        implying that~$|V(X)| \le k$.
	Even though we assume that~$p$ wins against all selection candidates in~$\shift(E, \vv{s})$, 
        $p$ has to further win against at least~$\binom{k}{2}$ edge candidates in~$\shift(E, \vv{s})$ in order to have the same score as~$d$ and win the election.
	Hence, $|E(X)| \ge \binom{k}{2}$.
	Thus, $V(X)$ forms a clique of size exactly~$k$ in~$G$. 
	Since for each color there is at most one vertex in~$V(X)$ and there are~$k$ colors and thus~$V(X)$ is multicolored.
\end{proof}

% Since the Copeland rule is polynomial-time computable, the above
% $\wone$-hardness result shows that unless unlikely complexity class
% collapses occur, it is impossible to improve
% \autoref{thm:fpt-as-num-voters} to speak of an \emph{exact} $\classFPT$
% algorithm. Indeed, the $\classFPT$ approximation scheme from
% \autoref{thm:fpt-as-num-voters} seems to be the best we can hope for,
% for the parameterization by the number of voters.

Since determining the winner under the Copeland rule can be done in polynomial-time, 
the above $\wone$-hardness result shows that unless unlikely complexity class collapses occur, 
it is impossible to improve \cref{thm:fpt-as-num-voters} to provide an \emph{exact} $\classFPT$ algorithm. 
%Indeed, the $\classFPT$ approximation scheme from \cref{sb-thm:fpt-as-num-voters} 
%seems to be the best we can hope for when parameterizing by the number~$n$ of voters.
Since the publication of our conference paper~\cite{BreCheFalNicNie2014}, 
\citet{BreFalNieTal2016b} complemented our $\wone$-hardness result by showing that for the Borda and maximin rules, 
\ShiftBribery parameterized by the number of voters is $\wone$-hard.

\subsection{Number of Candidates}

As opposed to almost all other (unweighted) election problems ever
studied in computational social choice, for \ShiftBribery (and for
bribery problems in general) the parameterization by the number of
candidates is one of the most notorious ones. In other election
problems, the natural, standard attack is to give integer linear program~(ILP) formulations and
use Lenstra's algorithm~\cite{len:j:integer-fixed}.  For instance,
this has been applied to winner
determination~\cite{bar-tov-tri:j:who-won}, %for Dodgson
control~\cite{fal-hem-hem-rot:j:llull}, 
%\cite{fal-hem-hem-rot:c:llull}, %Copeland$^\alpha$~
possible winner~\cite{bet-hem-nie:c:parameterized-possible-winner,BreCheNieWal2015},
and lobbying~\cite{BCHKNSW14} problems.  This works because with
$m$~candidates there are at most $m!$~different preference orders and
we can have a variable for each of them in the ILP.  However, in our
setting this approach fails.  The reason is that in bribery problems
the voters are not only described by their preference orders, but also
by their prices.  This means that we cannot easily lump together a
group of voters with the same preference orders anymore, and we have
to treat each of them individually (however, for the case of sortable
and all-or-noting price function families, \citet{BreFalNieTal2015b}
found a novel way of employing the ILP approach to obtain $\classFPT$~algorithms).

Dorn and Schlotter \citep{dor-sch:j:parameterized-swap-bribery}
have already considered the complexity
of \textsc{Swap Bribery} parameterized by the number of candidates.
However, their proof implicitly assumes that each voter has the same
price function and, thus, it implies that \ShiftBribery (parameterized
by the number~$m$ of candidates) is in $\classFPT$ for unit prices, but not
necessarily for the other families of price functions. 
Whenever the number of different prices or different
price functions is upper-bounded by some constant or, at least, by some function
only depending on $m$, Dorn and Schlotter's approach
can be adapted.
% Some applications of this are captured by the following corollary.

% \newcommand{\corilpnumcandidatescases}{
%  IN IS NOT CLEAR WHETHER THE FOLLOWING IS CORRECT
%  Let $\calR$ be a voting rule for which winner-determination
%  parameterized by the number of candidates is in $\classFPT$.
%  HOWEVER CORRECT SHOULD BE AT LEAST
%  Let $\calR$ be a voting rule for which winner-determination
%  can be expressed via linear inequalities.
% \begin{itemize}
%  \item 
% (1) $\calR$ \ShiftBribery parameterized by the number of candidates
%   is in $\classFPT$ for unit prices.
%  \item 
% (2) $\calR$ \ShiftBribery parameterized by combined parameter
%    number of candidates and budget is in $\classFPT$.
%  \item 
% (3) $\calR$ \ShiftBribery parameterized by combined parameter
%   number of candidates and maximum value in the price functions is in $\classFPT$.
% \end{itemize}
% }
% 
% \begin{corollary}
%   \label{cor:candidate-ILP-cases}
%   \corilpnumcandidatescases
% \end{corollary}

For the more general price function families (i.e., for convex and
arbitrary price functions), we were neither able to find alternative
$\classFPT$ attacks nor to find hardness proofs (which, due to the
limited number of candidates one can use and the fact that the voters
are unweighted, seem particularly difficult to design).  However, as
in the case of parameterization by the number of voters
(\cref{thm:fpt-as-num-voters}), we can show that there is an
$\classFPT$-approximation scheme for the number of candidates when the
prices are sortable.

\newcommand{\classFPTasnumcandidates}{
  Let $\calR$ be a voting rule for which winner determination
  parameterized by the number~$m$ of candidates is in $\classFPT$.
%   There is an $\classFPT$-approximation scheme for $\calR$~\ShiftBribery
%   parameterized by the number of candidates for sortable prices.
% %
%   To obtain approximation ratio $(1+\varepsilon)$, the algorithm runs
%   in time $O^*(M^{M\lceil\ln (M/\varepsilon)\rceil+1})$ (where $M = m\cdot m!$) times the
%   cost of $\calR$'s winner determination.  
  There is a factor-$(1+2\varepsilon + \varepsilon^2)$ approximation algorithm solving $\calR$ \ShiftBribery for sortable prices
  in time $O^*(M^{M\cdot \lceil\ln (M/\varepsilon)\rceil+1})$ (where $M = m\cdot m!$) times the
  cost of $\calR$'s winner determination.  
}
\begin{theorem}\label{thm:fpt-as-num-candidates}
  \classFPTasnumcandidates
\end{theorem}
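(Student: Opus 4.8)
The plan is to work with the \emph{profile} of a shift action rather than with the action itself. For each of the at most $m!$ distinct preference orders $\sigma$ and each shift length $j\in\{0,\dots,m-1\}$, let $n_{\sigma,j}$ record how many voters whose preference order is $\sigma$ have $p$ shifted by exactly $j$ positions; there are at most $M=m\cdot m!$ such coordinates. The multiset of preference orders occurring in $\shift(E,\vv{s})$, and hence whether $p$ wins, depends only on this profile, so by the assumed fixed-parameter tractability of winner determination in $m$ I can test feasibility of any profile in $f(m)\cdot|I|^{O(1)}$ time. I would first record the monotonicity observation that, for Borda, Maximin, and Copeland, shifting $p$ further never decreases $p$'s score and never increases the score of any $c\neq p$; consequently the family of \emph{feasible} (winning) profiles is upward closed. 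This is what will let me round profiles upward without losing feasibility.

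Next I would pin down the cost of a fixed profile. Sortability gives, within each block of voters sharing a preference order, a single total preorder: if $\pi_i(\ell)>\pi_j(\ell)$ for some $\ell$, then $\pi_i(\ell')\ge\pi_j(\ell')$ for all $\ell'\ge\ell$, so one can sort the block so that $\pi_{(1)}(\ell)\le\pi_{(2)}(\ell)\le\cdots$ \emph{simultaneously} for every $\ell$. A standard exchange argument then shows that, for a fixed profile, the cheapest realization assigns the larger shifts to the cheaper voters of each block (a nonincreasing staircase in sorted order); this minimum cost $c(\vec n)$ is then computable in polynomial time. Thus the task reduces to minimizing $c(\vec n)$ over the upward-closed family of feasible profiles.

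The approximation engine combines price scaling with a geometric grid on the counts. As in the proof of \cref{thm:fpt-as-num-voters}, I would guess the most expensive single shift $\pi_{\max}$ used by an optimal solution (only $O(nm)$ candidate values), forbid every shift of price exceeding $\pi_{\max}$, and bucket the surviving prices into $O(M/\varepsilon)$ geometric classes of ratio $(1+\varepsilon)$. Refining each preference order by a voter's bucket-vector yields $f(m,\varepsilon)$ voter classes within which any two voters have prices agreeing up to a factor $(1+\varepsilon)$ at every level. I then enumerate, for each class and threshold level $\ell$, a guess for the cumulative count $q_{\cdot,\ell}=\sum_{j\ge\ell}n_{\cdot,j}$ drawn from a $(1+\varepsilon)$-geometric grid; for each guessed profile I test feasibility via the winner-determination oracle and evaluate $c(\vec n)$ by the sorted assignment, returning the cheapest feasible profile over all guesses of $\pi_{\max}$. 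The enumeration ranges over a grid whose size is bounded purely in $M$ and $\varepsilon$ (residual $\log n$ factors are absorbed into the $O^*$ via $(\log n)^{g(m,\varepsilon)}\le g(m,\varepsilon)^{g(m,\varepsilon)}\cdot n$), giving the stated $O^*(M^{M\lceil\ln(M/\varepsilon)\rceil+1})$ time, times the winner-determination cost.

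For correctness I would start from an optimal profile, round each cumulative count \emph{up} to the grid, and replace each true price by its rounded bucket value. Rounding up keeps the profile feasible by the monotonicity observation; within a class the extra voters grabbed when a count grows from $q$ to at most $(1+\varepsilon)q$ form only a $(1+\varepsilon)$ fraction of voters whose prices match those already paid up to $(1+\varepsilon)$, so the class's cost grows by at most a factor $(1+\varepsilon)$, and the price bucketing contributes one further $(1+\varepsilon)$; together these yield the factor $(1+2\varepsilon+\varepsilon^2)=(1+\varepsilon)^2$. The main obstacle, and the place where sortability is indispensable, is exactly this reconciliation: the counts $n_{\sigma,j}$ control \emph{feasibility} and so cannot be perturbed freely, yet the approximation must live on the \emph{cost}. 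Price bucketing makes voters within a class interchangeable, so that rounding counts upward is at once feasibility-preserving (monotonicity) and cost-cheap (near-equal prices), while the sorted-assignment lemma is what makes $c(\vec n)$ both well defined and computable. The details I would need to verify carefully are that capping forbidden shifts at $\pi_{\max}$ never destroys the optimal profile and that the grid rounding meshes correctly with the staircase structure of the optimal realization.
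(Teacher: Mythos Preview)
Your approach is genuinely different from the paper's, and it is worth spelling out both what the paper actually does and where your sketch still has gaps.

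The paper does \emph{not} bucket prices or enumerate counts on a geometric grid. Instead it works in \emph{budget space}. It represents a solution by a ``stepwise budget distribution'' $\vv{b}=(b_y^x)$ with $M=m'\cdot(m-1)\le m\cdot m!$ coordinates, where $b_y^x$ is the amount spent on pushing $p$ from level $y-1$ to level $y$ inside block $x$; sortability guarantees that $\costsortedshift(\vv{b})$ (shift as far as each coordinate's budget allows, in sorted order) is optimal for that distribution. The algorithm then runs a recursive pigeonhole search: some coordinate of the optimal $\vv{b}$ receives at least $B/M$, so guess which, allocate that chunk, and recurse on the remainder. After $M\lceil\ln(M/\varepsilon)\rceil{+}1$ levels only $\varepsilon B/M$ of the budget is unassigned; adding this slack to \emph{every} coordinate yields a vector $\vv{b''}\ge\vv{b}$ of total at most $(1+\varepsilon)B$, and a geometric search over $B$ contributes the second $(1+\varepsilon)$ factor. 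This directly yields the stated $O^*(M^{M\lceil\ln(M/\varepsilon)\rceil+1})$ time, and it uses monotonicity in exactly the place you identified (a larger stepwise action is still successful), though the paper leaves this implicit.

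Your profile-space route can probably be made into a valid $\classFPTAS$, but two steps are not yet justified. First, the claim that prices fall into ``$O(M/\varepsilon)$ geometric classes of ratio $(1+\varepsilon)$'' is not automatic: geometric bucketing yields $O(\varepsilon^{-1}\log(\pi_{\max}/\pi_{\min}))$ buckets, which is not bounded in $m,\varepsilon$ unless you first floor all prices below $\varepsilon\pi_{\max}/M$ to zero (and argue this costs only $\varepsilon\cdot\OPT$ extra). Second, even granting the bucketing, the number of refined voter classes is roughly $m!\cdot(\varepsilon^{-1}\log(M/\varepsilon))^{m-1}$, and the grid enumeration over cumulative counts then has exponent on the order of $m\cdot m!\cdot(\varepsilon^{-1}\log(M/\varepsilon))^{m-1}$; the $(\log n)^{g}$ trick makes this $\classFPT$, but it does \emph{not} give the specific bound $O^*(M^{M\lceil\ln(M/\varepsilon)\rceil+1})$ you asserted. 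So your algorithm and the paper's are different algorithms with different running times, and the running time you wrote down is the paper's, not yours.
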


\begin{proof} 
  Let $I=(C,V,\Pi,p)$ be an instance of (the optimization variant
  of) $\calR$~\ShiftBribery with $\Pi$ being sortable,
  and let $\varepsilon > 0$ be a rational number.
  We show how to compute a successful  shift action~$\vv{s}$
  for $I$ such that $\Pi(s) \leq (1+\varepsilon)\cdot\OPT$.
  Our algorithm will have $\classFPT$ running time for the combined parameter
  number of candidates and~$\varepsilon$.

  Before we start with the algorithm, we describe a helpful interpretation
  of solutions for $\calR$~\ShiftBribery with sortable prices.
  We call a set of voters all having the same preference order a \emph{voter block}.
  Let $m'\le m!$ denote the number of voter blocks in our election.
  We assume that the voters from each voter block appear consecutively
  and are ordered according to the price of shifting~$p$ to the top position 
  (if they did not, then we could sort them in this way). 
  Using the fact that the price function list~$\Pi$ in $I$ is sortable, 
  simple exchange arguments show that there is a successful
  minimum-cost shift action $\vv{s}$ such that for each two voters $v_i$ and $v_j$
  from the same block, it holds that $i<j$ implies $s_i\ge s_j$.
  Informally, this means that for each two voters with the same preference order, one
  never shifts $p$ farther in the preference order of the more expensive voter.

  Based on the above observation, we can conveniently express shift actions for
  our election in an alternative way. 
%  
%
  % Each single shift operation of some successful shift action can now be characterized
  % by the voter block it belongs to and by the position to which it shifts.
  % There are $m'm$ such \emph{shift types}.
  % As a consequence, one can express shift actions for sorted prices in an alternative way,
  % where we ask for the number of shift for each shift type instead of the number of
  % shift for each voters.
%
  We say that a \emph{stepwise shift action}
  $\vv{\mu}=(\mu_1^1,\dots,\mu_{m-1}^1,\mu_1^2,$
  $\dots,\mu_{m-1}^2,$ $\dots,\mu_1^{m'},$ $\dots,\mu_{m-1}^{m'})$
  is a vector of natural numbers, such that for each $x \in [m']$, $\mu_0^x,\dots,\mu_{m-1}^x$ describe the shifts
  $p$~makes in the block~$x$ in the following way: for each $y \in [m-1]$, $\mu_y^x$ is the number of voters in the $x$'th 
  block for whom we shift $p$ by at least $y$ positions.
  More precisely, we define $\sortedshift(E,\vv{\mu})$ to be the election $E'=(C,V')$ identical
  to $E$, except that $p$ has been shifted forward in $\mu_y^x$ voters of block~$x$
  by at least $y$~positions.
  Moreover, a stepwise shift action $\vv{\mu}$ is \emph{valid} if for each
  $x \in [m']$, $y \in [m-2]$ it holds that $\mu_{y+1}^x \le \mu_{y}^x$.
  Informally, a stepwise shift action is valid if one \emph{never} shifts $p$ by at least $y+1$~positions without first having it shifted by at least $y$~positions.
  A stepwise shift action is \emph{successful} if $p$ is an $\calR$-winner
  in $\sortedshift(E,\vv{\mu})$.
  Every valid stepwise shift action can be translated into a shift action.

  %The relation between stepwise shift actions and shift actions is illustrated
  %through an example in \cref{fig:stepwise-shift}.
  
%  Analogously to $\costshift$, 
  % We now define a convenient way of describing stepwise shift actions through
  % appropriate budget partitions.  % $\costsortedshift$.

  The above notion is crucial for understanding our algorithm.
  Thus, let us give a concrete example (see
  \cref{fig:stepwise-shift}). 
  Assume that we have four voters, all in the same
  voter block. The first one has price function $\pi_1(j) = j$, the
  second one has price function $\pi_2(j) = 2^j-1$, the third one
  has price function $\pi_3(j) = 3j$, and
  the fourth one has price function $\pi_4(j) = j^2 + j  + 4$.
  For $\vv{b} = (6,4,9,0)$, $\costsortedshift(\vv{b})$ is a stepwise shift
  action $(3,2,2,0, \ldots)$: We can spend the $6$~units of budget on
  shifting $p$ forward by at least one position and this suffices to
  shift $p$ forward for the first, second, and third voter.
  Note that, the cost is actually $5$ units.
  For these three voters, we have $4$ units of budget to shift $p$ further. This
  is enough to shift $p$ one position up in the preference orders of
  the first and the second voter (the actual cost of this operation is
  $3$).  Finally, for these two voters, we have $9$ units of budget to
  shift $p$ forward by one more position. This costs $3$ for both
  voters and so we, again, shift $p$ forward by one more position in
  the preference orders of the first and second voter. Even though we
  have $6$ units of budget left, we do not use them on anything (in
  particular, we do not try to spend them on the third voter).
  The \emph{stepwise budget distribution} corresponding to the
  stepwise shift action is $(5, 3, 5)$. 
  % 1,2,3     1,1,1
  % 2,4,6     2,2,2
  % 3,6,9     3,3,3

\begin{figure}
  \centering
\begin{tikzpicture}[scale=.5,pile/.style={thick, ->}]
  \path[use as bounding box] (12,-2) rectangle (4,3);
\node at (0,0){
\newcommand{\wc}[1]{\multicolumn{1}{c|}{#1}}
\newcommand{\gc}[1]{\multicolumn{1}{c|}{\cellcolor{darkgray!25}#1}}
\newcommand\Tstrut{\rule{0pt}{2.2ex}}
\tabcolsep=4pt
\begin{tabular}{lccccl}
 \hhline{~|-|-|-|-|~}
 \wc{$v_1$} &\gc{1}&\gc{1}&\gc{1}&\wc{$p$}& 3\Tstrut\\
 \hhline{~|-|-|-|-|~}
 \wc{$v_2$} &\gc{4}&\gc{2}&\gc{1}&\wc{$p$}& 7\Tstrut\\
 \hhline{~|-|-|-|-|~}
 \wc{$v_3$} &\wc{3}&\wc{3}&\gc{3}&\wc{$p$}& 3\Tstrut\\
 \hhline{~|-|-|-|-|~}
 \wc{$v_4$} &\wc{6}&\wc{4}&\wc{6}&\wc{$p$}& 0\Tstrut\\
 \hhline{~|-|-|-|-|~}
            &5&3&5\Tstrut
\end{tabular}};
%\end{tikzpicture}
%\begin{tikzpicture}[scale=.5,pile/.style={thick, ->}]
%  \node at(12, -2) (ssa) {stepwise shift action: $(3, 2, 2)$};
%  \node[above = 0pt of bd] {shift action: $\vv{s}=(3,3,1,0)$};

  \node[text width = 40ex] at (12, -2) (sbd) {stepwise budget distribution: $(5,3,5)$};
  \node[above = 0pt of sbd,text width = 40ex] (bd) {budget
    distribution: $(3,7,3,0)$};
  
  \node[above = 2pt of bd, text width = 40ex] (ssa) {stepwise shift action: $(3, 2, 2)$};
  \node[above = 0pt of ssa, text width = 40ex] {shift action: $(3,3,1,0)$};
 % \node at (3,5.2) {$\vv{s}=(3,3,1,0)$};
  % \node at (3,3.2) {$(3,7,3,0)$};

   % \node at (3,1.2) {$(5,3,5)$};
\end{tikzpicture}
\caption{Illustration of the shift action $(3,3,1,0)$ restricted to a specific voter block,
 where $p$ ranks fourth in the common preference order.
 The price functions are
 $\pi_1(j)=j$, $\pi_2(j)=2^{j}-1$, $\pi_3(j)=3j$, and $\pi_4(j)=j^2 + j + 4$
 with $j<4$.
 The voters are sorted descending with respect to their price functions, that is,
 for any $1 \le j \le 4$ and $1 \le i < i' \le 4$ it holds that $\pi_i(j) \le \pi_{i'}(j)$.
 Each row in the matrix represents one voter.
 The entry $j$~cells left of~$p$ contains the price for moving~$p$ from
 position~$4-j+1$ to position~$4-j$.
 The cell in row~$i$ and column $4-j$ is marked gray
 if we shift~$p$ by $j$~positions in voter~$i$'s preference order.
 Counting the number of marked cells row-wise gives the shift action.
 Summing up the costs row-wise gives the budget distribution.
 Summing up the costs column-wise gives the stepwise budget distribution.
} 
\label{fig:stepwise-shift}
\end{figure}

  Let $\vv{b}=(b_1^1, \dots,b_{m-1}^1, b_1^2, \dots,b_{m-1}^2, \dots, b_1^{m'}, \dots, b_{m-1}^{m'})$ 
  be a vector (intuitively, the entries in this vector should sum up to the amount of money that we wish to spend).
  We define the stepwise shift action $\costsortedshift(\vv{b})=(\costsortedshift(b_1^1)$, $\dots$,
  $\costsortedshift(b_{m-1}^1)$, $\costsortedshift(b_1^2)$, $\dots$, $\costsortedshift(b_{m-1}^2)$,
  $\dots$, $\costsortedshift(b_{1}^{m'})$, $\dots$, $\costsortedshift(b_{m-1}^{m'})$) such that for each $x \in [m']$, $y \in [m-1]$ 
  it holds that $\costsortedshift(b_y^x)$ is the largest number $t_y^x$ such
  that the cost of shifting $p$ in the preference orders of $t_y^x$ voters of block $x$ by $y$ positions (provided that they
  have already been shifted by $y-1$ positions)
  is at most~$b_y^x$ (and $t_{y'+1}^x \le t_{y'}^x$, for all $y' \in [m-2]$).
  Note that the last condition ensures that $\costsortedshift(\vv{b})$ is valid.
  It is not hard to verify that $\costsortedshift(\vv{b})$ can be computed in polynomial time.

\begin{algorithm}[t!]
   \footnotesize
   \SetKwInput{KwParameters}{Parameters}
   \SetKwFunction{RecursiveSearch}{Search}
   \SetKwFunction{Main}{Main}
   \SetKwBlock{Block}
   \SetAlCapFnt{\footnotesize}
   \KwParameters{\\
          $\hspace{3pt}$ $(C,V,\Pi,p,B)$ --- input \ShiftBribery instance\\
          $\hspace{3pt}$ $\calR$ --- voting rule\\
          $\hspace{3pt}$ $m'$ --- the number of voter blocks\\
          $\hspace{3pt}$ $m$ --- the number of candidates\\
          $\hspace{3pt}$ $\varepsilon$ --- approximation parameter}
    \vspace{3mm}	
    \RecursiveSearch{$\vv{q} = (q_1, \ldots, q_{M}), Q, t$}:
        \Block{
          \If{$t = 0$}
          {
            \Return $\{\vv{q}\}$ \;
          }
          $R \leftarrow \{\}$ \;
          \ForEach{$i \in [M]$}
          {
            $\vv{q'} \leftarrow (q_1,\ldots, q_{i-1},q_i+\frac{Q}{M},q_{i+1},\ldots, q_{M})$ \;
            $R \leftarrow R \cup \RecursiveSearch( \vv{q'}, Q - \frac{Q}{M}, t-1 )$ \;
          }
          \Return $R$ \;
        }
    \vspace{3mm}	
              $M \leftarrow m'\cdot (m-1)$\;
              $R \leftarrow \RecursiveSearch( (0, \ldots, 0), B, M\cdot \lceil\ln(\frac{M}{\varepsilon})\rceil+1)$ \;
              \ForEach{ $\vv{b'} \in R$}
              {
                $\vv{b''} \leftarrow \vv{b'} + (\frac{\varepsilon}{M} \cdot B, \ldots, \frac{\varepsilon}{M} \cdot B)$ \;               
                \If{ $p$ is $\calR$-winner in $\sortedshift(E,\costsortedshift(\vv{b''}))$}
                {
                   \Return shift action corresponding to $\costsortedshift(\vv{b''})$\;
                }
              }
           \Return ``no''\;
           \caption{\small $\classFPT$ approximation scheme for \ShiftBribery parameterized by the number of candidates.}
   \label{alg:candidates-approx}
\end{algorithm}

  We are now ready to sketch the idea behind the main component of our
  \ifShowFormalprfFptAsCandidates
  algorithm (we present the formal proof later),
  \else
  algorithm,
  \fi
  namely behind \cref{alg:candidates-approx}.
  Given an $\epsilon>0$ and a budget
  $B$, if it is possible to ensure $p$'s victory by spending at most
  $B$ units of budget, the algorithm computes a shift-action with cost
  at most $\leq (1+\varepsilon) \cdot B$.  The algorithm consists of
  two parts: an $\classFPT$ time exhaustive search part (realized by the
  function \RecursiveSearch), and an approximation and verification part
  (realized by the bottom loop of the algorithm).  
%
%Recall that there are
%  at most $m'm$ shift types. 
  If there is a successful stepwise shift action of cost $B = \OPT(I)$,
  then there is a stepwise budget distribution $\vv{b}$ whose entries sum up to $B$ and such
  that $\costsortedshift(\vv{b})$ is a successful stepwise shift action.  The
  algorithm tries to find this vector $\vv{b}$.  Observe that at least
  one entry in $\vv{b}$ must be greater than or equal to $B/(m'\cdot (m-1))$
  (that is, to the average value of the entries).  \RecursiveSearch
  simply guesses the correct entry and repeats this procedure
  recursively with the remaining budget.  Doing this, it guesses the
  entries to which it allocates smaller and smaller chunks of the
  budget (perhaps adding funds to the entries that already received
  some part of the budget), until it distributes a ``major part'' of
  the budget (never in any entry exceeding the corresponding value
  from $\vv{b}$).  Then, the algorithm turns to the approximation
  part: It simply increases \emph{every} entry in the vector by
  whatever was left of the budget. In effect, we find a vector such that
  every entry matches or exceeds that of $\vv{b}$ but the total
  price is not much greater than $B$. The only problem with this
  approach is that we do not know $\OPT(I)$. However, we will see that
  trying a certain small number of different budget values $B$
  suffices.

  We now move on to the formal proof.  We will first show that for
  each integer $B$, if $I(B)=(C,V,\Pi,p,B)$ is a yes-instance of
  $\calR$ \ShiftBribery with sortable prices, $m$ candidates, and $m'$
  different voter blocks, then \cref{alg:candidates-approx} computes
  in $O^*((m'\cdot m)^{m'\cdot m\cdot \lceil\ln (m'\cdot m/\varepsilon)\rceil+1} \cdot
  w(m))$ time (where $w(m)$ is the exponential part of the running
  time needed for winner determination) a successful shift action
  $\vv{s}$ for $I$ such that
  $\Pi(\vv{s}) \leq (1+\varepsilon) B$.
%  $\Pi(\vv{s}) \leq (1+2\varepsilon +
  % \varepsilon^2)B$.

  Let $M = m'\cdot (m-1)$.
  It is easy to verify the running time of \cref{alg:candidates-approx} which comes
  from the exhaustive search of depth $M\cdot \lceil\ln(M/\varepsilon)\rceil+1$
  in \RecursiveSearch.
  To see that it indeed gives the desired answer, assume  that $I(B)$ is a
  yes-instance and let $\vv{b}=(b_1^1,\dots,b_{m-1}^1,b_1^2,\dots,b_{m-1}^2,\dots,b_{m-1}^{m'})$
  be a vector such that $\sum_{x \in [m'], y \in [m-1]}b_y^x = B$,
  and $\costsortedshift(b)$ is a successful stepwise shift action. 
  We observe that, by the pigeonhole principle, there are
  $x \in [m']$, $y \in [m-1]$, such that $b_y^x \geq B/M$.
  \RecursiveSearch tries each choice of $x$ and $y$ and then repeats this process
  recursively for $M\cdot \lceil\ln(M/\varepsilon)\rceil+1$ steps.
  This means that there is a vector
  $\vv{{b'}}=({b'}_1^1,\dots,{b'}_{m-1}^1,{b'}_1^2,\dots,{b'}_{m-1}^2,\dots,{b'}_1^{m'},\dots,{b'}_{m-1}^{m'})$
  among those in $R$ such that:
  \begin{enumerate}[(a)]
   \item for each $x \in [m']$, $y \in [m-1]$ it holds that
         ${b'}_y^x \leq b_y^x$, and
   \item $B - \sum_{x \in [m'], y \in [m-1]}{b'}_y^x <
         B(1-\frac{1}{M})^{M\cdot \ln\frac{M}{\varepsilon}} < B \cdot 
         e^{-\ln\frac{M}{\varepsilon}} = \frac{\varepsilon}{M}\cdot B$.
  \end{enumerate}
  %   (a) for each $x \in [m']$, $y \in [m]$ it holds
  %   ${b'}_y^x \leq b_y^x$, and
  %   (b) $B - \sum_{x \in [m'], y \in [m]}{b'}_y^x <
  %   B(1-1/M)^{M\ln (M/\varepsilon)} < B
  %   e^{-\ln (M/\varepsilon)} = \varepsilon/M B$.
  Thus, for each $x \in [m']$, $y \in [m-1]$ it holds that ${b''}_y^x =
  {b'}_y^x + B \cdot \varepsilon/M \geq b_y^x$.
  Recall that $\vv{b''}$ is a vector computed by \RecursiveSearch.
  Now, $\costsortedshift(\vv{{b''}})$ is a successful stepwise shift action.
  Even ensuring that $\costsortedshift(\vv{{b''}})$ is valid will
  not decrease any component below the corresponding value in $\costsortedshift(\vv{b})$.
%  because we gave the total difference of all components to each single
%  component as extra budget.
%  Naturally, 
  The cost of $\costsortedshift(\vv{{b''}})$ is at most $(1+\varepsilon)\cdot B$.

  Now we can compute an approximate solution for $I$ as follows.
  We  start by considering shift action $\costshift(0,\ldots,0)$ and
  output it if it is successful.
  Otherwise, we run \cref{alg:candidates-approx} for each budget
  $(1+\varepsilon)^i$, for $i = 0, 1, 2, \ldots$ and output the first
  successful shift action that it outputs.

  Note that if the cost of some cheapest successful shift action for $I$ is
  $B$, then after $O(\ln B)$ iterations the procedure above will find a
  budget $B'$ such that $B \leq B' < (1+\varepsilon)\cdot B$. For this $B'$,
  \cref{alg:candidates-approx} will output a successful
  shift action with cost at most $(1+\varepsilon)\cdot B' \leq
  (1+\varepsilon)^2\cdot B = (1+2\varepsilon + \varepsilon^2)\cdot B$. This shows
  that our algorithm is indeed an approximation scheme. It remains to
  show that it runs in $\classFPT$ time (parameterized by the number of
  candidates). To see this, it suffices to note that $O(\ln B) = O(\ln
  (\Pi(m, \ldots, m))) = O(|I|)$ and that
  \cref{alg:candidates-approx} runs in $\classFPT$ time.
\end{proof}

Since the publication of our original
findings~\cite{BreCheFalNicNie2014}, \citet{BreFalNieTal2015b} have
improved upon our $\classFPTAS$ result stated in
\cref{thm:fpt-as-num-candidates}, using mixed integer linear programs
to show membership in $\classFPT$. 
%  their result
% also applies to \ShiftBriberyO because the optimization variant of
% %\todo{What should probILPshort expand to?}
% \textsc{ILP}probILPshort  is also fixed-parameter
% tractable.  
Their approach relies on the fact that the fixed-parameter
tractability result of \probILPshort parameterized by the number of
variables~\cite{Len83} also holds for mixed integer linear programs,
where the number of integer-valued variables is upper-bounded by a function in the
parameter but the number of \emph{real-valued} variables is not
restricted (i.e., the number of real-valued variables is polynomial in
the size of the input). Nonetheless, they also use a number of
insights from the above algorithm.

% (adapted to our setting) does not only hold for
% integer linear programs with $g_1(m)$ integer-valued variables and
% $g_2(m)$ constraints, but also holds when the program has polynomially
% many (in the input size) additional \emph{real-valued} variables.

Adopting some of the ideas of the above proof, we obtain the following
$\classXP$ algorithm for \ShiftBribery with arbitrary price functions.
The corresponding algorithm relies on the fact that
for a set of voters with the same preference orders,
finding a subset of $q$~voters who shift $p$ higher by exactly the same number~$j$ of positions
and who have the minimum price
is polynomial-time solvable.

\newcommand{\propxpnumcands}{
  Let $\calR$ be a voting rule whose winner determination procedure is
  in $\classXP$ when parameterized by the number~$m$ of candidates. $\calR$
  \ShiftBribery parameterized by~$m$ is in $\classXP$
  for each price function family that we consider.  The algorithm runs in
  time $O^*((n^m)^{m!})$ times the cost of $\calR$'s winner
  determination.
}
\begin{theorem}\label{thm:xp-num_cands-0/1-prices}
  \propxpnumcands
\end{theorem}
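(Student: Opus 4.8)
The plan is to exploit that, once the number $m$ of candidates is treated as a constant, the \emph{outcome} of any shift action is determined by only polynomially much data, which we can afford to enumerate. First I would group the voters into \emph{voter blocks}, each block collecting all voters that share the same preference order; since there are at most $m!$ distinct orders over $C$, there are at most $m'\le m!$ blocks. Consider a block~$x$ whose common order ranks $p$ on position $r_x\le m$. Shifting $p$ forward by $j$ positions in such a voter (for $0\le j\le r_x-1$) yields one specific preference order $o_{x,j}$ that depends only on $x$ and $j$ and not on the voter's identity. Consequently, the election $\shift(E,\vv{s})$ produced by a shift action $\vv{s}$ is completely determined by the numbers $n^x_j$ of block-$x$ voters shifted by exactly $j$ positions. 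I call such a tuple $(n^x_j)_{x\in[m'],\,0\le j\le m-1}$ a \emph{shift profile}.

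The algorithm enumerates all shift profiles. Within one block a profile distributes the $|V_x|$ block voters among at most $m$ shift amounts, so there are at most $\binom{|V_x|+m-1}{m-1}\le (n+1)^{m}=O(n^m)$ possibilities; across the $m'\le m!$ blocks this yields at most $O\big((n^m)^{m!}\big)$ profiles. For each profile I would build the corresponding election~$E'$ (the same $n$ voters, now holding the orders $o_{x,j}$) and call the assumed $\classXP$ winner-determination procedure to test whether $p\in\calR(E')$, which is polynomial for fixed~$m$. If $p$ is a winner, it remains to test whether the profile is realisable within budget~$B$.

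The main obstacle---and the only place where the arbitrariness of the prices matters---is to compute the cheapest shift action realising a fixed profile. Since the outcome depends only on the counts $n^x_j$ while the price depends on \emph{which} voter receives which shift, and since we do not assume sortability, a greedy choice need not be optimal. The minimisation does, however, decompose over blocks, and within block~$x$ it is precisely a minimum-cost assignment (transportation) problem: assign each voter $v_i\in V_x$ a shift amount $j(i)$ so that exactly $n^x_j$ voters receive amount~$j$, minimising $\sum_{v_i\in V_x}\pi_i\big(j(i)\big)$ with $\pi_i(j)$ the cost of placing $v_i$ on amount~$j$. This bipartite minimum-cost assignment is solvable in polynomial time (the elementary case of selecting the $q$ cheapest voters for a single common shift amount is its one-slot instance). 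Summing the per-block minima and comparing with~$B$ decides the profile, and the instance is a yes-instance if and only if some enumerated profile makes $p$ a winner at cost at most~$B$. Each of the $O\big((n^m)^{m!}\big)$ iterations performs one winner-determination call plus polynomial assignment work, giving the claimed running time $O^*\big((n^m)^{m!}\big)$ times the cost of $\calR$'s winner determination.
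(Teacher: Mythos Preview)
Your proposal is correct and follows essentially the same approach as the paper: partition voters by preference order, enumerate all vectors of per-block shift-amount counts (your ``shift profiles'', the paper's vectors $\vv{q^\ell}$), compute the cheapest realisation of each, and test whether $p$ wins. The only cosmetic difference is that you phrase the per-block cost minimisation as a bipartite minimum-cost assignment/transportation problem, whereas the paper casts it as \textsc{Min Cost/Max Flow} with demands; these are equivalent formulations of the same polynomial-time subproblem.
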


\begin{proof}
  Let $\calR$ be a voting rule whose winner determination procedure is
  in $\classXP$ when parameterized by the number of candidates.  We give an
  $\classXP$ algorithm for $\calR$ \ShiftBribery.

  Let $I = (C,V,\Pi,p)$ be an input instance of $\calR$
  \ShiftBribery. Let $n$ be the number of voters and let $m$ be the
  number of candidates in $E = (C,V)$. Each voter has one of $m!$
  possible preference orders and we partition $V$ into vectors $V^1,
  \ldots, V^{m!}$ of voters with the same preference orders.  Our
  algorithm works as follows:
  \begin{enumerate}
  \item For each $\ell$, $1 \leq \ell \leq m!$, guess a vector
    $\vv{q^\ell} = (q^\ell_0, \ldots, q^\ell_m)$ of nonnegative
    integers that sum up to $\left|V^\ell\right|$. For each guessed list
    of vectors do the following:
    \begin{enumerate}
    \item Compute the cheapest shift action $\vv{s}$ such that for
      each $\ell$, $1 \leq \ell \leq m!$, and for each $j$, $0 \leq j
      \leq m$, there are exactly $q^\ell_j$ voters in $V^\ell$ for
      whom we shift $p$ by exactly $j$ positions. If such a
      shift action does not exist then move on to another guess in the
      preceding step of the algorithm. (Below we describe precisely
      how to implement this step.)
    \item If $p$ is an $\calR$-winner of $\shift(E,\vv{s})$ then store
      $\vv{s}$ as a successful shift action.
    \end{enumerate}
  \item Return the cheapest stored successful shift action.
  \end{enumerate}

  Since, in essence, the algorithm tries all interesting
  shift actions, it is clear that it is correct. We now show that it
  is possible to implement it to run in $\classXP$ time with respect to the
  number of voters. 

  In the first step, the algorithm has $O((n^m)^{m!})$ possible guesses so we
  can try them all. However, it is not clear how to implement given vectors
  $\vv{q^1}, \ldots, \vv{q^{m!}}$ in step (1a). Let $\ell$ be some integer,
  $1 \leq \ell \leq m!$. We will show how to compute the part of $\vv{s}$
  that implements $\vv{q^\ell}$.

  We solve the problem by reducing it to an instance of a variant of
  \textsc{Min Cost/Max Flow} problem where, in addition to the
  standard formulation, we are also allowed to put lower bounds on the
  amount of flow traveling through some edges (we call these values
  \emph{demands}; this problem is well-known to be polynomial-time
  solvable~\cite{ahu-mag-orl:b:flows}).

  Let us rename the voters in $V^\ell$ so that $V^\ell = (v_1, \ldots,
  v_r)$ for some $r \in \naturals$.  We form the following flow
  network for \textsc{Min Cost/Max Flow}. We let the set $N$ of vertices
  be $\{s,t\} \cup \{v_1, \ldots, v_r\} \cup \{0, \ldots, m\}$. Vertices
  $s$ and $t$ are, respectively, the source and the sink. Vertices
  $v_1, \ldots, v_r$ form the first layer in the flow network and
  correspond to the voters in $V^\ell$. Vertices $0, \ldots, m$ form
  the second layer in the network and correspond to the extent to
  which $p$ should be shifted (in a given group of voters). We have the
  following edges:
  \begin{enumerate}
  \item For each $i$, $1 \leq i \leq r$, there is an edge with unit
    capacity and zero cost from $s$ to $v_i$.
  \item For each $i$, $1 \leq i \leq r$ and each $j$, $0 \leq j \leq
    m$, there is an edge with unit capacity and cost $\pi_{i}(j)$ from
    $v_i$ to $j$ ($\pi_i$ is $v_i$'s price function).
  \item For each $j$, $0 \leq j \leq m$, there is an edge from $j$ to
    $t$ with capacity and demand $q^\ell_j$ and zero cost (the setting
    of capacity and demand means that we are interested in flows where
    there are exactly $q^\ell_j$ units of flow traveling from $j$ to
    $t$).
  \end{enumerate}
  There is a polynomial-time algorithm that finds a lowest-cost flow
  from $s$ to $t$ that satisfies the capacity and demand constraints
  of all the edges~\cite{ahu-mag-orl:b:flows}. The constructed flow
  network is of polynomial size with respect to our input instance, so
  our algorithm indeed runs in $\classXP$ time.  Further, the flow from
  vertices $\{v_1, \ldots, v_r\}$ to vertices $\{0, \ldots, j\}$ defines the
  desired shift action in a natural way: If a unit of flow travels
  from some $v_i$, $1 \leq i \leq r$, to some~$j$, $0 \leq j \leq m$,
  then we shift $p$ by $j$ positions in $v_i$. This completes the
  argument.
\end{proof}

\section{Conclusion}\label{sec:conclusions}
We have studied the parameterized complexity of \ShiftBribery under
the voting rules Borda, Maximin, and Copeland, for several natural
parameters (either describing the nature of the solution or the size
of the election) and for several families of price functions
(arbitrary, convex, unit, sortable, and all-or-nothing). Our results
confirmed the belief that the computational 
complexity depends on all three factors:
the voting rule, the parameter, and the type of price function used.

Our work leads to some natural follow-up questions. First, it would be
interesting to resolve the complexity of \ShiftBribery parameterized
by the number of candidates for arbitrary price functions (see \Cref{tab:results}); 
at the moment we only know that it is $\wone$-hard and in $\classXP$.
For both Borda and Maximin,
and for the parameter ``number~$t$ of unit shifts'',
we obtain a partial problem kernel, that is, using polynomial time, 
we can shrink a given instance to one whose numbers of alternatives and voters are upper-bounded by a function in~$t$~(\cref{thm:Borda-partialkernel-num-shifts,obs:Maximin-partialkernel-num-shifts}).
This function, however, is an exponential function.
Thus, it would be interesting to know whether it is possible to design an efficient kernelization algorithm~\cite{BGKN11}
that replaces this exponential function with a polynomial.
A similar concept to the kernelization lower bound
would be to show a lower bound on the size of \emph{only} part of the problem kernel

Further, one could study the
problem for other voting rules. Going in the direction of the
\textsc{Margin of Victory} type of problems, it would be interesting
to study \textsc{Destructive} \ShiftBribery, where we can push back
the despised candidate to prevent him or her from being a winner
(initial results in this direction are due to
\citet{kac-fal:c:destructive-shift-bribery}). 

Finally, recently \citet{BulCheFalNieTal2015} introduced
a model of affecting election outcomes by adding groups of voters to
the election (this is an extension of a well-studied problem known as
\textsc{Control by Adding
  Voters}~\cite{bar-tov-tri:j:control,hem-hem-rot:j:destructive-control}).
\citet{BreFalNieTal2015a} considered \ShiftBribery in a similar
setting where one can affect multiple voters at the same time (for
example, airing an advertisement on TV could affect several voters at
the same time). Unfortunately, most of their results are quite
negative and it would be interesting to investigate the combinatorial variant of \ShiftBribery from
the point of view of heuristic algorithms.

\section*{Acknowledgments}
We are grateful to the anonymous referees of AAAI~2014 and of \emph{Information and Computation} that helped us to significantly improve the paper.

Robert Bredereck was supported by the DFG project PAWS (NI 369/10).
Jiehua Chen was partially supported by the Studienstiftung des
Deutschen Volkes.  Piotr Faliszewski was partially supported by the
DFG project PAWS (NI 369/10) while staying at TU~Berlin, and by AGH
University grant 11.11.230.124 (statutory research) during the rest of
the project.

% \section{Bibliography styles}

% There are various bibliography styles available. You can select the style of your choice in the preamble of this document. These styles are Elsevier styles based on standard styles like Harvard and Vancouver. Please use Bib\TeX\ to generate your bibliography and include DOIs whenever available.

% Here are two sample references: \cite{Feynman1963118,Dirac1953888}.
%\bibliographystyle{abbrvnat}
\section*{References}

%\bibliography{grypiotr2006}

\appendix
\section{Proof of \texorpdfstring{\cref{thm:w[2]-h-num_voters_affected}}{Theorem~\ref{thm:w[2]-h-num_voters_affected}}}\label{sec:omitted}

In this appendix we present the omitted proof of \cref{thm:w[2]-h-num_voters_affected}. % from the main text. 
To this end, we split the proof in three propositions, each proposition covering one of the considered voting rules.
In most cases the proofs of the propositions base on an idea already present in the literature (mostly in the paper of Elkind et al.~\cite{elk-fal-sli:c:swap-bribery}), but they do include some additional insights.

% \newtheorem*{corborwtwo}{Corollary~\ref{cor:Borda-w[2]-h-num_voters_affected}}
% 
% \begin{corborwtwo}
%   \corBordawtwonumberaffectedvoters
% \end{corborwtwo}

\begin{proposition}\label{prop:Borda-w[2]-h-num_voters_affected}
	\corBordawtwonumberaffectedvoters
\end{proposition}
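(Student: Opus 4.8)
The plan is to reduce from \probSC with the solution size~$k$ as the parameter, which is $\wtwo$-complete (see \cref{subsec:parameterized-complexity}). Given a universe $\setUniverse = \{u_1,\dots,u_n\}$ and a family $\setFamily = (S_1,\dots,S_m)$, I would build an election in which the preferred candidate~$p$ can be made a Borda winner while changing at most~$k$ preference orders exactly when $\setFamily$ admits a cover of size~$k$ (after padding the family we may assume the cover has size \emph{exactly}~$k$, since adding sets never hurts coverage). The reduction reuses the idea behind the $\np$-hardness proof of Elkind et al.~\cite{elk-fal-sli:c:swap-bribery}; the one conceptual change needed to obtain a $\wtwo$-bound on the number of affected voters is to run it from general \probSC rather than from \textsc{Exact Cover by 3-Sets}.

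For all-or-nothing prices I would introduce one \emph{element candidate}~$c_j$ per element~$u_j$, together with dummy candidates and candidates used only inside score-fixing gadgets. For each set~$S_i$ I create one \emph{set voter}~$v_i$ ranking the element candidates $\{c_j : u_j \in S_i\}$ together with enough fresh dummy candidates to place exactly $M \coloneqq \max\{k,\max_i|S_i|\}$ candidates above~$p$, and I give~$v_i$ the all-or-nothing price~$1$; the budget is~$k$. Using \point-gadgets (\cref{subsec:conventions}), whose votes keep~$p$ effectively frozen and which I give price $B+1$, I would set the initial scores so that $\s{E}{c_j} = \s{E}{p}+kM+1$ for every~$j$ while every dummy and gadget candidate stays strictly below $\s{E}{p}+kM$. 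The \emph{uniform} depth~$M$ is the crucial point: under all-or-nothing prices a solution may as well shift~$p$ to the top of each bribed voter, so bribing $\ell$ set voters raises $p$'s score by exactly~$\ell M$ and lowers each~$c_j$ by the number of bribed sets containing~$u_j$. Hence $p$ reaches its only achievable winning score $\s{E}{p}+kM$ solely by bribing all~$k$ voters (this is where $M\ge k$ rules out $\ell<k$), and it then ties~$c_j$ precisely when~$c_j$ is passed at least once, i.e.\ when~$u_j$ is covered. The two directions then match size-$k$ covers with successful shift actions affecting $k$~voters. Because the number of affected voters equals~$k$, this is a parameterized reduction, and since $\Pi_{0/1}\subset\Pi_\sort\subset\Pi_\all$ (\cref{prop:inclusion-price-functions}) the same instances also settle sortable and arbitrary prices.

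For unit prices (and, via $\Pi_\unit\subset\Pi_\convex$, convex prices) I would keep the same candidates, set voters, and score targets, but replace the prices by unit prices and the budget by~$kM$. Two adjustments are needed. First, I cannot freeze the score-fixing gadget votes with an infinite price, so I would instead prepend to~$p$ in each such vote a ``wall'' of more than $kM$ filler candidates (given low scores so that $p$ beats them anyway); shifting~$p$ in a gadget vote then costs more than the whole budget and is infeasible, recovering the freezing effect. Second, I would invoke monotonicity of Borda shift actions: increasing any shift never lowers $p$'s score and never raises another candidate's score, so a successful action stays successful when its shifts are increased as long as the budget allows. Since using $\ell\le k$ set voters and shifting $p$ to the top in each costs $\ell M\le kM$, any successful within-budget action can be assumed to shift~$p$ fully to the top in at most~$k$ set voters. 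This reduces the unit-price instance to exactly the all-or-nothing analysis above, so correctness carries over.

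The step I expect to be the main obstacle is the interaction between Borda's \emph{global} score gain and the covering constraint. Shifting~$p$ anywhere raises its Borda score, so a priori~$p$ could win by sheer point accumulation rather than by covering every~$u_j$; the uniform padding to depth~$M$ is precisely what pins $p$'s total gain to $\ell M$, a function of the number~$\ell$ of affected voters alone, and the score gap $kM+1$ assigned to each~$c_j$ is calibrated so that no sub-cover (any $\ell<k$, or any configuration leaving some~$u_j$ unpassed) can reach it. Getting these three quantities — the per-voter depth~$M\ge k$, the per-candidate gap $kM+1$, and the budget $kM$ — to line up while simultaneously keeping all dummy, filler, and gadget candidates below $p$'s final score is the real bookkeeping; once it is in place, both the all-or-nothing correctness and the unit-price monotonicity reduction follow the template of Elkind et al.
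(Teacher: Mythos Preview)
Your approach is essentially the same as the paper's: a parameterized reduction from \probSC with one set voter per set, element candidates with calibrated Borda score gaps, and additional gadget voters for score-fixing. The paper, however, proceeds in a slightly different order (unit prices first, then all-or-nothing as the corollary) and, crucially, introduces an extra candidate~$d$ placed at the \emph{top} of every set voter's preference order, with score $\s{E}{p}+B+k$. This~$d$ simultaneously forces two things: any successful action must pass~$d$ in every affected vote (so~$p$ must be shifted all the way to the top), and must do so in exactly~$k$ votes (so~$d$ drops by~$k$). Your variant dispenses with~$d$ and instead enforces~$\ell=k$ purely through the gap~$kM+1$ together with~$M\ge k$, then invokes Borda monotonicity to normalise partial shifts to full shifts. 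This is arguably more direct for Borda, whereas the paper's $d$-at-the-top device is what lets the same voter block~$V_\sets$ be reused verbatim in the Copeland reduction.

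One imprecision worth fixing: in the unit-price case you write that ``shifting~$p$ in a gadget vote then costs more than the whole budget and is infeasible.'' That is not literally true---a single unit shift in a gadget voter costs~$1$, not more than~$kM$; the wall only prevents~$p$ from reaching anything \emph{beyond} the fillers. The reduction is still correct, but the argument should be that shifts inside the wall pass only filler candidates and never any~$c_j$, so such shifts add to~$p$'s score without decreasing any~$c_j$'s score. Since the budget caps~$p$'s total gain at~$kM$ while each~$c_j$ starts~$kM+1$ ahead, every~$c_j$ must still be passed in at least one \emph{set} voter; the affected set voters (there are at most~$k$ of them, as gadget voters may eat some of the~$n_a$ slots) therefore form the desired cover. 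With this correction your monotonicity step is not even needed for the reverse direction.
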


\begin{proof}
  % We consider the voting rules mentioned in the corollary one by one.
  % We note that we view this result as a corollary because the main
  % ideas of our proofs were already present in the literature (see the
  % work of Elkind et al.~\shortcite{elk-fal-sli:c:swap-bribery}). Our
  % contribution is in providing technical details to obtain the exact
  % form of our theorem (in particular, we need to ensure the proofs
  % work for unit prices, whereas original constructions strongly relied
  % on using all-or-nothing prices).\medskip
  We consider \probShiftBBorda for unit prices first. 
  We give a parameterized reduction from \probSC parameterized by the set cover size~$k$. 
  Let $I = (\setUniverse,\setFamily,k)$ be our input
  instance, where $\setUniverse = \{u_1, \ldots, u_n\}$ and $\setFamily = \{S_1,
  \ldots, S_m\}$ is a collection of subsets of $\setUniverse$, and $k$ is a positive
  integer. 
	We set the budget $B = k\cdot (n+1)$ and form a set of candidates $C = \{p,d,g\} \cup \setUniverse \cup F$, where $F$ is a
  set of ``filler'' candidates, $F = \{f_1, \ldots, f_{2k\cdot (n+1)+2}\}$.
  We form a set~$V$ of voters as follows:
  \begin{enumerate}[i)]
  \item We form a subset~$V_{\sets}$ of voters as follows (note that this
    subset will be useful in the proof for the case of Copeland). 
    For each set $S_i$ in $\setFamily$ we create two voters, $v_{2i-1}$
    and $v_{2i}$.  Voter $v_{2i-1}$ has preference order
    \[
    d \pref \seq{S_i} \pref \seq{F_i} \pref p \pref \seq{C \setminus (S_i \cup F_i \cup
    \{p,d\})},
    \]
    where $F_i$ is an arbitrarily chosen set of $n-|S_i|$ candidates
    from $F$.  
	Voter~$v_{2i}$ has the reversed preference order of~$v_{2i-1}$. 
	Note that each pair of voters $v_{2i-1}$, $v_{2i}$, gives each candidate $|C|-1$ points.

  \item % Let $c$ be a candidate in $C \setminus \{p, g, d\}$. We define
  % $\point(c)$ to be two voters with the following preference orders:
  % the first voter has preference order $p \pref c \pref g \pref C
  % \setminus \{p,d,c,g\} \pref d$ and we obtain the second voter's
  % preference order from reversing the one of the first voter and then
  % swapping the positions of $c$ and $g$ (so that in the second voter's
  % preference order $c$ receives $2$ points and $g$ receives $1$
  % point). Note that, together, the two voters give $|C|$ points to $c$,
  % $|C|-2$ points to $g$, and $|C|-1$ points to each candidate in $C -
  % \{c,g\}$.

  For each $i$, $1 \leq i \leq n$, we add $B+1$ pairs of voters
  with preference orders from~$\point(p,\emptyset, u_i, g, C\setminus \{u_i,g,p,d\}, d)$ (see \cref{subsec:conventions} for the definition) to the voter set~$V$.
  Note that, with two preference orders from $\point(p,\emptyset, u_i, g, C\setminus \{u_i,g,p,d\}, d)$, 
  $u_i$ receives $|C|$~points, $g$~receives~$|C|-2$ points, and each remaining candidate in $C\setminus\{u_i,g\}$ receives $|C|-1$ points.
  In both preference orders, there are $|C|-2$ candidates between $p$ and $d$.

  \item %Let $f$ be some arbitrary candidate in $F$. We define $\point(d)$ to be two voters
    %that are obtained from the voters $\point(f)$ by swapping the positions of $f$ and $d$
    %in both of them.
    Let $F'$ be an arbitrary size-$(B+1)$ subset of $F$ 
    and let $z\in F\setminus F'$ be an arbitrary candidate.       
    We add $B+k$ pairs of voters with preference orders of
    $\point(p,F',d,g,C\setminus (\{p,g,d,z\}\cup F'), z)$ to the voter set~$V$.

    % We define
    % $\point(d)$ to be two voters with the following preference orders:
    % the first voter has preference order $p \pref F' \pref d \pref g \pref C
    % \setminus (\{p,d,g\} \cup F')$ where $F'$ is an arbitrary size-$(k(n+1)+1)$ subset of $F$
    % and we obtain the second voter's
    % preference order from reversing the one of the first voter and then
    % swapping the positions of $d$ and $g$.
    % Note that together, voters $\point(d)$ give $|C|$ points to $d$, 
    % $|C|-2$ points to $g$, and $|C|-1$ points to all candidates in $C - \{d,g\}$.
  \end{enumerate}

  Since we are dealing with unit price functions, each voter has
  price function $\pi(j) = j$.
  By routine calculation, it is easy to verify that there is a value $L$ such that
  the candidates in election $E=(C,V)$ have the following scores:
  \begin{itemize}
    \item $\s{E}{p} = L$,
    \item $\score_E(d) = L + B+k$,
    \item for each $i$, $1 \leq i \leq n$, $\score_E(u_i) = L + B+1$, and
    \item for each $f \in F \cup \{g\}$, $\score_E(f) \leq L$.
  \end{itemize}
	We set the maximal number~$n_a$ of affected voters to~$k$.
	We claim that $I$ is a yes-instance of $\probSC$ if and only if the formed instance of \probShiftBBorda is a yes-instance. 

  ``$\Rightarrow$'':
  Assume that there is a collection~$\setcover$ of $k$ sets from $\setFamily$
  such that their union gives $\setUniverse$. Without loss generality, let $\setcover\coloneqq\{S_{1}, \ldots, S_{k}\}$.  
  Bribing the voters $v_{2i-1}, \ldots,
  v_{2k-1}$ to shift $p$ to the top position has price $k\cdot (n+1) = B$
  and ensures $p$'s victory: Such shifts ensure that $p$ gains
  $k\cdot (n+1)$ points, each $u_i \in \setUniverse$ loses at least one point (because
  $S_{1}, \ldots S_{k}$ is a cover of $\setUniverse$), and $d$ loses $k$
  points (because $p$ passes $d$ in each of these $k$ voters' preference orders). In
  effect, $p$, $d$, and each $u_i \in \setUniverse$, have $L+k\cdot (n+1)$ points each
  and tie for victory.

  ``$\Leftarrow$'':
  For the reverse direction, assume that there is a successful shift
  action $\vv{s}$ that ensures $p$'s victory and affects $n_a = k$ voters
  only.  Since the budget is $k\cdot (n+1)$ and we use unit price functions,
  the score of $p$ can increase to at most $L + k\cdot (n+1)$. Since $d$ has
  score $L + k\cdot (n+1)+k$ and we can affect $k$ voters only, for each
  voter where we shift $p$, $p$ must pass $d$.

  Thus $\vv{s}$ involves only the voters from the set $V' = \{v_{2i+1}
  \mid 1 \leq i \leq m\}$. The reason for this is that each voter in
  $V \setminus V'$ either already prefers $p$ to~$d$ or ranks $d$ more
  than $B$ positions ahead of $p$ (so $p$ would not be able to pass~$d$ within the budget). Consequently, in the preference order of
  each voter from $V'$ affected by
  $\vv{s}$, $p$ must be shifted to the top position (because in these
  preference orders, $d$ is ranked first).

  Further, since the score of $p$ can increase to at most $L+k\cdot (n+1)$
  and each candidate $u_i \in \setUniverse$, prior the shifts, has score $L +
  k\cdot (n+1)+1$, it must be the case that $p$ passes each $u_i$ at least
  once. This means that the voters affected under $\vv{s}$ correspond
  to a cover of $\setUniverse$ by at most $k$ sets from $\setFamily$. 

  To complete the proof, it is easy to verify that the reduction works
  in polynomial time.

  The above argument applies to unit prices but, by
  \cref{prop:inclusion-price-functions}, it also
  immediately covers the case of convex price functions, sortable
  price functions, and of arbitrary price functions.  To deal with
  all-or-nothing price functions, % by
  % Proposition~\ref{prop:inclusion-price-functions}, it suffices to
  % consider all-or-nothing prices. To do so,
  it suffices to use the
  same construction as above and set the budget~$B$ to $k$,
  but with the following prices:
  \begin{enumerate}[i)]
  \item For each $i$, $1 \leq i \leq n$, we set the price function of
    voter $v_{2i+1}$ to be $\pi(j) = 1$ for each $j > 0$ (and, of
    course, $\pi(0)=0$).
  \item  For all the remaining voters, we use price
    function $\pi'$, such that $\pi'(j) = B+1$ for each $j > 0$ (and
    $\pi'(0)=0$).
  \end{enumerate}
  It is easy to see that the construction remains correct with these
  price functions.
\end{proof}

% \newtheorem*{cormaxwtwo}{Corollary~\ref{cor:Maximin-w[2]-h-num_voters_affected}}
% 
% \begin{cormaxwtwo}
%   \corMaximinwtwonumberaffectedvoters
% \end{cormaxwtwo}

\begin{proposition}\label{prop:Maximin-w[2]-h-num_voters_affected}
	\corMaximinwtwonumberaffectedvoters
\end{proposition}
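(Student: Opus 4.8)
The plan is to mirror the reduction used for Borda in \cref{prop:Borda-w[2]-h-num_voters_affected}, replacing the Borda score bookkeeping by head-to-head margins, and to reduce from \probSC parameterized by the cover size~$k$. Given $I=(\setUniverse,\setFamily,k)$ with $\setUniverse=\{u_1,\dots,u_n\}$ and $\setFamily=\{S_1,\dots,S_m\}$, I set the bound on the number of affected voters to $n_a=k$ and the budget to $B=k\cdot(n+1)$. The candidate set is $C=\{p,d,g\}\cup\setUniverse\cup F$ with a pool $F$ of filler candidates, and I reuse the set-voter block $V_\sets$: for each $S_i$ a pair $v_{2i-1}\colon d\pref\seq{S_i}\pref\seq{F_i}\pref p\pref\seq{\cdots}$ together with its reverse $v_{2i}$, where $|F_i|=n-|S_i|$, so that shifting $p$ to the top of $v_{2i-1}$ costs exactly $n+1$ unit shifts and makes $p$ pass $d$ and every candidate of $S_i$. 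Since a reversed pair contributes the zero margin to every head-to-head contest, bribing $v_{2i-1}$ to the top raises $N_E(p,u_j)$ by one for each $u_j\in S_i$ and raises $N_E(p,d)$ by one, while lowering $N_E(d,p)$ by one.

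First I would fix, via McGarvey-style gadget voters built from the $\point(\cdot)$ construction of \cref{subsec:conventions}, a background profile with the following Maximin picture: candidate~$d$ is the unique winner, its Maximin score $s$ being attained \emph{only} at the contest $N_E(d,p)=s$; the preferred candidate $p$ attains its Maximin score at every contest against a universe candidate, with $N_E(p,u_i)=s-k-1$ for all $i$ and $N_E(p,x)\ge s-k$ for every other $x$ (in particular $N_E(p,d)$ is made large); and every candidate in $\setUniverse\cup F\cup\{g\}$ has Maximin score at most $s-k$. Exactly as in the Borda gadgets, each point-pair is arranged so that in one of its two votes $p$ sits on top (a shift is useless) and in the other $d$ sits on top at distance larger than $B$ from $p$ (so $p$ cannot reach $d$ within the budget); thus none of the background voters, and neither reverse voter $v_{2i}$ (where $d$ is at the very bottom), can be used to pass $d$ within budget.

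The heart of the argument is the forcing step, which I would run just like the Borda counting argument. Because only contests involving $p$ change under a shift action, $d$'s score can drop below $s$ only when $p$ passes $d$, at most once per affected voter, so after affecting at most $k$ voters we still have $\s{E'}{d}\ge s-k$. Hence a successful action needs $\s{E'}{p}\ge s-k$, which forces $N_{E'}(p,u_i)\ge s-k$ for every $i$, i.e.\ $p$ must pass \emph{every} $u_i$ at least once (if some $u_i$ were missed then $\s{E'}{p}\le s-k-1<\s{E'}{d}$). Combining ``all $u_i$ passed'' with ``$d$ passed in each of the $\le k$ affected voters'' and ``$d$ is reachable only in the set voters $v_{2i-1}$, there at distance $n+1$'', the budget $B=k(n+1)$ leaves no slack: exactly $k$ set voters are each shifted to the top, and the passed elements cover $\setUniverse$. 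This yields a size-$k$ set cover, and conversely shifting $p$ to the top in the $k$ voters indexed by a cover makes $p$ tie with $d$ at score $s-k$, so $p$ wins. The all-or-nothing case is obtained, as for Borda, from the \emph{same} construction with $B=k$ by giving every $v_{2i-1}$ the price $\pi(j)=1$ for $j>0$ and all other voters the prohibitive price $B+1$ for any shift; by \cref{prop:inclusion-price-functions} the unit-price construction already settles convex, sortable, and arbitrary prices.

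The step I expect to be the main obstacle is realizing the exact margin profile above with \emph{bribery-safe} gadget voters: every auxiliary voter must leave $p$ either on top or with $d$ more than $B$ positions above $p$, so that the forcing argument is not spoiled by cheap shifts in the background, and simultaneously the single-pair $+2$ effect of $\point(\cdot)$ (together with a few single voters to fix parities) must be combined so as to produce all the required margins and, crucially, the \emph{unique} minimum of $d$ at the $d$-versus-$p$ contest. Verifying that these competing requirements can be met at once---and that no candidate other than $p$ and $d$ ever reaches score $s-k$, before or after the shifts---is the delicate part; the counting argument itself is then a direct transcription of the Borda proof.
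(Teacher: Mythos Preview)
Your plan is in the right spirit but differs from the paper's approach in a way that creates real difficulties. The paper does \emph{not} reuse the Borda set-voter block. In the Maximin reduction the set voter $v_{2i-1}$ has preference order $g \succ \seq{S_i} \succ \seq{F_i} \succ p \succ \seq{F\setminus F_i} \succ \seq{\setUniverse\setminus S_i} \succ d$: the blocker at the top is $g$, not $d$, and $d$ sits \emph{below} $p$. Together with four explicit (non-McGarvey) structural voter groups, this makes $d$'s Maximin score $m+2k$ \emph{immutable} under any affordable shift action, because $p$ is already above $d$ in every voter where $p$ can move at all. The forcing is then one-directional: $p$ must climb to $m+2k$, which forces $N_E(p,g)$ from $m+k$ up to $m+2k$ (so $g$ must be passed $k$ times, each at cost exactly $n+1$) and each $N_E(p,u_i)$ from $m+2k-1$ up to $m+2k$ (so every $u_i$ must be passed once). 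This sidesteps entirely the question of how many affected voters pass $d$.

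Your route, with $d$ at the top and $d$'s score decreasable, is not obviously doomed, but the argument you give has a gap: the step ``$d$ passed in each of the $\le k$ affected voters'' does not follow from what precedes it---nothing rules out passing $d$ fewer than $k$ times and compensating with extra $u_i$-passes. (A cleaner replacement: if no $k$-cover exists, then any choice of $\le k$ set voters misses some $u_i$, so $N_{E'}(p,u_i)=s-k-1<s-k\le \s{E'}{d}$. But this still presupposes that background voters permit no useful passes of \emph{any} $u_i$, not merely of $d$, which is stronger than the safety condition you state.) More seriously, the $\point(\cdot)$ pairs you plan to build on are neutral for every pairwise margin except the one between their $d'$ and $g'$ arguments; in particular they cannot move $N_E(p,u_i)$ at all, so they will not realize the profile you need. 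The paper avoids this by writing down the structural voters explicitly and tabulating all $N_E(\cdot,\cdot)$ values, with filler blocks $X_1,X_2$ of size $nk+k$ placed so that neither $g$ nor any $u_i$ is within $B$ positions of $p$ in any non-set voter. That explicit bookkeeping, rather than a McGarvey-style appeal, is what makes the delicate step you correctly flag actually go through.
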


\begin{proof}
  We reduce from \probSC parameterized by the set cover size to
  \probShiftBMaximin parameterized by the number of affected voters.  
  Let $I = (\setUniverse,\setFamily,k)$ be our input instance
  of \probSC, where $\setUniverse = \{u_1, \ldots, u_n\}$, $\setFamily = \{S_1,
  \ldots, S_m\}$ is a collection of subsets of $\setUniverse$, and $k$ is a
  positive integer (and the value of the parameter). We form a
  candidate set $C = \{p,d,g\} \cup \setUniverse \cup F$, where 
  $d$ is the unique winner in the original election and $F = \{f_1,
  \ldots, f_{2n\cdot k+2k}\}$ is a set of ``filler'' candidates (we
  partition $F$ into two disjoint sets~$X_1$ and $X_2$ of cardinality
  $n\cdot k+k$ each). We
  form a set~$V$ of voters as follows: % (note that when we list
  % sets in voters' preference orders, we mean listing the candidates
  % from these sets in lexicographic order of their names):
  \begin{enumerate}[i)]
  \item We create a set~$V_\sets$ of voters as follows.  For each
    $S_i \in \setFamily$, we include in $V_\sets$ two voters, $v_{2i-1}$ and
    $v_{2i}$, such that $v_{2i-1}$ has preference order 
    \begin{align*}
      g \pref \seq{S_i} \pref \seq{F_i} \pref p \pref \seq{F \setminus F_i} \pref \seq{\setUniverse \setminus S_i},
    \pref d
  \end{align*}
  where each $F_i$ contains an arbitrarily fixed subset of
  $n-|S_i|$ candidates from $F$, 
  and $v_{2i}$ has preference order which is the revers of $v_{2i-1}$'s.

  \item We create a set~$V_\structure$ of voters by including the
    following group of voters:
    \begin{enumerate}[a)]
    \item we include $k$ voters with preference order
      \begin{align*}
        p \pref d \pref g \pref \seq{\setUniverse} \pref \seq{X_1} \pref \seq{X_2}\text{,}
      \end{align*}
    \item we include $k-1$ voters with the same preference order 
      \begin{align*}
        g \pref \seq{X_1} \pref p \pref \seq{X_2} \pref d \pref \seq{\setUniverse}\text{,}
      \end{align*}
    \item we include a single voter with preference order 
      \begin{align*}
        g \pref \seq{\setUniverse} \pref \seq{X_1} \pref p \pref \seq{X_2} \pref d\text{, and}
      \end{align*}
    \item we include $2k$ voters with preference order 
      \begin{align*}
        d \pref g \pref \seq{\setUniverse} \pref \seq{X_2} \pref p \pref \seq{X_1}\text{.}
      \end{align*}
    \end{enumerate}
  \end{enumerate}
  We set $V = V_\sets \cup V_\structure$, and $E = (C,V)$. Each voter has a unit price
  function and we set the budget~$B$ to be $k \cdot (n+1)$. We claim that the
  constructed \probShiftBMaximin instance is a yes-instance if and
  only if $I$ is a yes-instance of \probSC. However, before we prove
  this fact, let us calculate the scores of the candidates in the election~$E$.

  \begin{table}
  \begin{center}
  \tabcolsep=0.12cm
  \begin{tabular}{c|ccccc}
        & $p$     & $g$    & $d$    & $\setUniverse$      & $F$ \\[+.3em]
   \hline
   $p$  & --      & $m+k$  & $m+2k$ & $m+2k-1$ & $\ge m+2k$   \\[+.3em]
   $g$  & $m+3k$  &  --    & $m+k$  & $m+4k$   & $m+4k$ \\[+.3em]
   $d$  & $m+2k$  & $m+3k$ & --     & $m+4k-1$ & $m+3k$ \\[+.3em]
   $\setUniverse$  & $m+2k+1$& $m$    & $m+1$  & $\leq m+4k$ & $m+3k+1$ \\[+.3em]
   $F$  & $\le m+2k$  & $m$    & $m+k$  & $m+k-1$ & $\leq m+4k$
  \end{tabular}
  \end{center}
  \caption{\label{tab:maximin}Table of the values of the $N_E(\cdot,\cdot)$ function for
    the election constructed for the Maximin part of the proof of 
    \cref{prop:Maximin-w[2]-h-num_voters_affected}.}
  \end{table}

  In \cref{tab:maximin} we give the values of the function
  $N_E(\cdot,\cdot)$ for the just constructed election (for the values
  of the function among candidates from the set $\setUniverse \cup F$ we use the
  trivial upper bound, $m+4k$). Note that for each $x,y \in C$, the
  voters from $V_{\sets}$ contribute value $m$ to $N_E(x,y)$ and the
  remaining value comes from the voters in $V_2$. Based on the function
  $N_E$, we calculate the scores of the candidates in our election:
  \begin{itemize}
  \item $\s{E}{p} = m+k$, 
  \item $\score_E(g) = m+k$,
  \item $\s{E}{d} = m+2k$,
  \item for each $u_i \in \setUniverse$, $\score_E(u_i) = m$, and
  \item for each $f_j \in F$, $\score_E(f_j) = m$.
  \end{itemize}
  Thus, prior to any bribery, $d$ is the unique winner.

  We show that $(\setUniverse, \setFamily, k)$ is a yes-instance for \probSC if and only if the constructed \probShiftBMaximin instance is a yes-instance.

  ``$\Rightarrow$'': 
  Suppose without loss of generality that $S_1, \ldots, S_k$ cover the universe~$\setUniverse$. 
  Then, 
  % If there exists a collection $\setcover\coloneqq\{S_{i_1}, \ldots, S_{i_{k}}\}$, of $k$
  % sets from $\setFamily$, whose union is $\setUniverse$, then our 
  % \probShiftBMaximin instance is a yes-instance. It
  it suffices for each voter
  in the set $V' = \{v_{2i-1} \mid 1 \leq i\leq k\}$ to shift $p$
  to the top position. This leads to $p$ passing $g$ for $k$ times and
  to $p$ passing each candidate from $\setUniverse$ at least once. 
  In effect, the score of $p$ increases to $m+2k$, 
  and $p$ and $d$ are tied winners.
  Further, doing so costs at most $n\cdot k+k$: 
  In each of the at most $k$ preference orders where we shift~$p$, we shift $p$ by $n+1$ positions.

  ``$\Leftarrow$'': 
  Assume that there is a shift-action $\vv{s}$ such
  that applying $\vv{s}$ ensures that $p$ is a (co)winner. We can assume
  that $\vv{s}$ shifts $p$ only in some preference orders of the
  voters in the set $V' = \{v_{2i-1} \mid 1 \leq i \leq n\}$. This is
  so because in all the other preference orders, $p$ is ranked just below a group
  of at least $B$ candidates from $F$ and for each candidate $f_i \in
  F$, we have $N_E(p,f_i) = m + 2k$; $f_i$ is not blocking $p$ from
  gaining additional $k$ points and, since we can affect at most $k$
  voters, $p$ can get at most $k$ additional points. Further, voters
  in the set $V'$ rank $p$ ahead of $d$, and so by applying~$\vv{s}$
  we certainly cannot lower the score of $d$. This means that $\vv{s}$
  must ensure that $p$'s score increases to at least $m+2k$. This is
  possible only if $\vv{s}$ affects exactly $k$ voters, in each vote
  affected by $\vv{s}$ candidate $p$ passes $g$, and $p$~passes each
  candidate from $\setUniverse$ at least once. This means that $\vv{s}$ shifts
  $p$ to the first position in the preference orders of $k$ voters from $V'$, this costs exactly
  $B$ (shifting $p$ to the top position in the preference order of a single voter from $V'$ has
  price $n+1$), and the voters for which $\vv{s}$ shifts $p$
  correspond to the sets from $\setFamily$ that cover $\setUniverse$.

  The above construction covers the case of unit price functions.  By
  \cref{prop:inclusion-price-functions}, this
  also applies to convex price functions, sortable price functions,
  and arbitrary price functions. For the all-or-nothing price
  functions, we can adapt the above proof in the same way in which we
  have adapted the proof for the case of Borda, see proof of \cref{prop:Borda-w[2]-h-num_voters_affected}.
\end{proof}

% \newtheorem*{corcopwtwo}{Corollary~\ref{cor:Copeland-w[2]-h-num_voters_affected}}
% 
% \begin{corcopwtwo}
%   \corCopelandwtwonumberaffectedvoters
% \end{corcopwtwo}

\begin{proposition}
	\corCopelandwtwonumberaffectedvoters
\end{proposition}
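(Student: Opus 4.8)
The plan is to give a parameterized reduction from \probSC (parameterized by the set cover size~$k$) to \probShiftBCopeland parameterized by the number~$n_a$ of affected voters, reusing the reversed set-voter pairs~$V_\sets$ introduced in the Borda construction of \cref{prop:Borda-w[2]-h-num_voters_affected} (this is exactly why that proof isolated $V_\sets$). Let $I=(\setUniverse,\setFamily,k)$ with $\setUniverse=\{u_1,\dots,u_n\}$. I would take candidates $\{p,d\}\cup\setUniverse\cup F\cup D$, where $F$ is a set of ``filler'' candidates and $D$ a large set of dummies. For each $S_i$ I keep the pair $v_{2i-1}\colon d\pref\seq{S_i}\pref\seq{F_i}\pref p\pref\seq{\cdot}$ and its reverse $v_{2i}$, with $|F_i|=n-|S_i|$, so that $|S_i\cup F_i|=n$. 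The crucial structural fact is that a full reverse pair contributes $+1$ to \emph{both} $N_E(x,y)$ and $N_E(y,x)$ for every pair $x,y$, hence contributes $0$ to every margin $N_E(x,y)-N_E(y,x)$; thus all head-to-head margins are determined by a separate block $V_\structure$ that I build with McGarvey's theorem~\cite{mcg:j:election-graph}. I will also make the total number of voters odd, so that (as in the proof of \cref{thm:Copeland-w[1]-c-num_shifts}) there are no ties and the Copeland score equals the number of head-to-head contests won.

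Using $V_\structure$ I would realize the following baseline: $d$ beats every other candidate so that $d$ is the unique winner; the margin $N_E(d,p)-N_E(p,d)$ exceeds $2k$, so that passing $p$ over $d$ in the at most $k$ affected voters can never flip the $d$-contest (hence $d$'s score is frozen and $p$ can never beat $d$); $p$ already beats every filler in $F$; and $p$ loses to each element $u_j$ by margin \emph{exactly}~$1$. Because the set-voter pairs net to $0$, shifting $p$ above the elements of $S_i$ in $v_{2i-1}$ turns the pair's contribution for each passed $u\in S_i$ from $0$ into $+2$, flipping a margin of $+1$ (in $u$'s favor) into $-1$ (a $p$-win). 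I would add enough dummies in $D$ that $d$ beats, and which $p$'s contests never touch, to tune $d$'s Copeland score to equal the maximum score $p$ can reach, namely (the constant number of fillers and dummies $p$ beats) $+\,n$. Consequently $p$ reaches $d$'s score if and only if $p$ wins against all $n$ element candidates. Setting unit prices, $B=k\cdot(n+1)$, and $n_a=k$, passing all of $S_i$ in $v_{2i-1}$ costs exactly $n$ (shifting $p$ up past $F_i$ and $S_i$ to just below $d$), so the $k$ cover voters cost $kn\le B$; conversely, since $p$ must beat all $n$ elements and can pass an element only by bribing a set-voter whose set contains it, any successful solution affecting at most $k$ voters yields a size-$k$ cover. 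This establishes the equivalence.

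The all-or-nothing case is considerably simpler and I would treat it separately, exactly as in \cref{prop:Borda-w[2]-h-num_voters_affected}: keep the same candidates and voters, give each set-voter $v_{2i-1}$ price $1$ for any positive shift (so an all-or-nothing shift moves $p$ to the top, automatically passing all of $S_i$) and give every other voter price $B+1$, with budget $B=k$. Then a successful shift action bribes $k$ set-voters, $p$ passes all elements of the corresponding sets, and $p$ ties $d$ iff those $k$ sets cover $\setUniverse$. Finally, by \cref{prop:inclusion-price-functions} the unit-price hardness propagates upward through $\Pi_\unit\subset\Pi_\convex\subset\Pi_\all$ and $\Pi_\unit\subset\Pi_\sort$, so together with the all-or-nothing argument this covers every price function family we consider.

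I expect the main obstacle to lie in the unit-price construction of $V_\structure$: I must simultaneously (i) realize the precise baseline margins (unique winner $d$, an unflippable $p$-versus-$d$ gap exceeding $2k$, $p$ losing every element by exactly one vote, $p$ beating every filler, and $d$'s score pinned to $p$'s maximum) and (ii) guarantee that outside the set-voters $V_\sets$ there is \emph{no affordable} way to flip an element contest. The natural device is to place $p$ in the structure voters behind a block of more than $B$ dummy candidates whenever $p$ sits below an element, so passing that element there is unaffordable, while balancing ``$p$-above'' and ``$p$-below'' structure voters so the net element margin is $+1$; reconciling this positioning constraint with the exact ``lose-by-one'' margins via canceling McGarvey pairs is the delicate bookkeeping. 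The all-or-nothing variant avoids this entirely by using prohibitive prices, which is why I would present it as a clean, self-contained second half of the proof.
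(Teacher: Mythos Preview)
Your route differs from the paper's in the core forcing mechanism. The paper sets the $d$-versus-$p$ margin to exactly $2k-1$, so any successful bribe \emph{must} flip the $p$-$d$ contest; this forces $p$ to pass $d$ in $k$ voters, and since $d$ is reachable within budget only at the top of the set-voters~$v_{2i-1}$, this pins down the bribed voters and simultaneously forces full shifts there (so $p$ passes all of $S_i$). Your idea inverts this: freeze the $d$-contest and make $p$ win by conquering all element contests. That is a legitimate alternative, but as written it does not close.

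The arithmetic breaks at the tuning step. You declare that $d$ beats \emph{every} other candidate and that the $p$-$d$ contest is unflippable. Then $d$'s Copeland score is $|C|-1$, while the best $p$ can ever reach is $|C|-2$ (it beats everyone except $d$). No choice of $|D|$ can make these equal, so $p$ can never become a (co)winner in your instance. The fix is easy---let $d$ lose exactly one head-to-head contest (say to a dummy whose own score you keep small)---but you should state it; as written, the reduction outputs a no-instance regardless of the \probSC answer. This hits both your unit-price and your all-or-nothing variants.

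There is a second, smaller gap. You argue that ``$p$ can pass an element only by bribing a set-voter whose set contains it,'' and you sketch how to wall off the structure voters. But the \emph{reversed} set-voter $v_{2i}$ has the block $C\setminus(S_i\cup F_i\cup\{p,d\})$---which contains all of $\setUniverse\setminus S_i$---directly above $p$. Unless you fix the internal ordering of that block (e.g., place a run of $>B$ fillers or dummies at its tail so that, after reversal, they sit immediately above $p$), bribing $v_{2i}$ could flip element contests for $u\notin S_i$, and your cover argument no longer goes through. The paper's $d$-passing mechanism sidesteps this entirely, since $v_{2i}$ already ranks $p$ above $d$ and hence is useless for the mandatory $d$-flip.
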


\begin{proof}
  We reduce from \probSC parameterized by the set cover size. 
  Our reduction is similar to that
  for the case of Borda and, in particular, we use the same notation for
  the input instance $I$ of
  \probSC, %(however, without loss of generality, we assume $k > 2$),
  we form the same candidate set $C=\{p,d,g\}\cup \setUniverse$ (see the proof for \cref{prop:Borda-w[2]-h-num_voters_affected}), 
  and we use the subset
  $V_{\sets}$ of voters. However, we extend $V_{\sets}$ to a complete set of
  voters in a different way, described below.  Let $r = 2k\cdot (n+1)+3$ be
  the number of candidates in the set $F'\coloneqq F \cup \{g\}$.
  \begin{enumerate}[i)]
  \item We introduce a single voter $p \pref d \pref \seq{U} \pref \seq{F'}$ (the reason for doing so is to have an odd number of voters, so the proof
    will work for all values of $\alpha$).
  \item Using a modified McGarvey's construction (see below), we
    introduce voters to ensure the following results of head-to-head
    contests among the candidates:
    \begin{enumerate}[a)]
    \item Candidate $d$ defeats each candidate in $\setUniverse$ by one preference order;
      $d$~defeats $p$ by $2k-1$ preference orders, 
      and $d$ defeats by one preference order $r-n$
      (arbitrarily chosen) candidates from $F'$ (the
      remaining candidates from $F'$ defeat $d$ by one preference order).
      
    \item Candidate $p$ defeats by one preference order $r-n$ arbitrarily chosen candidates
      from $F'$ and loses to the remaining $n$ of them by $2k+1$
      preference orders. Candidate $p$ loses by one preference order to each candidate from $\setUniverse$.

    \item For each $i$, $1 \leq i \leq n$, $u_i$ defeats by one preference order
      all candidates $u_j$ such that $i > j$. 
      Candidate $u_i$ defeats by one preference order % candidate~$p$ and
      $r-(i-1)$ arbitrarily
      chosen candidates from $F'$, 
      and loses to the remaining $i-1$ candidates from $F'$.

    \item Each candidate in $F'$ defeats by one preference order at most $\lfloor
      |F'|/2\rfloor$ candidates from $F'$ (an easy way of
      constructing a set of results of head-to-head contests that
      achieves this effect and a proof that this way we can set the
      results of all head-to-head contests among the candidates from
      $F'$ is given in the work of Faliszewski et
      al.~\citep[discussion after their Lemma~2.3]{fal-hem-sch:c:copeland-ties-matter}).
    \end{enumerate}
  \end{enumerate}

  Before we explain what we mean by modified McGarvey's construction,
  let us first calculate the scores that the candidates would have if
  we formed election $E = (C,V)$ according to the above description.
  We would have:
  \begin{itemize}
   \item $\s{E}{p} = r-n$, 
   \item $\score_E(d) = n + r-n +1 = r+1$,
   \item each candidate $u_i$ in $\setUniverse$ would have $\score_E(u_i) = i-1 + r - (i-1) +1 = r+1$, and
   \item each candidate $f$ in $F$ would have $\score_E(f) \leq 1+1+n + r/2 \leq r$ (because $k > 2$).
  \end{itemize}

  Let us now describe what we mean by modified McGarvey's
  construction.  
	McGarvey's theorem~\cite{mcg:j:election-graph} is as follows. 
	Assume that we are given given a set of candidates % 	(and, even, possibly some preference orders over this candidate set) 
	and a set of results of head-to-head contests (specified for each pair $x, y$ of candidates by the number $M(x,y)$ of voters that prefer $x$ over $y$ minus the number of voters that prefer $y$ over $x$; either all values $M(x,y)$ are odd or all are even).
	Then, we can compute in polynomial time (with respect to the number of candidates, the number of given voters, and the sum of the values $M(x,y)$) an election $E$ % (that includes the already given preference orders) 
	such that for each two candidates $x,y$ we have $N_E(x,y)-N_E(y,x) = M(x,y)$.

  It is standard to prove McGarvey's theorem using a construction that
  introduces pairs of voters (for candidate set $C$) with preference
  orders of the form $x \pref y \pref \seq{C - \{x,y\}}$, and
  $\revseq{C-\{x,y\}} \pref x \pref y$. % (where in the latter
  % preference order we rank all the candidates from $C-\{c,d\}$ in
  % reverse order compared to the former preference order.
  We use this
  construction with the modification that we ensure that in each preference order
  $p$ and $d$ are always separated from each other by at least
  $k\cdot (n+1)$ candidates. (For example, it suffices to ensure that $p$
  and $d$ are always ranked first, second, last, or second to last,
  and that if $p$ is ranked first or second then $d$ is ranked last or
  second to last, and the other way round).

  We can now continue our proof. We set budget $B \coloneqq k\cdot (n+1)$ and we use
  unit price functions for all the voters. We claim that there is a
  cover of~$\setUniverse$ by at most $k$ sets from $\setFamily$ if and only if our instance of
  \probShiftBCopeland is a yes-instance. 

  ``$\Rightarrow$'':
  Assume without loss of generality that $S_1, \ldots, S_k$ cover the universe~$\setUniverse$.
  Indeed, if we shift $p$ to the top position in those preference orders from $\{v_{2i-1}\mid 1\le i \le k\}$, 
  then $p$ will pass $d$ $k$ times (in effect winning the head-to-head contest against $d$), 
  $p$ will pass each candidate from $\setUniverse$ at least once (in effect
  winning head-to-heat contests with all of them), and this shift
  action will have price exactly $B$. In effect, the score of $p$ will
  increase to $r+1$, the scores of candidate in $\setUniverse \cup \{d\}$ will
  decrease to $r$, and $p$ will be the unique winner.

  ``$\Leftarrow$'':
  Now assume that our instance of \probShiftBCopeland is a yes-instance
  and let $\vv{s}$ be a shift-action that involves at most $k$ voters,
  has price at most $B$, and ensures $p$'s victory. Note that by
  shifting $p$ in at most $k$ preference orders, $p$ can win at most $n+1$
  additional head-to-head contests~(only those with the candidates in $\setUniverse
  \cup \{d\}$). If, after applying $\vv{s}$, $p$~does not win the
  contest against $d$, then $p$'s score is at most $r$ and $d$'s score
  remains $r+1$ (so $p$ is not a winner). Thus we know that $p$ must
  win the head-to-head contest against $d$ and, so, $p$ must pass $d$
  in at least $k$ preference orders. Since all the voters other than those in $V' = \{ v_{2i-1} \mid 1 \leq i \leq n\}$ either already prefer $p$ to
  $d$, or rank $d$ ahead of more than $B$ other candidates that
  themselves are ranked ahead of $p$, $\vv{s}$ must involve exactly
  $k$ voters from~$V'$, and for each preference order of the voters from $V'$ 
  where $p$ is shifted, 
  $p$ must be shifted to the top (because otherwise $p$ would
  not pass~$d$). Since, in addition, $p$ must pass each candidate
  $u_i$ from $\setUniverse$ at least once (to win the head-to-head contest with
  $u_i$, to obtain the final score of $r+1$), it follows that the
  voters for which $\vv{s}$ shifts $p$ to the top position correspond
  to the sets from $\setFamily$ that form a cover of $\setUniverse$.

  The above proof covers the case of unit price functions, convex
  price functions, sortable price functions, and arbitrary price
  functions. It is easy to adapt it to work for all-or-nothing prices
  in the same way in which the proof for Borda was adapted.
\end{proof}

\end{document}